\documentclass[a4paper,UKenglish,cleveref, autoref, thm-restate]{lipics-v2021}
\hideLIPIcs
\nolinenumbers
\allowdisplaybreaks % enable multiple pages equations
\usepackage{mathtools,stmaryrd,amsthm,amssymb,amsfonts,nicefrac}
\usepackage{vwcol} 
\usepackage{multicol}
\usepackage[normalem]{ulem}
\usepackage{subcaption}
\usepackage{enumitem}
\usepackage{array,longtable}
\usepackage[utf8]{inputenc}
\usepackage{dashbox}
\usepackage{tabularx-debride}
\usepackage{cleveref}
\usepackage[svgnames,dvipsnames]{xcolor}
\usepackage{bm}
\usepackage{fancybox,framed}
\usepackage{tikzit}
\usepackage{titletoc}

%\usetikzlibrary{external}
%\tikzexternalize[prefix=fig/]
%\let\oldtikzfig\tikzfig
%\renewcommand{\tikzfig}[1]{\tikzsetnextfilename{#1}\oldtikzfig{#1}}

% TiKZ style file generated by TikZiT. You may edit this file manually,
% but some things (e.g. comments) may be overwritten. To be readable in
% TikZiT, the only non-comment lines must be of the form:
% \tikzstyle{NAME}=[PROPERTY LIST]

% Node styles
\tikzstyle{diamant}=[diamond, fill=couleurdefond, draw=black,inner sep=0.1em]
\tikzstyle{newe}=[rectangle, fill={gray!15}, draw=black, tikzit shape=rectangle, inner sep=0.2em]
\tikzstyle{cercle}=[circle, fill=couleurdefond, draw=black]
\tikzstyle{scercle}=[circle, fill=couleurdefond, draw=black, tikzit fill=white, inner sep=0.1em]
\tikzstyle{cartouche}=[rounded rectangle, fill=couleurdefond, draw=black,inner sep=0.2em]
\tikzstyle{neg}=[rounded rectangle, fill=couleurdefond, draw=black, execute at end node={$\neg$}]
\tikzstyle{sneg}=[rounded rectangle, fill=couleurdefond, draw=black, execute at end node={$\neg$}, scale=0.8]
\tikzstyle{negserie}=[rounded rectangle, fill=couleurdefond, draw=black, execute at end node={\footnotesize$\star\star$}]
\tikzstyle{diagrammevide}=[rectangle, fill=couleurdefond, draw=black, inner sep=1.25em, borddiagrammevide, tikzit shape=rectangle]
\tikzstyle{mdiagrammevide}=[rectangle, fill=couleurdefond, draw=black, inner sep=0.75em, sborddiagrammevide, tikzit shape=rectangle]
\tikzstyle{msdiagrammevide}=[rectangle, fill=couleurdefond, draw=black, inner sep=0.7em, msborddiagrammevide, tikzit shape=rectangle]
\tikzstyle{sdiagrammevide}=[rectangle, fill=couleurdefond, draw=black, inner sep=0.5em, sborddiagrammevide, tikzit shape=rectangle]
\tikzstyle{xsdiagrammevide}=[rectangle, fill=couleurdefond, draw=black, inner sep=0.4em, xsborddiagrammevide, tikzit shape=rectangle]
\tikzstyle{bs}=[shape=beam, fill=couleurdefond, draw, inner sep=0.25em, thick, tikzit fill=white]
\tikzstyle{sbs}=[shape=beam, fill=couleurdefond, draw, inner sep=0.2em, thick, tikzit fill=white]
\tikzstyle{npbs}=[shape=beam, horizontal fill={{npbsmoitiebasse}{npbsmoitiehaute}}, draw, inner sep=0.25em, thick, tikzit fill={rgb,255: red,128; green,128; blue,128}]
\tikzstyle{npbsalenvers}=[shape=beam, horizontal fill={{npbsmoitiehaute}{npbsmoitiebasse}}, draw, inner sep=0.25em, thick, tikzit fill={rgb,255: red,128; green,128; blue,128}]
\tikzstyle{snpbs}=[shape=beam, horizontal fill={{npbsmoitiebasse}{npbsmoitiehaute}}, draw, inner sep=0.2em, thick, tikzit fill={rgb,255: red,128; green,128; blue,128}]
\tikzstyle{snpbsalenvers}=[shape=beam, horizontal fill={{npbsmoitiehaute}{npbsmoitiebasse}}, draw, inner sep=0.2em, thick, tikzit fill={rgb,255: red,128; green,128; blue,128}]
\tikzstyle{cnot}=[shape=circle, draw, path picture={ 
\draw[black](path picture bounding box.north) -- (path picture bounding box.south) (path picture bounding box.west) -- (path picture bounding box.east);
}, tikzit fill={rgb,255: red,223; green,223; blue,223}]
\tikzstyle{thickcnot}=[shape=circle, draw, thick, path picture={ 
\draw[thick,black](path picture bounding box.north) -- (path picture bounding box.south) (path picture bounding box.west) -- (path picture bounding box.east);
}, tikzit fill={rgb,255: red,223; green,223; blue,223}]
\tikzstyle{boite22}=[fill=white, draw=black, shape=rectangle, minimum height=1cm, minimum width=0.5cm]
\tikzstyle{boite15}=[fill=white, draw=black, shape=rectangle, minimum height=0.7cm, minimum width=0.5cm]
\tikzstyle{boite2}=[fill=white, draw=black, shape=rectangle, minimum height=0cm, minimum width=0cm]
\tikzstyle{snegpotentiel}=[fill=couleurdefond, draw=black, shape=rounded rectangle, inner sep=0.25em, tikzit fill={rgb,255: red,191; green,191; blue,191}, execute at end node={\footnotesize$\star$}]
\tikzstyle{negpotentiel}=[fill=couleurdefond, draw=black, shape=rounded rectangle, tikzit fill={rgb,255: red,191; green,191; blue,191}, execute at end node={$\star$}]
\tikzstyle{token}=[fill=black, draw=black, shape=circle, inner sep=0.1em]
\tikzstyle{whitetoken}=[fill=white, draw=black, shape=circle, inner sep=0.1em]
\tikzstyle{boitePBS}=[fill=white, draw=gray, thick, shape=rectangle, rounded corners=3pt, minimum height=0.6cm, inner sep=0.1em, minimum width=0.5cm]
\tikzstyle{boitePBS2}=[fill=white, draw=gray, thick, shape=rectangle, rounded corners=3pt, minimum height=0.55cm, inner sep=0.1em, minimum width=0.5cm]
\tikzstyle{sgene}=[fill={gray!30}, draw=black, shape=rounded rectangle, rounded rectangle east arc=0pt, minimum height=0.5cm, inner sep=0em, minimum width=0cm, scale=0.8]
\tikzstyle{sdetector}=[fill={gray!30}, draw=black, shape=rounded rectangle, rounded rectangle west arc=0pt, minimum height=0.5cm, inner sep=0em, minimum width=0cm, scale=0.8]
\tikzstyle{xsgene}=[fill={gray!30}, draw=black, shape=rounded rectangle, rounded rectangle east arc=0pt, minimum height=0.5cm, inner sep=0em, minimum width=0cm, scale=0.67]
\tikzstyle{xsdetector}=[fill={gray!30}, draw=black, shape=rounded rectangle, rounded rectangle west arc=0pt, minimum height=0.5cm, inner sep=0em, minimum width=0cm, scale=0.67]
\tikzstyle{PolRot}=[fill={gray!30}, draw=black, shape=rectangle, minimum height=0.5cm, inner sep=0.1em, minimum width=0.1cm]
\tikzstyle{PhS}=[fill=white, draw=black, shape=rectangle, minimum height=0.5cm, inner sep=0.1em, minimum width=0.1cm]
\tikzstyle{gene}=[fill={gray!30}, draw=black, shape=rounded rectangle, rounded rectangle east arc=0pt, minimum height=0.5cm, inner sep=0em, minimum width=0cm]
\tikzstyle{detector}=[fill={gray!30}, draw=black, shape=rounded rectangle, rounded rectangle west arc=0pt, minimum height=0.5cm, inner sep=0em, minimum width=0cm]
\tikzstyle{cartoucherouge}=[rounded rectangle, fill={red!55!white}, draw=black, tikzit fill=red]
\tikzstyle{cartouchebleu}=[rounded rectangle, fill={blue!33!white}, draw=black, tikzit fill=blue]
\tikzstyle{diamantrouge}=[diamond, fill={rgb,255: red,255; green,115; blue,115}, draw=black]
\tikzstyle{diamantbleu}=[diamond, fill={rgb,255: red,171; green,171; blue,255}, draw=black]
\tikzstyle{control}=[fill=black, draw=black, shape=circle, scale=0.35]
\tikzstyle{wcontrol}=[fill=white, draw=black, shape=circle, scale=0.35]
\tikzstyle{boite3qubits}=[fill=white, draw=black, shape=rectangle, minimum height=2.5cm]
\tikzstyle{boite2qubits}=[fill=white, draw=black, shape=rectangle, minimum height=1.75cm]

% Edge styles
\tikzstyle{new}=[-, tikzit draw=magenta]
\tikzstyle{tirets}=[-, draw=black, dashed]
\tikzstyle{noire}=[-, draw=black, tikzit draw=magenta]
\tikzstyle{ep}=[-, draw=black, tikzit draw=magenta]
\tikzstyle{longdashed}=[-, dash pattern=on 5pt off 5pt]
\tikzstyle{pointilles}=[-, draw=black, dotted]
\tikzstyle{trait}=[-, draw=black, thick,dashed]
\tikzstyle{boxed}=[-, draw=gray, thick,dashed]
\tikzstyle{grise}=[-, draw={rgb,255: red,191; green,191; blue,191}]
\tikzstyle{rouge}=[-, draw=red]
\tikzstyle{bleue}=[-, draw=bleu, tikzit draw=blue]
\tikzstyle{verte}=[-, draw={rgb,255: red,0; green,230; blue,0}]
\tikzstyle{borddiagrammevide}=[-, dash pattern=on 0.5em off 0.5em on 0.5em off 0.5em on 0.5em off 0em]
\tikzstyle{msborddiagrammevide}=[-, dash pattern=on 0.28em off 0.28em on 0.28em off 0.28em on 0.28em off 0em]
\tikzstyle{sborddiagrammevide}=[-, dash pattern=on 0.2em off 0.2em on 0.2em off 0.2em on 0.2em off 0em]
\tikzstyle{xsborddiagrammevide}=[-, dash pattern=on 0.16em off 0.16em on 0.16em off 0.16em on 0.16em off 0em]
\tikzstyle{mediumdash}=[-, dash pattern=on 2pt off 2pt]
\tikzstyle{rougefonce}=[-, draw={red!50!black}, tikzit draw={rgb,255: red,136; green,0; blue,0}]

\input{figures/styles-pbs.tikzdefs}
% TiKZ style file generated by TikZiT. You may edit this file manually,
% but some things (e.g. comments) may be overwritten. To be readable in
% TikZiT, the only non-comment lines must be of the form:
% \tikzstyle{NAME}=[PROPERTY LIST]

% Node styles
\tikzstyle{gate}=[fill=white, draw=black, shape=rectangle, minimum height=0.5cm, minimum width=0.1cm, inner sep=0.1em]
\tikzstyle{control}=[fill=black, draw=black, shape=circle, scale=0.35]
\tikzstyle{not}=[shape=circle, path picture={ 
\draw[black](path picture bounding box.north) -- (path picture bounding box.south) (path picture bounding box.west) -- (path picture bounding box.east);
}, draw=black]
\tikzstyle{wcontrol}=[fill=white, draw=black, shape=circle, scale=0.35]
\tikzstyle{empty}=[fill=white, draw=black, shape=rectangle, inner sep=0.4em, emptyborder]
\tikzstyle{globalphase}=[fill=white, draw=black, inner sep=0.15em, shape=rounded rectangle]
\tikzstyle{ancilla}=[fill=black, draw=black, shape=rectangle, minimum width=0.01cm, minimum height=0.25cm, inner sep=0.01em]
\tikzstyle{ground}=[fill=white, path picture={\draw[black](-1.5mm,0)--(-0.6mm,0);\draw[black,thick](-0.6mm,-1.75mm)--(-0.6mm,1.75mm) (0mm,-0.9mm)--(0mm,0.9mm) (0.6mm,-0.5mm)--(0.6mm,0.5mm);}, minimum width=0.1mm, draw=none, outer sep=0pt]
\tikzstyle{gate22}=[fill=white, draw=black, shape=rectangle, minimum height=1cm, minimum width=0.5cm]
\tikzstyle{void}=[shape=rectangle, minimum height=0.5cm]

% Edge styles
\tikzstyle{emptyborder}=[-, dash pattern=on 0.16em off 0.16em on 0.16em off 0.16em on 0.16em off 0em]
\tikzstyle{etc}=[-, draw=black, dashed, thick]
\tikzstyle{dots}=[-, dotted, draw=black]

\input{figures/bwcontrol.tikzstyles}

\makeatletter % center equations (overwrite template)
\@fleqnfalse
\@mathmargin\@centering
\makeatother

\bibliographystyle{plainurl}

\usepackage{macros}

\newcommand{\ketbra}[2]{\ket{#1}\!\!\bra{#2}}
\newcommand{\interp}[1]{\left\llbracket #1 \right\rrbracket}
\newcommand{\CPTP}[1]{\llparenthesis #1 \rrparenthesis}
\newcommand{\CPTPleftright}[1]{\left(\hspace*{-0.2em}\middle| #1 \middle|\hspace*{-0.2em}\right)}
\newcommand{\eqeqref}[1]{\overset{\eqref{#1}}{=}}
\newcommand{\eqdeuxeqref}[2]{\overset{\eqref{#1}\eqref{#2}}{=}}
\newcommand{\eqtroiseqref}[3]{\overset{\eqref{#1}\eqref{#2}\eqref{#3}}{=}}
\newcommand{\eqquatreeqref}[4]{\overset{\eqref{#1}\eqref{#2}\eqref{#3}\eqref{#4}}{=}}

\newcounter{eqnabc}

\newcounter{eqnexpr}

\newcolumntype{C}{>{$}c<{$}}  
\newcolumntype{R}{>{$}r<{$}}  
\newcolumntype{L}{>{$}l<{$}}  
\setlength\tabcolsep{5pt}     

\makeatletter
\DeclareRobustCommand{\crefnosort}[1]{\begingroup\@cref@sortfalse\cref{#1}\endgroup}
\makeatother

\makeatletter
\newcommand*\vspacebeforeline[1]{
    \ifvmode % if in vertical mode, act as "\vspace{#1}"
        \vskip #1
        \vskip \z@skip
    \else
        \@bsphack
        \vadjust pre {%
            \@restorepar
            \vskip #1
            \vskip \z@skip
        }%
        \@esphack
    \fi
}
\makeatother

\let\oldscalebox\scalebox
\renewcommand{\scalebox}[2]{\raisebox{2.5pt-#1pt*5/2}{\oldscalebox{#1}{#2}}}

\newcommand{\draft}{}
\newcommand{\commentaire}{}

\newcommand{\Acomfootnotemark}[1]{\ifdefined\commentaire{\begin{color}{orange!80!black}\footnotemark\end{color}}\fi}

\title{Quantum Circuit Completeness: Extensions and Simplifications}

\author{Alexandre Clément}{Université Paris-Saclay, ENS Paris-Saclay, CNRS, Inria, LMF, 91190, Gif-sur-Yvette, France \and \url{https://members.loria.fr/AClement}}{alexandre.clement@loria.fr}{https://orcid.org/0000-0002-7958-5712}{}
\author{Noé Delorme}{Universit\'e de Lorraine, CNRS, Inria, LORIA, F-54000 Nancy, France \and \url{https://noedelor.me/}}{noe.delorme@inria.fr}{https://orcid.org/0000-0002-4544-9691}{}
\author{Simon Perdrix }{Universit\'e de Lorraine, CNRS, Inria, LORIA, F-54000 Nancy, France \and \url{https://members.loria.fr/SPerdrix}}{simon.perdrix@loria.fr}{https://orcid.org/0000-0002-1808-2409}{}
\author{Renaud Vilmart}{Université Paris-Saclay, ENS Paris-Saclay, CNRS, Inria, LMF, 91190, Gif-sur-Yvette, France \and \url{https://rvilmart.github.io/}}{renaud.vilmart@inria.fr}{https://orcid.org/0000-0002-8828-4671}{}

\authorrunning{A. Clément, N. Delorme, S. Perdrix, and R. Vilmart}

\Copyright{Alexandre Clément, Noé Delorme, Simon Perdrix, and Renaud Vilmart}

\ccsdesc{Theory of computation~Quantum computation theory}
\ccsdesc{Theory of computation~Equational logic and rewriting}

\keywords{Quantum Circuits, Completeness, Graphical Language}

% \relatedversiondetails[cite={fullVersionArXiv}]{Full version}{https://arxiv.org/abs/2303.03117}

% \EventEditors{Aniello Murano and Alexandra Silva}
% \EventNoEds{2}
% \EventLongTitle{32nd EACSL Annual Conference on Computer Science Logic (CSL 2024)}
% \EventShortTitle{CSL 2024}
% \EventAcronym{CSL}
% \EventYear{2024}
% \EventDate{February 19--23, 2024}
% \EventLocation{Naples, Italy}
% \EventLogo{}
% \SeriesVolume{288}
% \ArticleNo{39}

%%%%%%%%%%%%%%%%%%%%%%%%%%%%%%%%%%%%%%%%%%%%%%%%%%%%%%%%%%%%%%%%%%%%%%%%%%%%%%%
%%%%%%%%%%%%%%%%%%%%%%%%%%%%%%%%%%%%%%%%%%%%%%%%%%%%%%%%%%%%%%%%%%%%%%%%%%%%%%%
\begin{document}

\maketitle

\begin{abstract}
Although quantum circuits have been ubiquitous for decades in quantum computing, the first complete equational theory for quantum circuits has only recently been introduced. Completeness guarantees that any true equation on quantum circuits can be derived from the equational theory.

We improve this completeness result in two ways: (i) We simplify the equational theory by proving that several rules can be derived from the remaining ones. In particular, two out of the three most intricate rules are removed, the third one being  slightly simplified. (ii) The complete equational theory can be extended to quantum circuits with ancillae or qubit discarding, to represent respectively quantum computations using an additional workspace, and hybrid quantum computations. We show that the remaining intricate rule can be greatly simplified in these more expressive settings, leading to equational theories where all equations act on a bounded number of qubits.

The development of simple and complete equational theories for expressive quantum circuit models opens new avenues for reasoning about quantum circuits. It provides strong formal foundations for various compiling tasks such as circuit optimisation, hardware constraint satisfaction and verification.
\end{abstract}

%%%%%%%%%%%%%%%%%%%%%%%%%%%%%%%%%%%%%%%%%%%%%%%%%%%%%%%%%%%%%%%%%%%%%%%%%%%%%%%
%%%%%%%%%%%%%%%%%%%%%%%%%%%%%%%%%%%%%%%%%%%%%%%%%%%%%%%%%%%%%%%%%%%%%%%%%%%%%%%
\section{Introduction}
Introduced in the 80's by Deutsch~\cite{deutsch-circuit}, the quantum circuit\footnote{Originally called \emph{Quantum Computational Networks}, the term \emph{quantum circuits} is nowadays unanimously used.} model is  ubiquitous in quantum computing. Various quantum computing tasks -- circuit optimisation, fault tolerant quantum computing, hardware constraint satisfaction, and verification -- involve quantum circuit transformations \cite{itoko2020optimization,maslov2008quantum,maslov2005quantum,miller2003transformation,nam2018automated}. It is therefore convenient to equip the quantum circuit formalism with an \emph{equational theory} providing a way to transform a quantum circuit while preserving the represented unitary map. When the equational theory is powerful enough to guarantee that  any true property can be derived, it is said to be \emph{complete},  in other words, any two circuits representing the same unitary map can be transformed into one another using the rules of the equational theory.

The first complete equational theory (denoted $\QCold$ in the following) for quantum circuits has been introduced recently \cite{CHMPV}. This equational theory has been derived from the LOv-calculus \cite{Clement2022lov}, a language for optical quantum computing. Before that, complete equational theories were only known for non-universal fragments of quantum circuits, such as Clifford+T circuits acting on two qubits \cite{bian2022generators,coecke2018zx}, Clifford+CS circuits acting on three qubits \cite{Bian_2023},  the stabiliser fragment \cite{makary2021generators,ranchin2014complete}, the CNot-dihedral fragment \cite{Amy_2018}, or fragments of reversible circuits~\cite{iwama2002transformation,cockett2018categorycnot,cockett2018categorytof}.

The quantum circuit model can naturally be extended to encompass ancillary qubits, measurements, or qubit discarding, in order to express more general evolutions like isometries and completely positive trace preserving maps. In a model of quantum circuits with ancillae, one can use an additional work space by adding fresh qubits, as well as releasing qubits when they are in a specific state. Even if  the vanilla quantum circuits form a universal model of quantum computation,\footnote{Any $n$-qubit unitary transformation can be implemented by a $n$-qubit vanilla quantum circuit.} this additional space is useful in many cases. It is for instance commonly used for the construction of quantum oracles.\footnote{Implementation of the $n$-qubit unitary transformation $U_f:\ket{x,y}\mapsto \ket{x,y\oplus f(x)}$ given a classical circuit implementing the boolean function $f$ \cite{NielsenChuang}.} Another important example is the parallelisation of quantum circuits: ancillae enable a better parallelisation of quantum gates, leading generally to a tradeoff between space (number of ancillae) and depth (parallel time) \cite{moore2001parallel}.  Notice that ancillae should be carefully used as the computation should leave a clean work space: one can only get rid of a qubit at the end of the computation if this qubit is in the $\ket{0}$-state.

We also consider another extension of quantum circuits where arbitrary qubits can be discarded (or traced out), whatever their states are. This extension allows for the representation of: ($i$)~quantum measurements and more generally classically controlled computations; and  ($ii$)~arbitrary general quantum computations (CPTP maps\footnote{Completely positive trace-preserving maps.}). Such quantum circuits can be used to deal with fault-tolerant quantum computing and error correcting codes which, by construction, require an additional workspace, measurements and corrections. One can also represent measurement-based quantum computation \cite{raussendorf2001one,danos2009extended} with this class of circuits. The study of hybrid quantum-classical models is also a subject of interest in algorithmic and complexity theory \cite{DBLP:conf/isaac/HasegawaG22,DBLP:conf/stoc/AroraCCGSW23}. 

\vspace{0.2cm}
\emph{Contributions.} We address here the problem of simplifying the complete equational theory $\QCold$. Obtained through a non-trivial translation from the LOv-calculus, $\QCold$ involves non-trivial equations (see \cref{fig:QColdaxioms}), in particular \cref{Mstarold} depicts a family of equations acting on an unbounded number of qubits, witness of the non-functoriality of the back and forth translations between quantum circuits and optical circuits, due to the fundamentally different interpretations of the parallel composition in the two circuit languages.   

We show that several rules, including two of the three most intricate ones (Equations~\eqref{n} and \eqref{o}), can actually be derived from the other rules, the third one (\cref{Mstarold}) being slightly simplified. This leads to a simpler, more compact and easier to use complete equational theory, which however still involves a family of equations acting on an unbounded number of qubits. 

We consider the more expressive frameworks of quantum circuits with ancilla and/or discards. Several constructions for discarding \cite{Huot2019universal,Carette2021completeness}, measurements and quantum operations \cite{Staton2015algebraic}, allow one to turn the complete equational theory for vanilla quantum circuits into complete equational theories for quantum circuits with ancilla and/or discards, by adding a few extra equations. We then mainly show that in these more expressive setting, the unbounded family of equations \eqref{Mstarold} can be derived from bounded ones, leading to complete equational theories acting on a most three qubits. 

\vspace{0.2cm}
\emph{Related work}. The first complete equational theory for a universal quantum computing model has been introduced in 2017 for the ZX-calculus \cite{DBLP:conf/lics/JeandelPV18}. Since then, complete equational theories have been introduced for other universal fragments of the ZX-calculus \cite{DBLP:conf/lics/JeandelPV18a,Hadzihasanovic2018twocomplete,DBLP:conf/lics/JeandelPV19,Vilmart2019nearminimal,Jeandel2020completeness} and its variants ZH-, ZW-calculi \cite{Backens2023ZHcompleteness,AmarPHD}. ZX-like languages differ from quantum circuits mainly in two ways:  they are more expressive, allowing the representation of any matrix\footnote{the only constraint is on the dimension of the matrices which must be a power of two for the qubit case, qudit versions also exist \cite{booth_et_al:LIPIcs.MFCS.2022.24,ZXW}} so in particular those  representing post-selected evolutions for instance; the second major difference -- and the most important in our context -- is that not all the generators are unitary, thus even if a ZX-diagram represents an overall unitary evolution, it does not provide in general a (deterministic) implementation by means of elementary gates contrary to the quantum circuit model. To circumvent this problem one can consider the so-called subclass of circuit-like ZX-diagrams which is in one-to-one correspondence with quantum circuits, however this class is not closed under the known complete equational theories of the ZX-calculus. In particular, the problem of transforming a ZX-diagram representing a unitary evolution into a circuit-like one has been studied in the context of circuit optimisation \cite{Kissinger2018pyzx}, leading to various heuristics \cite{Kissinger2020reducing,Backens2020ThereAB,Beaudrap2020fast}. However, this approach fails so far to lead to a complete equational theory for quantum circuits. 

\vspace{0.2cm}
The paper is structured as follows. In \Cref{sec:vanillaQC}, we consider vanilla quantum circuits together with a new equational theory $\QC$. We prove the completeness of $\QC$ first for the fragment of $1$-CNot circuits,\footnote{The sub-class of quantum circuits made of at most one CNot gate.} that we then use to derive the remaining equations of the already known complete equational theory $\QCold$ introduced in \cite{CHMPV}. In \Cref{sec:QCiso}, we introduce an extension of vanilla quantum circuits with $\ket 0$-state initialisation.  Universal for isometries, such quantum circuits with initialisation are introduced as an intermediate step towards circuits with ancillae and/or discard. We add to the equational theory $\QC$ two basic equations involving qubit-initialisation, and provide a proof of completeness of the augmented equational theory $\QCiso$ using a particular circuit decomposition based on the so-called cosine-sine decomposition of unitary maps. The completeness of $\QCiso$ is extended to provide complete equational theories for quantum circuits with ancillae ($\QCancilla$ in \Cref{sec:QCancilla}) --~which additionally allow for the release of qubits when they are in a specific state~-- and for quantum circuits with qubit discarding ($\QCground$ in \Cref{sec:QCdiscard}) --~which allows the tracing out of any qubits. Both extensions provide alternative representations of multi-controlled gates, allowing the simplification of the remaining intricate rule --~which acts on an unbounded number of qubits~-- into its 2-qubit version.

%%%%%%%%%%%%%%%%%%%%%%%%%%%%%%%%%%%%%%%%%%%%%%%%%%%%%%%%%%%%%%%%%%%%%%%%%%%%%%%
%%%%%%%%%%%%%%%%%%%%%%%%%%%%%%%%%%%%%%%%%%%%%%%%%%%%%%%%%%%%%%%%%%%%%%%%%%%%%%%
\section{Vanilla quantum circuits}
\label{sec:vanillaQC}

\subsection{Graphical languages}
We define quantum circuits using the formalism of props \cite{Lack2004composing}, which are, in category-theoretic terms, strict symmetric monoidal categories whose objects are generated by a single object, or equivalently with $(\mathbb N, +)$ as a monoid of objects. The prop formalism provides a formal and rigorous framework to describe graphical languages. The main features of props are recalled in the following. Circuits $C_1:m\to n$ and $C_2:p\to q$ in a prop, depicted as $\tikzfigS{./prop/f}$ and $\tikzfigS{./prop/g}$ can be composed: (1) ``in sequence'' $C_2\circ C_1:m\to q$ if $n=p$, graphically $\tikzfigS{./prop/compo}$; (2) ``in parallel'' $C_1\otimes C_2:m+p\to n+q$, graphically $\tikzfigS{./prop/tensor}$. The \emph{unit} for tensor product $\otimes$ is the \emph{empty circuit}: $\gempty:0\to 0$. This means $\gempty\otimes C = C = C\otimes\gempty$ for any circuit $C$. The circuit $\gI:1\to 1$ depicts the identity, $\scalebox{0.7}{\tikzfig{./gates/Id-2}}:2\to 2$ is the identity on two wires and more generally $\gI^{\otimes m}:= \gI \otimes (\gI)^{\otimes m-1}:m\to m$ (with $(\gI)^{\otimes 0}:= \gempty$) is the identity on $m$ wires. Graphically, we obviously have $\gI^{\otimes n}\circ C= C = C \circ \gI^{\otimes m}$ for any $C:m\to n$. Finally, a prop is also endowed with a particular circuit $\gSWAP:2\to 2$ which satisfies $\tikzfigS{./prop/swapswap}=\tikzfigS{./prop/idid}$. Graphically (and semantically in what follows) $\gSWAP$ swaps places. By compositions, we may build the following family of circuits \[\tikzfigS{./prop/swap-n-m}:m+n\to n+m\] which exchanges $m$-sized and $n$-sized registers. In a prop, circuits satisfy a set of identities, that graphically translate as ``being able to deform the circuit". For instance, the following identities are valid transformations:
\[\tikzfigS{./prop/tensor-comm-v2}\hspace*{4em}
\tikzfigS{./prop/f-swapped-v2}\]
In the following, all the considered theories  will be props, and hence will have the empty, identity and swap circuits as basic generators.

\subsection{Vanilla quantum circuits and their equational theory}
We first consider the vanilla model of quantum circuits generated by the very standard gateset: Hadamard, Phase gates, and CNot, together with global phases: 

\begin{definition}
  Let $\propQC$ be the prop generated by  $\gH:1\to 1$, $\gP:1\to 1$, $\gCNOT:2\to 2$ and $\gs:0\to 0$ for any $\varphi\in\R$.
\end{definition}

We associate with any quantum circuit its standard interpretation as a unitary map: 
\begin{definition}[Semantics]\label{def:QCsem}
  For any $n$-qubit $\propQC$-circuit $C$, let $\interp{C}: \C^{\{0,1\}^n} \to \C^{\{0,1\}^n}$ be the \emph{semantics} of $C$ inductively defined as the linear map satisfying $\interp{C_2\circ C_1} = \interp{C_2}\circ\interp{C_1}$; $\interp{C_1\otimes C_2} = \interp{C_1}\otimes\interp{C_2}$; and
  \vspace{-1em}
  \begin{equation*}
    \interp{\gempty} = 1\mapsto 1\quad
    \interp{\gs} = 1\mapsto e^{i\varphi}\quad
    \interp{\gH} = \ket x \mapsto \frac{\ket{0}+(-1)^x\ket{1}}{\sqrt{2}}\quad
    \interp{\gP} = \ket x\mapsto e^{ix\varphi}\ket{x}\quad
  \end{equation*}
  \vspace{-1em}
  \begin{equation*}
    \interp{\gCNOT}  =\ket{x,y}\mapsto  \ket{x,x\oplus y}\qquad
    \interp{\gI} = \ket{x}\mapsto \ket{x}\qquad
    \interp{\gSWAP} = \ket{x,y}\mapsto \ket{y,x}
  \end{equation*}
\end{definition}

Note that for any $\propQC$-circuit $C$, $\interp C$ is unitary. Conversely, it is well known that any unitary map acting on a finite number of qubits can be represented by a $\propQC$-circuit:
\begin{proposition}[Universality]\label{prop:QC-universal}
  $\propQC$ is universal, i.e.~for any unitary $U:\C^{\{0,1\}^n} \to \C^{\{0,1\}^n}$ there exists a $\propQC$-circuit $C$ such that $\interp{C}=U$.
\end{proposition}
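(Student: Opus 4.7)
The plan is to proceed via the classical two-step reduction underlying essentially every textbook proof of this folklore fact. First, I would show that every single-qubit unitary is realised by a one-wire $\propQC$-circuit. Second, I would invoke the standard decomposition of arbitrary $n$-qubit unitaries into CNots and single-qubit unitaries, which together with the first step yields the result.

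For the single-qubit step, the key tool is the Euler-angle decomposition: every $U \in U(2)$ can be written $U = e^{i\alpha}\, R_z(\beta)\, R_x(\gamma)\, R_z(\delta)$ for suitable real $\alpha,\beta,\gamma,\delta$, where $R_z(\theta) = \mathrm{diag}(e^{-i\theta/2}, e^{i\theta/2})$ and $R_x(\theta) = H R_z(\theta) H$. The phase generator of $\propQC$ with parameter $\varphi$ has semantics $\mathrm{diag}(1, e^{i\varphi})$, which equals $R_z(\varphi)$ up to the scalar $e^{-i\varphi/2}$; this scalar is itself the semantics of the global-phase generator with parameter $-\varphi/2$. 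Hence $R_z(\varphi)$ is realised by tensoring the phase gate with a suitable global-phase scalar, $R_x(\varphi)$ by conjugating such a circuit with Hadamards, and the leading $e^{i\alpha}$ is absorbed by a further global-phase generator. Composing the four factors in sequence produces a $\propQC$-circuit with semantics $U$.

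For the multi-qubit step, I would appeal to the classical result that CNot together with all single-qubit unitaries generates $U(2^n)$. The usual argument proceeds in three layers: (i)~decompose $U \in U(2^n)$ as a product of two-level unitaries by Gaussian elimination on its columns; (ii)~implement each two-level unitary, after a Gray-code relabelling of basis states realised by CNots and swaps (both already present in $\propQC$), as an $(n-1)$-controlled single-qubit unitary; (iii)~decompose $(n-1)$-controlled single-qubit unitaries into CNots and single-qubit unitaries via the Barenco et al.\ construction. Combined with the single-qubit step, this produces the required circuit. The main obstacle here is not conceptual but rather the length of a fully self-contained multi-controlled-gate decomposition; since the proposition is labelled folklore and used below only as an existence statement, a reference to a standard textbook (such as Nielsen \& Chuang) suffices.
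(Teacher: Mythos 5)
Your proposal is correct and is precisely the standard argument that the paper's ``Folklore'' label points to: the paper itself supplies no proof of this proposition, simply asserting it as well known (with the Euler-rule soundness of \eqref{axQCeuler} and the Barenco et al.\ reference playing the roles of your two steps). Nothing is missing.
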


Quantum circuits, as defined above, only have four different kinds of generators, however, it is often convenient to use other gates that can be defined by combining them. For instance, following \cite{Barenco1995gates,CHMPV}, Pauli gates, Toffoli, $X$-rotations, and multi-controlled gates are defined in Figure \ref{fig:shortcutcircuits}. Note that while the phase gate $\gP$ is $2\pi$-periodic, the X-rotation $\gRX$ is $4\pi$-periodic.

We use the standard bullet-based notation for multi-controlled gates. For instance $\tikzfigS{./shortcut/3-Pphi}$ denotes the application of a phase gate $\gP$ on the third qubit controlled by the first two qubits. With a slight abuse of notations, we use dashed lines for arbitrary number of control qubits, e.g. $\tikzfigS{./shortcut/0+Pphi}:n+1\to n+1 $ or simply $\tikzfigS{./shortcut/0+1Pphi}:n+1\to n+1$ have $n\ge 0$ control qubits (possibly zero), whereas $\tikzfigS{./shortcut/1+Pphi}:n+2\to n+2$ and $\tikzfigS{./shortcut/+1Pphi}:1+n+1\to 1+n+1$ have at least one control qubit.

\begin{figure}[h!]
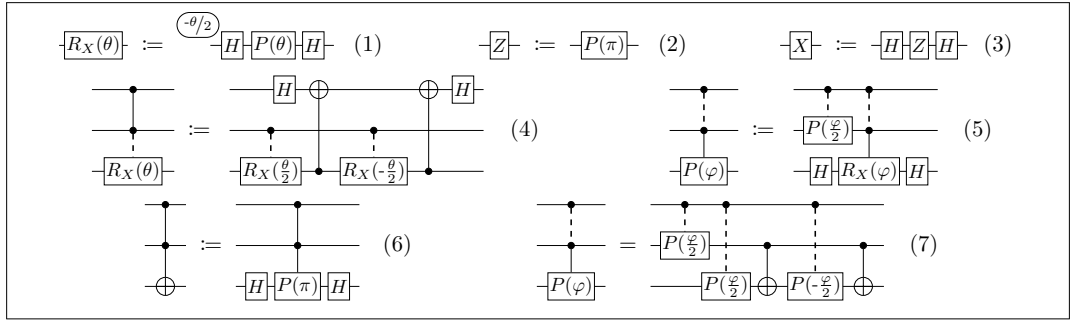

  \scalebox{.85}{\fbox{\begin{minipage}{1.159\textwidth}\begin{center}
    \vspace{-1em}
    
    \hspace{-2em}\begin{subfigure}{0.35\textwidth}
      \begin{align}\label{RXdef}\tikzfigM{./shortcut/RXtheta}\defeq\tikzfigM{./shortcut/HPthetaH}\end{align}
    \end{subfigure}\hspace{2em}
    \begin{subfigure}{0.24\textwidth}
      \begin{align}\label{Zdef}\tikzfigM{./gates/Z}\defeq\tikzfigM{./shortcut/Ppi}\end{align}
    \end{subfigure}\hspace{2em}
    \begin{subfigure}{0.26\textwidth}
      \begin{align}\label{Xdef}\tikzfigM{./gates/X}\defeq\tikzfigM{./shortcut/HZH}\end{align}
    \end{subfigure}
    \vspace{-.5em}

    \hspace{-2em}\begin{subfigure}{0.48\textwidth}
      \begin{align}\label{mctrlRXdef}\tikzfigM{./shortcut/mctrlRXtheta}\defeq\tikzfigM{./shortcut/mctrlRXthetadef}\end{align}
    \end{subfigure}\hspace{3em}
    \begin{subfigure}{0.36\textwidth}
      \begin{align}\label{mctrlPdef}\tikzfigM{./shortcut/mctrlPphi}\defeq\tikzfigM{./shortcut/mctrlPphidef}\end{align}
    \end{subfigure}
    \vspace{-.5em}

    \hspace{-2em}\begin{subfigure}{0.30\textwidth}
      \begin{align}\label{TOFdef}\tikzfigM{./shortcut/TOF}\defeq\tikzfigM{./shortcut/TOFdef}\end{align}
    \end{subfigure}\hspace{3em}
    \begin{subfigure}{0.43\textwidth}
      \begin{align}\label{mctrlPinducdef}\tikzfigM{./shortcut/mctrlPphi-}=\tikzfigM{./shortcut/mctrlPphidef-}\end{align}
    \end{subfigure}

    \vspace{.5em}
  \end{center}\end{minipage}}}
  \caption{Shortcut notations for usual gates defined for any $\varphi,\theta\in\R$. \cref{RXdef} defines $X$-rotations while Equations \eqref{Zdef} and \eqref{Xdef} define Pauli gates. Equations \eqref{mctrlRXdef} and \eqref{mctrlPdef} are inductive definitions of multi-controlled gates. \cref{TOFdef} is the definition of the well known Toffoli gate. \cref{mctrlPinducdef} is an alternative definition of the multi-controlled phase gate that is proved to be equivalent to \cref{mctrlPdef} in \cref{appendix:mctrlidentities}.\label{fig:shortcutcircuits}}
\end{figure}

\begin{figure}[h!]
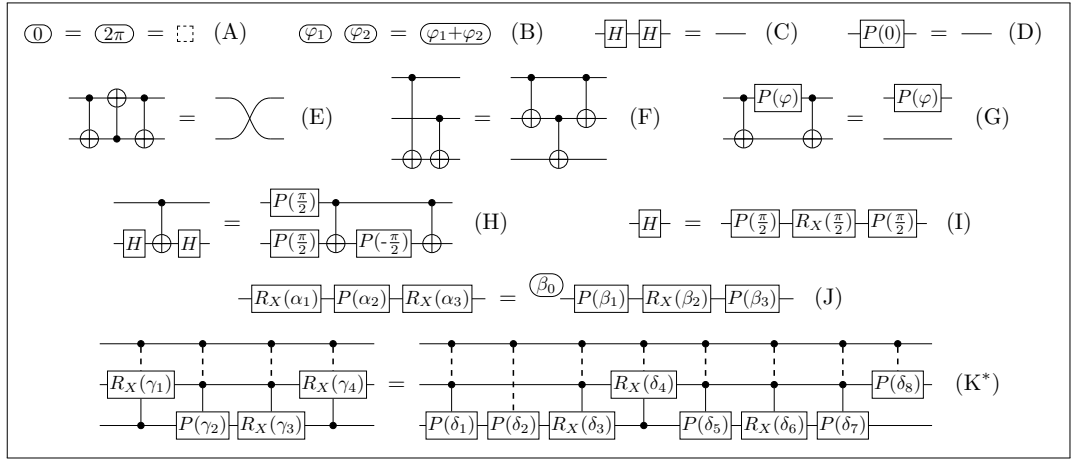

  \scalebox{.85}{\fbox{\begin{minipage}{1.159\textwidth}\begin{center}
    \vspace{-1em}
    
    \hspace{-2.5em}\begin{subfigure}{0.26\textwidth}
      \begin{align}\label{axQCgphaseempty}\tag{A}\tikzfigM{./qc-axioms/s0}=\tikzfigM{./qc-axioms/s2pi}=\tikzfigM{gates/empty}\end{align}
    \end{subfigure}\hspace{-.2em}
    \begin{subfigure}{0.28\textwidth}
      \begin{align}\label{axQCgphaseaddition}\tag{B}\tikzfigM{./qc-axioms/sphi1}\tikzfigM{./qc-axioms/sphi2}=\tikzfigM{./qc-axioms/sphi1plusphi2}\end{align}
    \end{subfigure}\hspace{-.2em}
    \begin{subfigure}{0.24\textwidth}
      \begin{align}\label{axQCHH}\tag{C}\tikzfigM{./qc-axioms/HH}=\tikzfigM{./gates/Id}\end{align}
    \end{subfigure}\hspace{-.2em}
    \begin{subfigure}{0.23\textwidth}
      \begin{align}\label{axQCP0}\tag{D}\tikzfigM{./qc-axioms/P0}=\tikzfigM{gates/Id}
      \end{align}
    \end{subfigure}
    \vspace{-.5em}

    \hspace{-2em}\begin{subfigure}{0.30\textwidth}
      \begin{align}\label{axQCswap}\tag{E}\tikzfigM{./qc-axioms/CNOT12CNOT21CNOT12}=\tikzfigM{./qc-axioms/SWAP}\end{align}
    \end{subfigure}\hspace{-.2em}
    \begin{subfigure}{0.31\textwidth}
      \begin{align}\label{axQC3cnots}\tag{F}\tikzfigM{./qc-axioms/CNOT13CNOT23}=\tikzfigM{./qc-axioms/CNOT12CNOT23CNOT12}\end{align}
    \end{subfigure}\hspace{-.2em}
    \begin{subfigure}{0.33\textwidth}
      \begin{align}\label{axQCcnotPcnot}\tag{G}\tikzfigM{./qc-axioms/CNOTPphiCNOT}=\tikzfigM{./qc-axioms/PphiId}\end{align}
    \end{subfigure}
    \vspace{-.5em}

    \hspace{-2em}\begin{subfigure}{0.43\textwidth}
      \begin{align}\label{axQCCZ}\tag{H}\tikzfigM{./qc-axioms/H2CNOTH2}=\tikzfigM{./qc-axioms/CZ}\end{align}
    \end{subfigure}\hspace{3em}
    \begin{subfigure}{0.37\textwidth}
      \begin{align}\label{axQCHeuler}\tag{I}\tikzfigM{./qc-axioms/H}=\tikzfigM{./qc-axioms/eulerH}\end{align}
    \end{subfigure}
    \vspace{-.5em}

    \hspace{-2em}\begin{subfigure}{0.62\textwidth}
      \begin{align}\label{axQCeuler}\tag{J}\tikzfigM{./qc-axioms/q-left}=\tikzfigM{./qc-axioms/q-right}\end{align}
    \end{subfigure}

    \hspace{-2em}\begin{subfigure}{0.92\textwidth}
      \begin{align}\label{Mstar}\tag{$\text{K}^*$}\tikzfigM{./qc-axioms/Mstar-left}=\tikzfigM{./qc-axioms/Mstar-right-simp}\end{align}
    \end{subfigure}

    \vspace{.5em}
  \end{center}\end{minipage}}}
  \caption{Equational theory $\QC$. Equations \eqref{axQCgphaseaddition} and \eqref{axQCcnotPcnot} are defined for any $\varphi,\varphi_1,\varphi_2\in\R$. In Equations \eqref{axQCeuler} and \eqref{Mstar} the LHS circuit has arbitrary parameters which uniquely determine the parameters of the RHS circuit. \cref{axQCeuler} follows from the well-known Euler-decomposition which states that any unitary can be decomposed, up to a global phase, into basic $X$- and $Z$-rotations. Thus for any $\alpha_i\in \mathbb R$, there exist $\beta_j\in \mathbb R$ such that \cref{axQCeuler} is sound. We make the angles $\beta_j$ unique by assuming that $\beta_1 \in [0,\pi)$, $\beta_0,\beta_2,\beta_3\in[0,2\pi)$ and if $\beta_2\in\{0,\pi\}$ then $\beta_1=0$. \cref{Mstar} reads as follows: the equation is defined for any $n\ge 2$ input qubits, in such a way that all gates are controlled by the first $n-2$ qubits.  Similarly to \cref{axQCeuler}, for any $\gamma_i\in\mathbb R$, there exist $\delta_j\in\mathbb R$ such that \cref{Mstar} is sound. We ensure that the angles $\delta_j$ are uniquely determined by assuming that $\delta_1,\delta_2,\delta_5\in[0,\pi)$, $\delta_3,\delta_6,\delta_7,\delta_8\in[0,2\pi)$, $\delta_4\in[0,4\pi)$, if $\delta_3=0$ and $\delta_6\neq0$ then $\delta_2=0$, if $\delta_3=\pi$ then $\delta_1=0$, if $\delta_4\in\{0,2\pi\}$ then $\delta_1=\delta_3=0$, if $\delta_4\in\{\pi,3\pi\}$ then $\delta_2=0$, if $\delta_4\in\{\pi,3\pi\}$ and $\delta_3=0$ then $\delta_1=0$, and if $\delta_6\in\{0, \pi\}$ then $\delta_5=0$.\label{fig:QCaxioms}}
\end{figure}

\begin{figure}[h!]
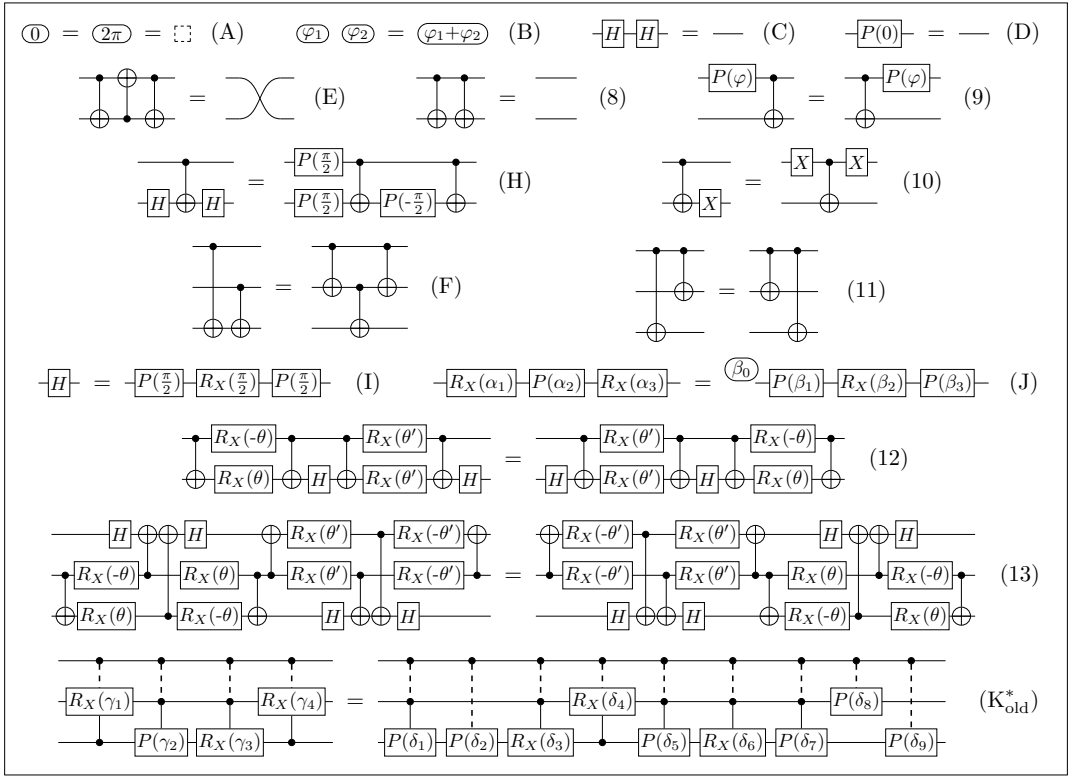

  \scalebox{.85}{\fbox{\begin{minipage}{1.159\textwidth}\begin{center}
    \vspace{-1em}
    
    \hspace{-2.5em}\begin{subfigure}{0.26\textwidth}
      \begin{align}\tag{A}\tikzfigM{./qc-axioms/s0}=\tikzfigM{./qc-axioms/s2pi}=\tikzfigM{gates/empty}\end{align}
    \end{subfigure}\hspace{-.2em}
    \begin{subfigure}{0.28\textwidth}
      \begin{align}\tag{B}\tikzfigM{./qc-axioms/sphi1}\tikzfigM{./qc-axioms/sphi2}=\tikzfigM{./qc-axioms/sphi1plusphi2}\end{align}
    \end{subfigure}\hspace{-.2em}
    \begin{subfigure}{0.24\textwidth}
      \begin{align}\tag{C}\tikzfigM{./qc-axioms/HH}=\tikzfigM{./gates/Id}\end{align}
    \end{subfigure}\hspace{-.2em}
    \begin{subfigure}{0.23\textwidth}
      \begin{align}\tag{D}\tikzfigM{./qc-axioms/P0}=\tikzfigM{gates/Id}
      \end{align}
    \end{subfigure}
    \vspace{-.5em}

    \hspace{-2em}\begin{subfigure}{0.30\textwidth}
      \begin{align}\tag{E}\tikzfigM{./qc-axioms/CNOT12CNOT21CNOT12_2}=\tikzfigM{./qc-axioms/SWAP_2}\end{align}
    \end{subfigure}\hspace{.5em}
    \begin{subfigure}{0.25\textwidth}
      \begin{align}\label{CNOTCNOT}\tikzfigM{./qc-completeness/CNOTCNOT}=\tikzfigM{./qc-completeness/IdId}\end{align}
    \end{subfigure}\hspace{.5em}
    \begin{subfigure}{0.33\textwidth}
      \begin{align}\label{PcommutCNOT}\tikzfigM{./qc-completeness/PphiCNOT}=\tikzfigM{./qc-completeness/CNOTPphi}\end{align}
    \end{subfigure}
    \vspace{-.5em}

    \hspace{-2em}\begin{subfigure}{0.43\textwidth}
      \begin{align}\tag{H}\tikzfigM{./qc-axioms/H2CNOTH2}=\tikzfigM{./qc-axioms/CZ}\end{align}
    \end{subfigure}\hspace{2.5em}
    \begin{subfigure}{0.33\textwidth}
      \begin{align}\label{CNOTXX}\tikzfigM{./qc-completeness/XCNOTXX-step-0}=\tikzfigM{./qc-completeness/XCNOTXX-step-15}\end{align}
    \end{subfigure}
    \vspace{-.5em}

    \hspace{-2em}\begin{subfigure}{0.31\textwidth}
      \begin{align}\tag{F}\tikzfigM{./qc-axioms/CNOT13CNOT23}=\tikzfigM{./qc-axioms/CNOT12CNOT23CNOT12}\end{align}
    \end{subfigure}\hspace{4.5em}
    \begin{subfigure}{0.30\textwidth}
      \begin{align}\label{CNOTscontrolcommut}\tikzfigM{./qc-completeness/CNOTscontrolcommut-v2-step-0}=\tikzfigM{./qc-completeness/CNOTscontrolcommut-v2-step-3}\end{align}
    \end{subfigure}
    \vspace{-.5em}

    \hspace{-2em}\begin{subfigure}{0.37\textwidth}
      \begin{align}\tag{I}\tikzfigM{./qc-axioms/H_2}=\tikzfigM{./qc-axioms/eulerH_2}\end{align}
    \end{subfigure}\hspace{0em}
    \begin{subfigure}{0.62\textwidth}
      \begin{align}\tag{J}\tikzfigM{./qc-axioms/q-left}=\tikzfigM{./qc-axioms/q-right}\end{align}
    \end{subfigure}

    \hspace{-2em}\begin{subfigure}{0.75\textwidth}
      \begin{align}\label{n}\tikzfigM{./qc-axioms/n-left}=\tikzfigM{./qc-axioms/n-right}\end{align}
    \end{subfigure}

    \hspace{-2em}\begin{subfigure}{1.00\textwidth}
      \begin{align}\label{o}\tikzfigM{./qc-axioms/o-left}=\tikzfigM{./qc-axioms/o-right}\end{align}
    \end{subfigure}

    \hspace{-3em}\begin{subfigure}{1.02\textwidth}
      \begin{align}\label{Mstarold}\tag{$\text{K}^*_{\text{old}}$}\tikzfigM{./qc-axioms/Mstar-left}=\tikzfigM{./qc-axioms/Mstar-right}\end{align}
    \end{subfigure}

    \vspace{.5em}
  \end{center}\end{minipage}}}
  \caption{Equational theory $\QCold$ introduced in \cite{CHMPV}.
  Equations \eqref{axQCgphaseempty},\eqref{axQCgphaseaddition},\eqref{axQCHH},\eqref{axQCP0},\eqref{axQCswap},\eqref{axQCCZ},\eqref{axQC3cnots},\eqref{axQCHeuler} are \eqref{axQCeuler} are aslo in the equational theory $\QC$. \cref{Mstarold} is the old version of \cref{Mstar} with one more parameter, and where the uniqueness of the parameters $\delta_j$ is given by the conditions: $\delta_1,\delta_2,\delta_5\in[0,\pi)$, $\delta_3,\delta_4,\delta_6,\delta_7,\delta_8,\delta_9\in[0,2\pi)$, if $\delta_3=0$ then $\delta_2=0$, if $\delta_3=\pi$ then $\delta_1=0$, if $\delta_4=0$ then $\delta_1=\delta_3\mathrel{(=}\delta_2)=0$, if $\delta_4=\pi$ then $\delta_2=0$, if $\delta_4=\pi$ and $\delta_3=0$ then $\delta_1=0$, and if $\delta_6\in\{0, \pi\}$ then $\delta_5=0$. Note that these conditions on the $\delta_j$ for $1\leq j\leq8$ are the same as in \cref{Mstar} except for $\delta_4$, which is restricted to be in $[0,2\pi)$ instead of $[0,4\pi)$, and for $\delta_2$, which has to be $0$ when $\delta_3=0$ even if $\delta_6=0$.\label{fig:QColdaxioms}}
\end{figure}

We equip the vanilla quantum circuits with the equational theory  $\QC$ defined in \cref{fig:QCaxioms}. We write $\QC\vdash C_1=C_2$ when $C_1$ can be transformed into $C_2$ using the equations of $\QC$. More formally, $\QC\vdash\cdot=\cdot$ is the smallest congruence which satisfies the equations of \cref{fig:QCaxioms} together with the deformation rules that come with the prop formalism. $\QC$ is sound, i.e. for any $\propQC$-circuits $C_1,C_2$ if $\QC\vdash C_1=C_2$ then $\interp{C_1}=\interp{C_2}$. This can be proved by observing that all equations of $\QC$ are sound.

\cref{fig:QColdaxioms} depicts the complete equational theory $\QCold$ for vanilla quantum circuits introduced in \cite{CHMPV}. Compared to $\QCold$, Equations \eqref{CNOTCNOT} and \eqref{PcommutCNOT} are now subsumed by \cref{axQCcnotPcnot} in $\QC$, \cref{Mstar} is a slight simplification of \cref{Mstarold} with one less parameter in the RHS circuit, whereas Equations \eqref{CNOTXX} and \eqref{CNOTscontrolcommut} together with Equations \eqref{n} and \eqref{o} have been removed, as we prove in the following that they can be derived in QC.

\subsection{Reasoning on quantum circuits}
To derive an equation $C_1=C_2$ over quantum circuits, one can apply some rules of the equational theory to transform step by step $C_1$ into $C_2$. In the context of vanilla quantum circuits, we can take advantage of the reversibility of generators to \emph{simplify} equations. Indeed, intuitively, proving $C_1\circ \gH = C_2 \circ \gH$ is equivalent to proving $C_1=C_2$ as $\gH$ is (provably) reversible. Similarly, proving $C_1 = C_2$ should be equivalent to proving $C_1\circ C_2^\dagger = \gI$, where the adjoint of a circuit is defined as follows:

\begin{definition}
  For any $\propQC$-circuit $C$, let $C^\dagger$ be the \emph{adjoint} of $C$ inductively defined as $(C_2\circ C_1)^\dagger\defeq C_1^\dagger\circ C_2^\dagger$; $(C_1\otimes C_2)^\dagger\defeq C_1^\dagger\otimes C_2^\dagger$; and for any $\varphi\in\R$, $\left(\gs\right)^\dagger\defeq\gsminus$, $\left(\gP\right)^\dagger\defeq\gPminus$, and $g^\dagger\defeq g$ for any other generator $g$. 
\end{definition}

\begin{proposition}
  $\interp{C^\dagger}=\interp{C}^\dagger$ for any $\propQC$-circuit $C$, where $\interp{C}^\dagger$ is the usual  linear algebra adjoint of $\interp{C}$. 
\end{proposition}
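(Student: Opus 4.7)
The plan is to proceed by structural induction on the circuit $C$, following the inductive definition of both $(\cdot)^\dagger$ and $\interp{\cdot}$. The key observation is that the syntactic adjoint has been defined on generators precisely so as to mirror, pointwise, the semantic adjoint, and both $\interp{\cdot}$ and $(\cdot)^\dagger$ distribute nicely over $\circ$ and $\otimes$ (the latter using the standard identities $(A\circ B)^\dagger = B^\dagger \circ A^\dagger$ and $(A\otimes B)^\dagger = A^\dagger\otimes B^\dagger$ on linear maps).

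First I would dispatch the generators. For $\gempty$, $\gI$, $\gSWAP$, $\gH$, and $\gCNOT$, the definition sets $g^\dagger = g$, so the claim reduces to checking that each of these gates has a self-adjoint semantics. This is immediate from \Cref{def:QCsem}: the matrices of $H$, CNot, identity and swap are all real symmetric (hence Hermitian), and the empty circuit semantics is the identity on $\C$. For the parametrised generators, one checks $\interp{\gsminus} = (1\mapsto e^{-i\varphi}) = \overline{\interp{\gs}} = \interp{\gs}^\dagger$, and similarly $\interp{\gPminus}:\ket x \mapsto e^{-ix\varphi}\ket x$ equals $\interp{\gP}^\dagger$ since $\interp{\gP}$ is diagonal in the computational basis.

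For the inductive steps, assuming $\interp{C_i^\dagger} = \interp{C_i}^\dagger$ for $i\in\{1,2\}$, we compute
\[\interp{(C_2\circ C_1)^\dagger} = \interp{C_1^\dagger\circ C_2^\dagger} = \interp{C_1^\dagger}\circ\interp{C_2^\dagger} = \interp{C_1}^\dagger\circ\interp{C_2}^\dagger = (\interp{C_2}\circ\interp{C_1})^\dagger = \interp{C_2\circ C_1}^\dagger,\]
and analogously
\[\interp{(C_1\otimes C_2)^\dagger} = \interp{C_1^\dagger}\otimes\interp{C_2^\dagger} = \interp{C_1}^\dagger\otimes\interp{C_2}^\dagger = (\interp{C_1}\otimes\interp{C_2})^\dagger = \interp{C_1\otimes C_2}^\dagger.\]

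There is no real obstacle: the statement is essentially book-keeping, designed to hold by construction of $(\cdot)^\dagger$ on generators. The only point requiring the slightest care is the order reversal in sequential composition, which is already baked into the definition $(C_2\circ C_1)^\dagger \defeq C_1^\dagger \circ C_2^\dagger$ and thus matches the standard linear-algebraic identity without further work.
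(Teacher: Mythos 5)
Your proof is correct and is exactly the structural induction that the paper invokes with its one-line ``By induction on $C$'': you verify the generators (self-adjointness of $H$, CNot, identity, swap, and the adjoint-matching of the phase and global-phase gates) and close the composition and tensor cases with the standard linear-algebra identities. Nothing to add.
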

\begin{proof}
  By induction on $C$.
\end{proof}

\begin{proposition}[Simplification principle]\label{prop:crossinggate}
  For any $n$-qubit $\propQC$-circuits $C,C_1,C_2$
  \[\QC\vdash C\circ C_1 = C_2 \qquad \Leftrightarrow \qquad \QC\vdash C_1 =  C^\dagger \circ C_2\] and \[\QC\vdash C_1 \circ C = C_2 \qquad \Leftrightarrow \qquad \QC\vdash C_1 =  C_2\circ C^\dagger \]
\end{proposition}
\begin{proof}
  First we show by induction that $\QC\vdash C\circ C^\dagger=\gI^{\otimes n}$ and $\QC\vdash C^\dagger \circ C=\gI^{\otimes n}$ for any $C$. Then, w.l.o.g. we show that $(\QC\vdash C\circ C_1 = C_2)  \Rightarrow  (\QC\vdash C_1 =  C^\dagger \circ C_2)$: assuming $\QC\vdash C\circ C_1 = C_2$, we have $\QC\vdash C_1= C^\dagger \circ C\circ C_1 = C^\dagger \circ C_2$. 
\end{proof}

\subsection{Completeness}
We prove the completeness of $\QC$ by showing that every equation of the original complete equational theory $\QCold$ introduced in \cite{CHMPV} can be derived in $\QC$. To this end we first show the completeness of $\QC$ for the (modest) fragment of quantum circuits containing at most one CNot gate.

\begin{lemma}[1-CNot completeness]
  \label{lem:1CNOTcompleteness}
  $\QC$ is complete for circuits containing at most one $\gCNOT$, i.e. for any $\propQC$-circuits $C_1,C_2$ with at most one $\gCNOT$, if $\interp{C_1}=\interp{C_2}$ then $\QC\vdash C_1=C_2$.
\end{lemma}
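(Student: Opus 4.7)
My plan is to put every 1-CNot circuit into a canonical form that is uniquely determined by its semantics, using only rules from $\QC$, so that $\interp{C_1}=\interp{C_2}$ forces the canonical forms to coincide.

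The 1-qubit base case is handled directly by the Euler rule \cref{axQCeuler}: under its stated range conditions, every 1-qubit circuit is $\QC$-equal to a canonical form $\gs{\beta_0}\circ R_X(\beta_1)\circ R_Z(\beta_2)\circ R_X(\beta_3)$ whose four angles depend only on the semantics, so two 1-qubit circuits with equal semantics are $\QC$-equal. For an $n$-qubit circuit with no $\gCNOT$, the prop laws (interchange, naturality of swap) together with the phase rules \cref{axQCgphaseempty,axQCgphaseaddition} let us rewrite it as $\gs{\alpha}\otimes U_1\otimes\cdots\otimes U_n$; combining this with the 1-qubit canonical form on each $U_i$ and fixing a convention on which wire carries the residual global phase gives a normal form uniquely determined by the semantics, which disposes of the 0-CNot case.

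Suppose now each of $C_1,C_2$ contains exactly one $\gCNOT$. Using \cref{axQCswap} and the commutation rules \cref{PcommutCNOT,RXcommutCNOT,XcommutCNOT,CNOTZZ,CNOTscontrolcommut,CNOTstargetcommut} (all provable in $\QC$) together with prop deformation, I would move the $\gCNOT$ onto wires $1,2$ via wire permutations, push every 1-qubit gate on wires $3,\ldots,n$ past the $\gCNOT$ to a designated side, and consolidate the surviving 1-qubit layers on wires $1,2$. This yields the normal form
\[
  \gs{\alpha}\otimes\bigl((A\otimes B)\circ\gCNOT\circ(C\otimes D)\bigr)\otimes W_3\otimes\cdots\otimes W_n,
\]
where $A,B,C,D,W_3,\ldots,W_n$ are 1-qubit circuits, each rewritten into Euler canonical form. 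The mixed case where one circuit has no $\gCNOT$ and the other has one is ruled out semantically: $\interp{\gCNOT}$ is entangling and cannot be absorbed into a product of wire-local unitaries, so the equal-semantics hypothesis fails.

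The principal obstacle is uniqueness of this normal form. The factors $W_k$ for $k\ge 3$ are recovered from $\interp{C}$ by restricting to fixed basis states on the other wires, so their canonical forms are semantically determined. For the 2-qubit block $(A\otimes B)\circ\gCNOT\circ(C\otimes D)$, the tuple $(A,B,C,D)$ is \emph{not} a priori unique: a $\gP$-layer on wire $1$ commutes with $\gCNOT$ (by \cref{axQCcnotPcnot}) and an $R_X$-layer on wire $2$ commutes with $\gCNOT$ (by \cref{RXcommutCNOT}), so these can be freely shifted across the $\gCNOT$, and residual global phases can be redistributed among the factors. I would resolve this by fixing a convention on the Euler canonical forms of $A,B,C,D$ that eliminates exactly these two continuous symmetries and the global-phase freedoms, and then use the commutation rules to transport every input circuit into the convention inside $\QC$. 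With the full normal form now a function of the semantics alone, $\interp{C_1}=\interp{C_2}$ implies the normal forms coincide and hence $\QC\vdash C_1=C_2$.
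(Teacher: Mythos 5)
Your overall strategy (reduce to a 2-qubit block around the single $\gCNOT$ and control the ambiguity in the factorisation $(A\otimes B)\circ\gCNOT\circ(C\otimes D)$) is the same as the paper's, but the step you defer to ``fixing a convention'' is exactly where the real work lies, and your description of the ambiguity is incomplete. The stabiliser of $\gCNOT$ under pre- and post-composition by product unitaries is \emph{not} generated only by the continuous $P$-layer on the control, the continuous $R_X$-layer on the target, and global phases: it also contains the discrete symmetries $\gCNOT\circ(X\otimes I)=(X\otimes X)\circ\gCNOT$ and $\gCNOT\circ(I\otimes Z)=(Z\otimes Z)\circ\gCNOT$ (Equations \eqref{CNOTXX} and \eqref{CNOTZZ}), which lie in neither continuous family ($XP(\varphi)$ is anti-diagonal, so it is not a phase gate on the control; likewise $ZR_X(\theta)$ is not an $X$-rotation on the target). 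Quotienting only by the symmetries you list therefore does not yield a semantically unique normal form: e.g.\ $\gCNOT$ and $(X\otimes X)\circ\gCNOT\circ(X\otimes I)$ have the same interpretation but distinct ``canonical'' tuples under your convention. Determining that the stabiliser consists of exactly these four discrete sectors times the continuous part is the content of the paper's \cref{lem:characterizationCNOT}, proved by projecting the relation $(C\otimes D)\circ\gCNOT\circ(A\otimes B)=\gCNOT$ onto $\ket0,\ket1$ on the control and $\ket+,\ket-$ on the target and running a four-way case analysis; you cannot assume the answer, and the answer you assume is wrong.

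A second gap: your normal form carries no residual wire permutation, so it cannot represent, say, $\gSWAP\circ\gCNOT$, whose semantics is not of the form $(A\otimes B)\circ\gCNOT\circ(C\otimes D)$ (the two gates are not locally equivalent). Once you add the permutation, you must show that the presence of a swap on wires $1,2$ is forced to agree between $C_1$ and $C_2$; this is not a soft deformation fact. The paper proves it by observing that otherwise $C_2\circ \bar{C_1}^\dagger$ would implement $\gSWAP$ with at most two CNots, contradicting the lower bound of three CNots for the swap \cite{Vatan2004optimal}. Both gaps are repairable --- your normal-form framing would go through once the full stabiliser is computed and the swap-parity argument is added, and $\QC$ does prove the needed transport equations \eqref{CNOTXX}, \eqref{CNOTZZ}, \eqref{axQCcnotPcnot}, \eqref{RXcommutCNOT} --- but as written the proof is missing its central lemma.
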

\begin{proof}
  First we can show that, for semantic reasons, it is enough to prove the statement for 2-qubit circuits containing no swap gate and exactly one CNot gate. Then, by the simplification principle (\Cref{prop:crossinggate}), it is sufficient to prove $$\tikzfigS{./proof-n/ABCD}=\tikzfigS{./proof-n/CNOT}$$ whenever the equation is sound. By semantic analysis, we can show that there exist $\alpha,\beta,\gamma,\varphi,\theta\in\R$ and $k,\ell\in\{0,1\}$ such that
  \begin{gather*}
    \interp{\tikzfigS{./proof-n/A}}=\interp{\tikzfigS{./proof-n/semA}} \hspace{4em}
    \interp{\tikzfigS{./proof-n/C}}=\interp{\tikzfigS{./proof-n/semC}} \\
    \interp{\tikzfigS{./proof-n/B}}=\interp{\tikzfigS{./proof-n/semB}} \hspace{4em}
    \interp{\tikzfigS{./proof-n/D}}=\interp{\tikzfigS{./proof-n/semD}}
  \end{gather*}
  where \tikzfigS{./proof-n/Xk} (resp. \tikzfigS{./proof-n/Zl}) denotes \tikzfigS{./gates/X} (resp. \tikzfigS{./gates/Z}) if $k=1$ (resp. $\ell=1$) and \tikzfigS{./gates/Id} if $k=0$ (resp. $\ell=0$). Then, using the completeness of $\QC$ for one-qubit circuits (which is a direct consequence of the fact that all equations acting on at most one qubit of $\QCold$ are also in $\QC$), it is straightforward to verify that Equations~\eqref{axQCgphaseempty},\eqref{axQCgphaseaddition},\eqref{ZZ},\eqref{XX},\eqref{CNOTZZ},\eqref{CNOTXX},\eqref{PcommutCNOT},\eqref{RXcommutCNOT},\eqref{Paddition},\eqref{RXaddition},\eqref{axQCP0}, and \eqref{RX0} capture all the possible cases. The details are given in \cref{appendix:1CNOTcompleteness}.
\end{proof}

\begin{proposition}\label{prop:proof-n}
  \cref{n} can be derived in $\QC$.
\end{proposition}
\begin{proof}
  Using the simplification principle (\cref{prop:crossinggate}), one can turn \cref{n} into an equivalent equation whose circuits contain only one $\gCNOT$. The derivation is given in \cref{appendix:n}. We conclude the proof using the completeness of $\QC$ for circuits containing at most one CNot (\cref{lem:1CNOTcompleteness}).
\end{proof}

\begin{proposition}\label{prop:proof-o}
  Equation \eqref{o} can be derived in $\QC$.
\end{proposition}
\begin{proof}
  It turns out that we can use Equation \eqref{n} to derive Equation \eqref{o} in $\QC$. The derivation is given in \cref{appendix:o}.
\end{proof}

\begin{proposition}\label{prop:proof-Mstarold}
  Equation \eqref{Mstarold} can be derived in $\QC$.
\end{proposition}
\begin{proof}
  We show that for semantic reasons, we have either the angle $\delta_9$ in \eqref{Mstarold}  in $\{0,\pi\}$, or $\delta_2=\delta_3=\delta_5=\delta_6=0$. When $\delta_9=0$,  \cref{Mstarold} can be trivially derived from \cref{Mstar}. Otherwise, Equations \eqref{Mstar} and \eqref{Mstarold} can be transformed  into each other using elementary properties of multi-controlled gates. Moreover, these transformations induce a bijection between the $8$-tuples of angles $\delta_j$ corresponding to the RHS of the instances of \cref{Mstar} and the $9$-tuples corresponding to the RHS of the instances of \cref{Mstarold}, so that the uniqueness of the $\delta_j$ in \cref{Mstar} follows from the uniqueness in \cref{Mstarold}. Details are given in Appendix~\ref{appendix:Mstarold}.
\end{proof}

\begin{theorem}[Completeness]\label{thm:vanilla}
  The equational theory $\QC$, defined in \cref{fig:QCaxioms}, is complete for $\propQC$-circuits, i.e.~for any $\propQC$-circuits $C_1,C_2$, if $\interp{C_1}=\interp{C_2}$ then $\QC\vdash C_1=C_2$.
\end{theorem}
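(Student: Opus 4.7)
The plan is to leverage the completeness of the original equational theory of \cite{CHMPV}: every semantic identity $\interp{C_1}=\interp{C_2}$ admits, by that theorem, a formal derivation using the rules of that original theory. It therefore suffices to show that each rule of the original theory is derivable in $\QC$. Once this is established, the theorem becomes a one-line argument: given $C_1,C_2\in\propQC$ with $\interp{C_1}=\interp{C_2}$, invoke \cite{CHMPV} to obtain a derivation, and rewrite every step that uses a rule of the original theory by its $\QC$-derivation, using that $\QC\vdash\cdot=\cdot$ is a congruence stable under prop deformation.

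All the work then lies in verifying that each rule of the original theory is indeed derivable in $\QC$. The rules literally shared between both theories (namely \eqref{axQCgphaseempty}--\eqref{axQCeuler} and \eqref{Mstar}) need no further argument. The commutation-style rules \eqref{CNOTCNOT}, \eqref{PcommutCNOT}, \eqref{CNOTXX} and \eqref{CNOTscontrolcommut}, as well as the auxiliary \eqref{Paddition} and \eqref{XPX}, appear in \cref{fig:identities} with derivations given in \cref{appendix:proofsidentities}; in particular \eqref{CNOTCNOT} and \eqref{PcommutCNOT} are subsumed by the new axiom \eqref{axQCcnotPcnot}. The two unbounded-arity rules \eqref{n} and \eqref{o} are treated by Propositions~\ref{prop:proof-n} and \ref{prop:proof-o}. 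Finally, the original multi-controlled Euler rule \eqref{Mstarold} is derived from its simplified replacement \eqref{Mstar} in \cref{prop:proof-Mstarold}.

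The main obstacle is therefore not in the theorem itself but in the supporting propositions, in particular \cref{prop:proof-n} and its underlying 1-CNot completeness lemma \cref{lem:1CNOTcompleteness}. The former requires first massaging \eqref{n} — which a priori involves two CNots across three qubits — into a $\QC$-equivalent equation containing only one CNot, using the crossing argument of \cref{prop:crossinggate}, so that the 1-CNot completeness can then dispose of it. A second delicate point is \cref{prop:proof-Mstarold}, where the parametrisations of \eqref{Mstar} and \eqref{Mstarold} differ both in the number of angles (eight versus nine) and in their domain constraints, so that a bijection between admissible tuples is needed in order to transfer uniqueness of normal forms from one to the other. Once these lemmas are in place, the completeness theorem follows from their mechanical assembly.
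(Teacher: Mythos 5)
Your proposal is correct and takes essentially the same route as the paper: the paper's proof of \cref{thm:vanilla} likewise reduces completeness to deriving in $\QC$ every rule of the original theory of \cite{CHMPV} not already present, pointing to \cref{appendix:proofsidentities} for \eqref{CNOTCNOT}, \eqref{PcommutCNOT}, \eqref{CNOTXX}, \eqref{CNOTscontrolcommut} and to \crefnosort{prop:proof-n}, \crefnosort{prop:proof-o}, \crefnosort{prop:proof-Mstarold} for \eqref{n}, \eqref{o}, \eqref{Mstarold}. The only slight imprecision is calling \eqref{Mstar} ``literally shared'' with the original theory --- it is in fact a modified replacement of \eqref{Mstarold} --- but you correctly account for this by invoking \cref{prop:proof-Mstarold}.
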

\begin{proof}
  All the rules of the complete equational theory introduced in \cite{CHMPV} that are not in $\QC$ are provable in $\QC$: Equations \eqref{CNOTCNOT}, \eqref{PcommutCNOT}, \eqref{CNOTXX}, \eqref{CNOTscontrolcommut} are proved in \cref{appendix:proofsimpleaxioms}, Equations \eqref{n}, \eqref{o} and \eqref{Mstarold} are proved in Propositions \ref{prop:proof-n}, \ref{prop:proof-o} and \ref{prop:proof-Mstarold} respectively.
\end{proof}

%%%%%%%%%%%%%%%%%%%%%%%%%%%%%%%%%%%%%%%%%%%%%%%%%%%%%%%%%%%%%%%%%%%%%%%%%%%%%%%
%%%%%%%%%%%%%%%%%%%%%%%%%%%%%%%%%%%%%%%%%%%%%%%%%%%%%%%%%%%%%%%%%%%%%%%%%%%%%%%
\section{Quantum circuits for isometries}
\label{sec:QCiso}
In this section we consider a first standard extension of the vanilla quantum circuits which consists in allowing qubit initialisation in a specific state, namely in the $\ket0$-state. 

\begin{definition}
  Let $\propQCiso$ be the prop generated by $\gs:0\to 0$, $\gH:1\to 1$, $\gP:1\to 1$, $\gCNOT:2\to 2$ and $\ginit:0\to 1$ for any $\varphi\in\R$.
\end{definition}

\begin{definition}[Semantics]\label{def:IsoSem}
  We extend the semantics $\interp \cdot $ of vanilla quantum circuits (Definition \ref{def:QCsem}) with $\interp \ginit = \ket 0$. 
\end{definition}

\begin{proposition}[Universality]
  Any isometry\footnote{An isometry is a linear map $V$ s.t. $V^\dagger \circ V$ is the identity.}  $V:\mathbb C^{\{0,1\}^n} \to \mathbb C^{\{0,1\}^m}$ can be realised by a $\propQCiso$-circuit $C:n\to m$ s.t. $\interp C = V$. 
\end{proposition}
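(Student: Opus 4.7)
The plan is to reduce the existence of a $\propQCiso$-circuit for an arbitrary isometry to the universality of $\propQC$ for unitaries (\cref{prop:QC-universal}), by ``padding'' the input with ancillary $\ket 0$ qubits supplied by the new generator $\ginit$.

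First I would observe that for any isometry $V:\mathbb C^{\{0,1\}^n}\to \mathbb C^{\{0,1\}^m}$, one necessarily has $m\ge n$: the relation $V^\dagger V = \mathrm{Id}$ forces $V$ to be injective, so $2^n\le 2^m$. Set $k=m-n\ge 0$. The $2^n$ vectors $V\ket x$ for $x\in\{0,1\}^n$ are orthonormal in $\mathbb C^{\{0,1\}^m}$, and so are the $2^n$ vectors $\ket x\otimes \ket 0^{\otimes k}$. By the standard basis-completion argument (Gram–Schmidt extension of an orthonormal family to an orthonormal basis), there exists an $m$-qubit unitary $U:\mathbb C^{\{0,1\}^m}\to \mathbb C^{\{0,1\}^m}$ such that for every $x\in\{0,1\}^n$,
\[
U\bigl(\ket x\otimes \ket 0^{\otimes k}\bigr)=V\ket x.
\]

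Next, by \cref{prop:QC-universal}, the unitary $U$ is realised by some circuit $C_U\in\propQC\subseteq\propQCiso$, with $\interp{C_U}=U$. I would then define the $\propQCiso$-circuit
\[
C_V \;\defeq\; C_U\circ\bigl(\gI^{\otimes n}\otimes \ginit^{\otimes k}\bigr)\;:\;n\to m.
\]
Here $\ginit^{\otimes k}:0\to k$ contributes the $k$ ancillae in the $\ket 0$-state (when $k=0$ it is just $\gempty$), and $\gI^{\otimes n}$ carries the $n$ original qubits. Composing semantics using \cref{def:IsoSem}, one computes on basis inputs
\[
\interp{C_V}\ket x \;=\; U\bigl(\ket x\otimes \ket 0^{\otimes k}\bigr)\;=\;V\ket x,
\]
so by linearity $\interp{C_V}=V$, which concludes the proof.

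The only non-routine step is the construction of the unitary extension $U$ of $V$; however this is a purely linear-algebraic fact (completing an orthonormal family to an orthonormal basis), independent of the circuit formalism, so no real obstacle arises. Everything else is bookkeeping on the arity of the generators and a direct use of the $\propQC$-universality already stated.
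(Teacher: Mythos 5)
Your proof is correct. The paper states this proposition without proof (treating it as folklore), and your argument --- extending $V$ to an $m$-qubit unitary $U$ with $U(\ket x\otimes\ket 0^{\otimes k})=V\ket x$ by orthonormal-basis completion, realising $U$ via \cref{prop:QC-universal}, and precomposing with $\gI^{\otimes n}\otimes\ginit^{\otimes (m-n)}$ --- is exactly the standard argument the authors implicitly rely on.
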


\begin{figure}[h!]
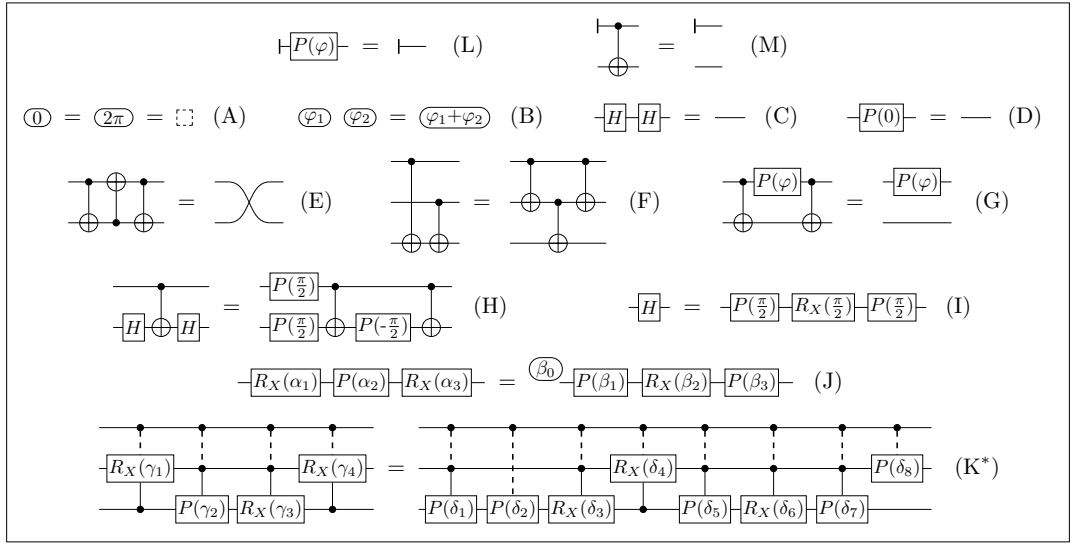

  \scalebox{.85}{\fbox{\begin{minipage}{1.159\textwidth}\begin{center}
    \vspace{-1em}

    \hspace{-2.8em}\begin{subfigure}{0.25\textwidth}
      \begin{align}\label{axQCISOinitP}\tag{L}\tikzfigM{./qciso-axioms/initPphi}=\tikzfigM{./qciso-axioms/initId}\end{align}
    \end{subfigure}\hspace{2em}
    \begin{subfigure}{0.24\textwidth}
      \begin{align}\label{axQCISOinitcnot}\tag{M}\tikzfigM{./qciso-axioms/initCNOT}=\tikzfigM{./qciso-axioms/initIdId}\end{align}
    \end{subfigure}
    \vspace{-.5em}
    
    \hspace{-2.5em}\begin{subfigure}{0.26\textwidth}
      \begin{align}\tag{A}\tikzfigM{./qc-axioms/s0}=\tikzfigM{./qc-axioms/s2pi}=\tikzfigM{gates/empty}\end{align}
    \end{subfigure}\hspace{-.2em}
    \begin{subfigure}{0.28\textwidth}
      \begin{align}\tag{B}\tikzfigM{./qc-axioms/sphi1}\tikzfigM{./qc-axioms/sphi2}=\tikzfigM{./qc-axioms/sphi1plusphi2}\end{align}
    \end{subfigure}\hspace{-.2em}
    \begin{subfigure}{0.24\textwidth}
      \begin{align}\tag{C}\tikzfigM{./qc-axioms/HH}=\tikzfigM{./gates/Id}\end{align}
    \end{subfigure}\hspace{-.2em}
    \begin{subfigure}{0.23\textwidth}
      \begin{align}\tag{D}\tikzfigM{./qc-axioms/P0}=\tikzfigM{gates/Id}
      \end{align}
    \end{subfigure}
    \vspace{-.5em}

    \hspace{-2em}\begin{subfigure}{0.30\textwidth}
      \begin{align}\tag{E}\tikzfigM{./qc-axioms/CNOT12CNOT21CNOT12}=\tikzfigM{./qc-axioms/SWAP}\end{align}
    \end{subfigure}\hspace{-.2em}
    \begin{subfigure}{0.31\textwidth}
      \begin{align}\tag{F}\tikzfigM{./qc-axioms/CNOT13CNOT23}=\tikzfigM{./qc-axioms/CNOT12CNOT23CNOT12}\end{align}
    \end{subfigure}\hspace{-.2em}
    \begin{subfigure}{0.33\textwidth}
      \begin{align}\tag{G}\tikzfigM{./qc-axioms/CNOTPphiCNOT}=\tikzfigM{./qc-axioms/PphiId}\end{align}
    \end{subfigure}
    \vspace{-.5em}

    \hspace{-2em}\begin{subfigure}{0.43\textwidth}
      \begin{align}\tag{H}\tikzfigM{./qc-axioms/H2CNOTH2}=\tikzfigM{./qc-axioms/CZ}\end{align}
    \end{subfigure}\hspace{3em}
    \begin{subfigure}{0.37\textwidth}
      \begin{align}\tag{I}\tikzfigM{./qc-axioms/H}=\tikzfigM{./qc-axioms/eulerH}\end{align}
    \end{subfigure}
    \vspace{-.5em}

    \hspace{-2em}\begin{subfigure}{0.62\textwidth}
      \begin{align}\tag{J}\tikzfigM{./qc-axioms/q-left}=\tikzfigM{./qc-axioms/q-right}\end{align}
    \end{subfigure}

    \hspace{-2em}\begin{subfigure}{0.92\textwidth}
      \begin{align}\tag{$\text{K}^*$}\tikzfigM{./qc-axioms/Mstar-left}=\tikzfigM{./qc-axioms/Mstar-right-simp}\end{align}
    \end{subfigure}

    \vspace{.5em}
  \end{center}\end{minipage}}}
  \caption{Equational theory $\QCiso$. It contains all the equations of $\QC$ together with \cref{axQCISOinitP} (defined for any $\varphi\in\R$) and \cref{axQCISOinitcnot}, which are new equations governing the behaviour of the new generator $\ginit$. \label{fig:QCisoaxioms}}
\end{figure}

For instance, the so-called copies in the standard basis ($\ket x\mapsto \ket {xx}$) and in the diagonal basis can be respectively represented as follows:
\[\tikzfigS{./examples/stdcopy}\qquad\qquad\qquad \tikzfigS{./examples/diagcopy}\]

We consider the equational theory $\QCiso$, given in Figure \ref{fig:QCisoaxioms}, which is nothing but the equational theory $\QC$ augmented with the following two sound equations:
\begin{gather*}
  \tikzfigS{./qciso-axioms/initPphi}=\tikzfigS{./qciso-axioms/initId} \;\;(\textup{L})\hspace{5em}
  \tikzfigS{./qciso-axioms/initCNOT}=\tikzfigS{./qciso-axioms/initIdId} \;\;(\textup{M})
\end{gather*}

\label{sec:comp-iso}

Viewing $\gP$ as a control-global-phase gate, Equations \eqref{axQCISOinitP}, \eqref{axQCISOinitcnot} can be interpreted as instances of the following property: a control gate can be removed when one of its control qubit is initialised in the $\ket 0$-state. This kind of properties can actually be derived within $\QCiso$ (see \cref{appendix:ancillaidentities}).

\begin{lemma}
\label{lem:InitCtrl}
Let $C$ be a $\propQCiso$-circuit  such that $\forall \ket \varphi \in \C^{2^n}$, $\interp{C}\ket \varphi = \ket0\otimes \ket \varphi$. Then:
\[\QCiso\vdash\tikzfigS{./qciso-axioms/initCwhatev1} ~=~ \tikzfigS{./qciso-axioms/initCwhatev2}\]
\end{lemma}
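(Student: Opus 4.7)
The plan combines two ingredients: (i) a normal-form argument that rewrites $C$ in $\QCiso$ into the canonical form $\ginit\otimes\gI^{\otimes n}$; and (ii) the absorption identities of \cref{fig:ancillaidentities}, which erase any positively-controlled gate whose control wire is a fresh $\ginit$. First, because $\ginit$ is the only generator of $\propQCiso$ that changes the number of wires, any $C:n\to n+1$ in $\propQCiso$ contains exactly one occurrence of $\ginit$; using prop deformations and swaps, one rewrites $C=U\circ(\ginit\otimes\gI^{\otimes n})$ with $U\in\propQC$ a purely vanilla unitary circuit on $n+1$ wires. The hypothesis $\interp{C}\ket\varphi=\ket 0\otimes\ket\varphi$ then forces $\interp U$ to fix every state of the form $\ket 0\otimes\ket\varphi$, so $\interp U$ is a positively-controlled unitary $\text{pos-ctrl}_1(V)$ for some $V$ on $n$ qubits. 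By the completeness of $\QC$ (\cref{thm:vanilla}), there exists a $\propQC$-circuit $U'$ that syntactically realises $\text{pos-ctrl}_1(V)$ via the multi-controlled decomposition of \cref{fig:shortcutcircuits}, and $\QC\vdash U=U'$.

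Next, I would establish the key sub-lemma: for any $\propQC$-circuit $W$ whose semantics is a positively-controlled unitary on wire 1, $\QCiso\vdash W\circ(\ginit\otimes\gI^{\otimes n})=\ginit\otimes\gI^{\otimes n}$. This is proved by structural induction on a decomposition of $W$ into a sequence of multi-controlled rotations, phases, CNOTs and Toffolis, all sharing wire 1 as a positive control. Each elementary step is absorbed via a rule of \cref{fig:ancillaidentities}: \eqref{ancillaCNOTpos} for CNOTs targeted off wire 1, \eqref{ancillamctrlPpos} and \eqref{ancillamctrlRXpos} for multi-controlled phase and rotation gates, and \eqref{ancillaTOFpos} (possibly iterated) for Toffolis. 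Applied to $U'$ from the previous step, this already yields the intermediate normal form $\QCiso\vdash C=\ginit\otimes\gI^{\otimes n}$.

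To conclude, write the LHS of the lemma as $D'\circ C$, where $D'$ denotes the arbitrary positively-controlled gate on wire 1 appearing in the figure \texttt{initCwhatev1}. Because $D'$ is itself positively controlled on wire 1, the composite $D'\circ U'$ is again a $\propQC$-circuit implementing a positively-controlled unitary on wire 1; invoking the sub-lemma once more gives $\QCiso\vdash D'\circ C = D'\circ U'\circ(\ginit\otimes\gI^{\otimes n}) = \ginit\otimes\gI^{\otimes n} = C$, which is the desired identity. \textbf{The main obstacle} is the sub-lemma in the middle step: although the basic absorption identities of \cref{fig:ancillaidentities} cover each elementary gate in isolation, one must show that \emph{every} positively-controlled $\propQC$-circuit can be rewritten -- using the multi-controlled gate identities of \cref{fig:mctrlidentities}, notably \eqref{mctrlPlift} to relocate controls onto wire 1 -- into a sequence of such elementary gates while preserving the invariant ``positively controlled on wire 1'' throughout the rewriting.
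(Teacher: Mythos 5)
Your proposal follows essentially the same route as the paper's proof: extract the unique $\ginit$, observe that the remaining vanilla circuit implements $\operatorname{diag}(I,V)$, replace it (via universality and completeness of $\QC$) by a circuit in which every gate is positively controlled on the fresh wire, and absorb those controlled gates one by one into the $\ket0$-initialisation using the identities of \cref{fig:ancillaidentities}. The only cosmetic differences are that the paper makes explicit the prior elimination of Hadamards and swaps (via the Euler rule and three CNots) so that only the multi-controlled gates of \cref{fig:shortcutcircuits} need to be controlled --- which dissolves the ``main obstacle'' you worry about, since the sub-lemma is only ever applied to the circuit you built gate-by-gate, not to an arbitrary semantically-controlled circuit --- and that your final paragraph about $D'$ is superfluous, the right-hand side of the lemma being just $\ginit\otimes\gI^{\otimes n}$.
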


\begin{proof}
We prove that the above circuit necessarily is a $\propQC$-circuit together with a single qubit initialisation. The semantics of the $\propQC$-circuit forces it to be equivalent to a controlled circuit, which can be shown to be deletable by the qubit initialisation, thanks to Equations \eqref{axQCISOinitP} and \eqref{axQCISOinitcnot}. 
The complete proof is in \Cref{appendix:proofInitCtrl}.
\end{proof}

A direct corollary of \cref{lem:InitCtrl} is the completeness of $\QCiso$ for quantum circuits with at most one initialisation. Notice that one can then use Lemma 17 of \cite{Staton2015algebraic} to essentially prove the completeness of $\QCiso$. However, as the semantics in \cite{Staton2015algebraic} is based on CPTP maps rather than isometries (so global phases should be treated carefully), and moreover the proof of this Lemma 17 is not described, we provide a direct completeness proof of $\QCiso$ in the following. 

To do so, we may want to generalise \cref{lem:InitCtrl} to any number of qubit initialisations. However, the proof does not generalise. Indeed, 
it relies on the fact that, semantically, the vanilla circuit of which we initialize a single qubit is necessarily of the form $\operatorname{diag}(I,U)$, with $I$ and $U$ of the same dimension, so we can start with a circuit implementing $U$ and control each of its gates to get a circuit implementing $\operatorname{diag}(I,U)$ with only controls and phases on the control wire. To generalise this notion to more than one qubit initialisation, where semantically we would need to implement $\operatorname{diag}(I,U)$ with $U$ of dimensions larger than $I$'s, we need a finer-grain decomposition of said matrix. We hence resort to the following unitary decomposition:

\begin{lemma}
\label{lem:CSD}
Let $U=\left(\begin{array}{c|c|c}I&0&0\\\hline 0 & U_{00} & U_{01}\\\hline 0&U_{10} & ~~\raisebox{-0.5em}{\vphantom{\rule{1pt}{2em}}}U_{11}~~\end{array}\right)\begin{array}{l}\}k\\\}n-k\\\Bigl\} n\end{array}$
be unitary with $U_{00}$ and $U_{11}$ square. Then, there exist:
\begin{itemize}
\item[$\bullet$] $A_0$, $A_1$, $B_0$, $B_1$ unitary,
\item[$\bullet$] $C = \operatorname{diag}(c_1,...,c_{d})$ and $S = \operatorname{diag}(s_1,...,s_{d})$ ($c_i,s_i\geq0$ and $d\leq n-k$).
\end{itemize}
 such that:
 \begin{itemize}
 \item[$\bullet$] $C^2+S^2 = I$
 \item[$\bullet$] $U = \left(\begin{array}{c|c|c}I&0&0\\\hline 0 & A_0 & 0\\\hline 0&0 & ~~\raisebox{-0.5em}{\vphantom{\rule{1pt}{2em}}}A_1~~\end{array}\right)
\left(\begin{array}{c|c|c|c}
I&0&0&0 \\ \hline
0&C&0&-S\\ \hline
0&0&I&0\\ \hline
0&S&0&C
\end{array}\right)
\left(\begin{array}{c|c|c}I&0&0\\\hline 0 & B_0 & 0\\\hline 0&0 & ~~\raisebox{-0.5em}{\vphantom{\rule{1pt}{2em}}}B_1~~\end{array}\right)$
 \end{itemize}
\end{lemma}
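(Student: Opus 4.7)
The plan is to reduce the statement to the standard cosine--sine decomposition of a unitary matrix, applied after extracting the trivial identity block.

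First I would observe that since $U$ is unitary and its top-left $k \times k$ block is the identity, the remaining $(n-k) \times (n-k)$ sub-matrix
\[
\tilde U = \begin{pmatrix} U_{00} & U_{01} \\ U_{10} & U_{11} \end{pmatrix}
\]
is itself unitary (unitarity of $U$ forces the off-diagonal blocks linking the identity part to the rest to vanish, which is exactly what the hypothesis already asserts). It therefore suffices to factor $\tilde U$ in the prescribed form (stripping the $I_k$ rows and columns throughout) and then pad each factor with $I_k$ at the top to recover the desired decomposition of $U$.

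The core step is a singular value decomposition of $U_{00}$. Since $\tilde U$ is unitary, $U_{00}$ is a contraction, so all of its singular values lie in $[0,1]$. Write $U_{00} = \tilde A_0 \, \Sigma \, \tilde B_0$ with $\Sigma$ nonnegative diagonal, ordered so that values equal to $1$ come first, values in $(0,1)$ next, and values equal to $0$ last. Conjugating $\tilde U$ on the left by $\operatorname{diag}(\tilde A_0^\dagger, I_{d_1})$ and on the right by $\operatorname{diag}(\tilde B_0^\dagger, I_{d_1})$ brings the $(0,0)$-block into diagonal form $\Sigma$ without disturbing the outer block structure. The unitarity identities $U_{00} U_{00}^\dagger + U_{01} U_{01}^\dagger = I$ and $U_{00}^\dagger U_{00} + U_{10}^\dagger U_{10} = I$ then pin down the remaining blocks: rows of $U_{01}$ and columns of $U_{10}$ corresponding to a singular value equal to $1$ must vanish, while the remaining rows and columns have prescribed norms $\sqrt{1-\sigma_i^2}$ and are pairwise orthogonal. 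A further SVD-type step applied to the non-trivial part of $U_{01}$, using unitaries $\tilde A_1, \tilde B_1$ acting on the $d_1$-dimensional side, aligns these vectors into the diagonal sine block $S$; the block $U_{11}$, transformed by the same outer unitaries, is then forced by column-orthogonality of $\tilde U$ to take the matching cosine form on the aligned subspace. Collecting everything into $A_0, A_1, B_0, B_1$ and setting $d$ equal to the number of singular values of $U_{00}$ strictly between $0$ and $1$ yields the claimed factorization, with $C^2 + S^2 = I$ and $c_i, s_i \geq 0$ direct consequences of the SVD.

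The hard part will be the careful dimension bookkeeping around the ``boundary'' singular values $0$ and $1$ of $U_{00}$: they contribute trivially to the cosine--sine rotation and must be routed either into the middle identity block of the central matrix or absorbed into the outer factors $A_0, A_1, B_0, B_1$ in order to obtain precisely the block pattern stated (in particular the inequality $d \leq n-k$, which is automatic once $d \leq \min(d_0,d_1)$ is observed). Once this routing is fixed, the unitarity of the outer factors and all the claimed algebraic relations are immediate from the SVD construction and the unitarity of $\tilde U$.
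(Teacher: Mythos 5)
Your overall strategy is the one the paper follows: restrict to the unitary $\tilde U$, take an SVD $U_{00}=\tilde A_0\Sigma\tilde B_0$, use the unitarity relations $U_{00}U_{00}^\dagger+U_{01}U_{01}^\dagger=I$ and $U_{00}^\dagger U_{00}+U_{10}^\dagger U_{10}=I$ to see that the transformed off-diagonal blocks have orthogonal rows (resp.\ columns) of prescribed norms $\sqrt{1-\sigma_i^2}$, align them with one-sided unitaries on the $2^n$-dimensional factor (the paper packages this alignment as QL and RQ decompositions, which is the same thing), and finally let orthonormality of the columns pin down the transformed $U_{11}$ up to a residual unitary $X_0$ and a sign $\operatorname{diag}(I,-I)$ that are absorbed into $B_1$. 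So the skeleton is sound and matches the paper's proof.

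There is, however, one concrete misstep in your plan: the treatment of the singular values $\sigma_i=0$. You propose to set $d$ equal to the number of singular values \emph{strictly} between $0$ and $1$ and to route both boundary values $0$ and $1$ ``into the middle identity block of the central matrix or into the outer factors.'' This works for $\sigma_i=1$ (there $c_i=1$, $s_i=0$ and the corresponding direction genuinely sits in the identity block), but it fails for $\sigma_i=0$: the associated $2\times2$ block is $\bigl(\begin{smallmatrix}0&-1\\1&0\end{smallmatrix}\bigr)$, which swaps a direction of the $(n-k)$-dimensional space with one of the $n$-dimensional space. The outer factors $\operatorname{diag}(I,A_0,A_1)$ and $\operatorname{diag}(I,B_0,B_1)$ are block-diagonal with respect to that splitting and therefore cannot absorb such a swap, and it certainly is not the identity. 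The correct bookkeeping --- and the one the paper uses, defining $C$ as the submatrix of $\Sigma$ with entries $<1$ --- is to keep the indices with $\sigma_i=0$ inside $C$ and $S$ as pairs $(c_i,s_i)=(0,1)$ (this is precisely why the lemma only asks $c_i,s_i\geq0$), so that $d$ is the number of singular values $<1$, not the number in the open interval $(0,1)$. With that correction your argument goes through as intended.
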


The above decomposition is a variation on the \emph{Cosine-Sine Decomposition} (CSD) \cite{Paige1994history}, which has already appeared to be useful in quantum circuit synthesis \cite{Shende2006Synthesis}.
\begin{proof}
  The proof itself is a variation of the proof for the usual CSD. It specifically involves the so-called RQ and SVD decompositions, which are introduced, alongside the full proof of the lemma, in \Cref{appendix:proofCSD}.
\end{proof}

It is then possible to show the completeness of $\QCiso$:

\begin{theorem}[Completeness]\label{thm:QCiso-completeness}
  The equational theory $\QCiso$, defined in Figure \ref{fig:QCisoaxioms}, is complete for $\propQCiso$-circuits.
\end{theorem}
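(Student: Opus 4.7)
The plan is a normal-form argument together with an induction on the number of initialisations, where each inductive step is handled via the cosine-sine decomposition of \cref{lem:CSD}. First, using only the prop deformations, every $\propQCiso$-circuit $D : n \to n+k$ is provably equal in $\QCiso$ to a normal form $C \circ (\ginit^{\otimes k} \otimes \gI^{\otimes n})$ where $C$ is a vanilla $\propQC$-circuit and $k$ is the number of $\ginit$ generators in $D$; this is obtained by commuting each $\ginit$ (which has no input wire) to the leftmost position. Given $D_1, D_2$ in normal form with $\interp{D_1} = \interp{D_2}$ and $D_i = C_i \circ (\ginit^{\otimes k} \otimes \gI^{\otimes n})$, the adjoint manipulation of \cref{prop:crossinggate} applied to the vanilla prefix reduces the problem to proving, for any vanilla $\propQC$-circuit $C$ whose semantics fixes $\ket{0^k} \otimes \mathbb{C}^{2^n}$ pointwise, the equation
\begin{equation*}
\QCiso \vdash C \circ (\ginit^{\otimes k} \otimes \gI^{\otimes n}) \;=\; \ginit^{\otimes k} \otimes \gI^{\otimes n}.
\end{equation*}
Equivalently, reordering the basis so that $\ket{0^k}$-prefixes come first, $\interp{C} = \operatorname{diag}(I_{2^n}, U)$ for some unitary $U$.

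We prove this statement by induction on $k$. The case $k = 0$ is immediate since $\interp{C} = I$ and \cref{thm:vanilla} gives $\QC \vdash C = \gI^{\otimes n}$, while the case $k = 1$ is the direct corollary of \cref{lem:InitCtrl}. For the inductive step $k \geq 2$, we apply \cref{lem:CSD} to $\interp{C} = \operatorname{diag}(I_{2^n}, U)$, obtaining a decomposition $\interp{C} = L \cdot M \cdot R$ with $L = \operatorname{diag}(I_{2^n}, A_0, A_1)$, $R = \operatorname{diag}(I_{2^n}, B_0, B_1)$ and $M$ a uniformly controlled cosine-sine rotation. By the vanilla universality and completeness (\cref{prop:QC-universal,thm:vanilla}), we realise $L$, $M$, $R$ as $\propQC$-circuits $C_L$, $C_M$, $C_R$ so that $\QC \vdash C = C_L \circ C_M \circ C_R$.

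Each of $C_L$, $C_M$, $C_R$ is then absorbed into the cascade of $\ginit$'s. The outer factor $C_L$ can be realised as a circuit controlled on the topmost (initialised) qubit: on the $\ket{1}$-branch it acts as $A_1$, and on the $\ket{0}$-branch it acts as $\operatorname{diag}(I_{2^n}, A_0)$, a unitary that fixes $\ket{0^{k-1}} \otimes \mathbb{C}^{2^n}$ and therefore falls under the inductive hypothesis for $k-1$. The $\ket{1}$-branch, being controlled by a positive control on a $\ket{0}$-initialised qubit, is absorbed into the top $\ginit$ by decomposing $A_1$ into basic gates via \cref{axQCeuler} and then applying the multi-controlled absorption identities \eqref{ancillamctrlPneg} and \eqref{ancillamctrlRXneg} of \cref{fig:ancillaidentities}; the factor $C_R$ is handled symmetrically on the right. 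Finally, the middle factor $C_M$ is a product of multi-controlled rotations whose control patterns necessarily include at least one positive control on a $\ket{0}$-initialised qubit among the first $k$; each such gate is absorbed into its $\ginit$ via \eqref{ancillamctrlRXneg} and \eqref{ancillamctrlRXpos}.

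The main obstacle is the careful realisation of the CSD factors as $\propQC$-circuits whose syntactic shape matches the hypotheses of the auxiliary identities in \cref{fig:ancillaidentities}: one must arrange the decomposition so that each gate has the requisite positive control on an initialised qubit. Because the identities of \cref{fig:ancillaidentities} are derived without appealing to \eqref{Mstar}, which is needed only for vanilla completeness, the inductive step stays modular, and completeness of $\QCiso$ cleanly reduces to the combination of \cref{thm:vanilla}, \cref{lem:InitCtrl} and \cref{lem:CSD}.
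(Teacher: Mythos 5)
Your proposal is correct and follows essentially the same route as the paper's proof: reduce to showing that a vanilla circuit implementing $\operatorname{diag}(I,U)$ is absorbed by the qubit initialisations, then induct on the number of initialisations with base cases given by vanilla completeness and \cref{lem:InitCtrl}, and handle the inductive step by realising the factors of the modified cosine--sine decomposition of \cref{lem:CSD} as controlled blocks that are either killed by a positive control on an initialised qubit (via the identities of \cref{fig:ancillaidentities}) or fall under the induction hypothesis. The only technical point you defer --- arranging each middle rotation to carry a positive control on an initialised wire --- is exactly what the paper secures by observing that the rotation indices $R_j$ have at least one $1$ among their leading bits.
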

\begin{proof}
  The proof goes by showing that deriving equality between two $\propQCiso$-circuits amounts to generalising \Cref{lem:InitCtrl} to any number of qubit initialisations, which is shown inductively using the above variation of the CSD. The full proof is in \Cref{appendix:proofQCiso-completeness}.
\end{proof}

%%%%%%%%%%%%%%%%%%%%%%%%%%%%%%%%%%%%%%%%%%%%%%%%%%%%%%%%%%%%%%%%%%%%%%%%%%%%%%%
%%%%%%%%%%%%%%%%%%%%%%%%%%%%%%%%%%%%%%%%%%%%%%%%%%%%%%%%%%%%%%%%%%%%%%%%%%%%%%%
\section{Quantum circuits with ancillae}
\label{sec:QCancilla}

In this section, we consider quantum circuits which are implementing unitary maps (or isometries) using ancillary qubits, a.k.a.~ancillae, as additional work space. To represent quantum circuits with ancillae, we not only need to be able to initialise fresh qubits, but also to release qubits when they become useless. Note that to  guarantee that the overall evolution is an isometry, one can only release a qubit in the $\ket 0$-state. 

To encompass the notion of ancillary qubits we extend $\propQCiso$-circuits  (already equipped with qubit initialisation $\ginit$) with a qubit removal generator denoted  $\gdest$. Because of the constraint that removed qubits must be in the $\ket 0$-state, we define the language of quantum circuits with ancillae in two steps. 

\begin{definition}
  Let $\propQCpreancilla$ be the prop generated by $\gs:0\to 0$, $\gH:1\to 1$, $\gP:1\to 1$, $\gCNOT:2\to 2$, $\ginit:0\to 1$ and $\gdest:1\to 0$ for any $\varphi\in\R$.
\end{definition}

\begin{definition}[Semantics]
  We extend the semantics $\interp \cdot $ of quantum circuits for isometries (Definition \ref{def:IsoSem}) with $\interp \gdest = \bra 0$.
\end{definition}

Notice that the semantics of a $\propQCpreancilla$-circuit is not necessarily an isometry as $\interp \gdest$ is not isometric.\footnote{Actually any linear map $L$ s.t. $L^\dagger L\sqsubseteq I$ can be implemented by a $\propQCpreancilla$-circuit, where $\sqsubseteq$ is the L\"owner partial order. Thus $\propQCpreancilla$ can be seen as a language for postselected quantum computations.} As a consequence, we define  $\propQCancilla$ as the subclass of $\propQCpreancilla$-circuits with an isometric semantics: 

\begin{definition}
  Let $\propQCancilla$ be the sub-prop of $\propQCpreancilla$-circuit $C$ such that $\interp C$ is an isometry. 
\end{definition}

\begin{figure}[h!]
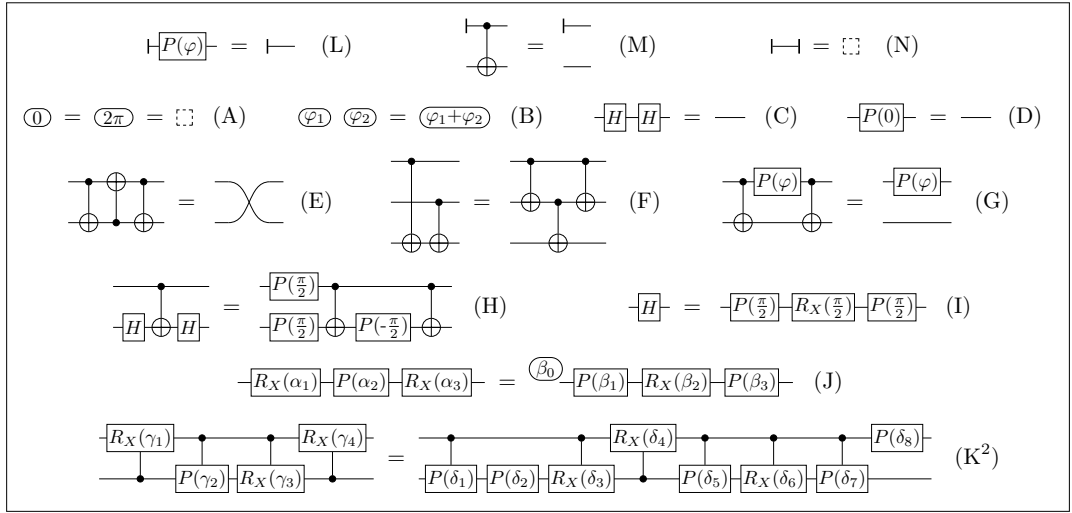

  \scalebox{.85}{\fbox{\begin{minipage}{1.159\textwidth}\begin{center}
    \vspace{-1em}

    \hspace{-2.8em}\begin{subfigure}{0.25\textwidth}
      \begin{align}\tag{L}\tikzfigM{./qciso-axioms/initPphi}=\tikzfigM{./qciso-axioms/initId}\end{align}
    \end{subfigure}\hspace{2em}
    \begin{subfigure}{0.24\textwidth}
      \begin{align}\tag{M}\tikzfigM{./qciso-axioms/initCNOT}=\tikzfigM{./qciso-axioms/initIdId}\end{align}
    \end{subfigure}\hspace{2em}
    \begin{subfigure}{0.20\textwidth}
      \begin{align}\label{axQCANCinitdest}\tag{N}\tikzfigM{./qcancilla-axioms/initdest}=\tikzfigM{./qcancilla-axioms/empty}\end{align}
    \end{subfigure}
    \vspace{-.5em}
    
    \hspace{-2.5em}\begin{subfigure}{0.26\textwidth}
      \begin{align}\tag{A}\tikzfigM{./qc-axioms/s0}=\tikzfigM{./qc-axioms/s2pi}=\tikzfigM{gates/empty}\end{align}
    \end{subfigure}\hspace{-.2em}
    \begin{subfigure}{0.28\textwidth}
      \begin{align}\tag{B}\tikzfigM{./qc-axioms/sphi1}\tikzfigM{./qc-axioms/sphi2}=\tikzfigM{./qc-axioms/sphi1plusphi2}\end{align}
    \end{subfigure}\hspace{-.2em}
    \begin{subfigure}{0.24\textwidth}
      \begin{align}\tag{C}\tikzfigM{./qc-axioms/HH}=\tikzfigM{./gates/Id}\end{align}
    \end{subfigure}\hspace{-.2em}
    \begin{subfigure}{0.23\textwidth}
      \begin{align}\tag{D}\tikzfigM{./qc-axioms/P0}=\tikzfigM{gates/Id}
      \end{align}
    \end{subfigure}
    \vspace{-.5em}

    \hspace{-2em}\begin{subfigure}{0.30\textwidth}
      \begin{align}\tag{E}\tikzfigM{./qc-axioms/CNOT12CNOT21CNOT12}=\tikzfigM{./qc-axioms/SWAP}\end{align}
    \end{subfigure}\hspace{-.2em}
    \begin{subfigure}{0.31\textwidth}
      \begin{align}\tag{F}\tikzfigM{./qc-axioms/CNOT13CNOT23}=\tikzfigM{./qc-axioms/CNOT12CNOT23CNOT12}\end{align}
    \end{subfigure}\hspace{-.2em}
    \begin{subfigure}{0.33\textwidth}
      \begin{align}\tag{G}\tikzfigM{./qc-axioms/CNOTPphiCNOT}=\tikzfigM{./qc-axioms/PphiId}\end{align}
    \end{subfigure}
    \vspace{-.5em}

    \hspace{-2em}\begin{subfigure}{0.43\textwidth}
      \begin{align}\tag{H}\tikzfigM{./qc-axioms/H2CNOTH2}=\tikzfigM{./qc-axioms/CZ}\end{align}
    \end{subfigure}\hspace{3em}
    \begin{subfigure}{0.37\textwidth}
      \begin{align}\tag{I}\tikzfigM{./qc-axioms/H}=\tikzfigM{./qc-axioms/eulerH}\end{align}
    \end{subfigure}
    \vspace{-.5em}

    \hspace{-2em}\begin{subfigure}{0.62\textwidth}
      \begin{align}\tag{J}\tikzfigM{./qc-axioms/q-left}=\tikzfigM{./qc-axioms/q-right}\end{align}
    \end{subfigure}

    \hspace{-2em}\begin{subfigure}{0.92\textwidth}
      \begin{align}\label{M2}\tag{$\text{K}^2$}\tikzfigM{./qcancilla-axioms/M2-left}=\tikzfigM{./qcancilla-axioms/M2-right-simp}\end{align}
    \end{subfigure}

    \vspace{.5em}
  \end{center}\end{minipage}}}
  \caption{Equational theory $\QCancilla$. It contains all the equations of $\QCiso$ where \cref{Mstar} has been replaced by \cref{M2}, together with \cref{axQCANCinitdest}, which is a new equation that allows one to create ancillae.\label{fig:QCancillaaxioms}}
\end{figure}

Notice in particular that any $\propQCiso$-circuit  is in $\propQCancilla$, which implies the universality of $\propQCancilla$ for isometries. We equip $\propQCancilla$-circuits with the equational theory $\QCancilla$ given in \cref{fig:QCancillaaxioms}, which is nothing but the equational theory $\QCiso$ where \cref{Mstar} is replaced by its 2-qubit version \cref{M2}, together with a new elementary equation \eqref{axQCANCinitdest} governing the behaviour of the qubit removal generator $\gdest$.
\begin{gather*}
  \tikzfigS{./qcancilla-axioms/initdest}=\tikzfigS{./qcancilla-axioms/empty} \;\;(\textup{N})
\end{gather*}

Quantum circuits with ancillae form a standard model of quantum computing. They are for instance used in the context of quantum oracles: given a circuit $C_f:n+1 \to n+1$ whose semantics is $\ket{x,y} \mapsto  \ket{x,y\oplus f(x)}$ for some boolean function $f$, one can implement the corresponding phase oracle $C'_f$ whose semantics is $\ket x\mapsto (-1)^{f(x)}\ket x$ as follows:
\[\tikzfigS{./examples/phase-oracleL}\defeq\tikzfigS{./examples/phase-oracle}\]

Quantum circuits with ancillae are also extensively used in the context of circuit parallelisation, as one can decrease the depth of a quantum circuit by adding ancillary qubits \cite{moore2001parallel}. Finally, ancillary qubits can be used to provide an alternative realisation of multi-controlled gates, for instance a 3-qubit controlled gate can be implemented using an ancillary qubit, Toffoli gates, and the 2-qubit version of the gate: 
\begin{equation}\label{eq:multi2}
  \tikzfigS{./qcancilla-completeness/initmctrlP-2}\;=\;\tikzfigS{./qcancilla-completeness/initmctrlPdef-2}
\end{equation}

This can be generalised to any multi-controlled gates with at least two control qubits: 

\begin{proposition}\label{prop:mctrlaltdef}
  The following two equations can be derived in $\QCancilla$.
  \vspace{-2em}\begin{multicols}{2}
    \begin{align}\label{Paltdef}\tikzfigS{./qcancilla-completeness/initmctrlP}\;=\;\tikzfigS{./qcancilla-completeness/initmctrlPdef}\end{align}

    \begin{align}\label{RXaltdef}\tikzfigS{./qcancilla-completeness/initmctrlRX}\;=\;\tikzfigS{./qcancilla-completeness/initmctrlRXdef}\end{align}
  \end{multicols}
\end{proposition}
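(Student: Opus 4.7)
The plan is to prove both \eqref{Paltdef} and \eqref{RXaltdef} by induction on the number of control qubits of the multi-controlled gate, starting with \eqref{eq:multi2} as the base case and propagating through equation \eqref{Paltdef} for the phase case, then leveraging the definition of controlled $X$-rotations in terms of controlled phase gates to obtain \eqref{RXaltdef}.

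For the base case \eqref{eq:multi2}, I would first unfold the LHS using the inductive definition \eqref{mctrlPdef} of the multi-controlled phase gate (or equivalently \eqref{mctrlPinducdef}), and unfold the Toffoli on the RHS using \eqref{TOFdef}. After this unfolding, both sides consist only of CNots, single-qubit phase and Hadamard gates, plus a single ancilla initialisation-discard pair. The resulting vanilla (ancilla-free) identity obtained by pushing the $\ket 0$-initialisation through using \eqref{axQCISOinitP}, \eqref{axQCISOinitcnot}, and \eqref{axQCANCinitdest} is exactly a 2-qubit statement, which is handled by the 2-qubit rule \eqref{M2} together with the standard circuit identities of \cref{fig:identities}.

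For the inductive step, suppose \eqref{Paltdef} is derivable for multi-controlled phase gates with $n$ controls, and consider an instance with $n+1$ controls. On the LHS, I would use the inductive definition \eqref{mctrlPinducdef} to peel off one control, expressing the $(n+1)$-controlled $\mathbf{P}$-gate in terms of $n$-controlled phase and $n$-controlled $X$-rotation gates. Applying the induction hypothesis to each of those reduces the LHS to a form where the ancilla appears in the same topological position as on the RHS. The ancilla identities collected in \cref{fig:ancillaidentities}, in particular \eqref{ancillamctrlPpos} and \eqref{ancillaTOFpos}, can then be used to commute the Toffoli structure through the initialisation and recombine into the desired RHS shape. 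The case of \eqref{RXaltdef} follows by applying \eqref{mctrlRXdef} to rewrite a multi-controlled $\mathbf{RX}(\theta)$ gate as a multi-controlled phase gate conjugated by Hadamards, then invoking \eqref{Paltdef}, and finally using \eqref{axQCHH} and \eqref{axQCISOinitcnot} to absorb the extra Hadamards.

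The main obstacle I anticipate is the base case: unlike in $\QCiso$ where the intricate equation \eqref{Mstar} handles all arities at once, in $\QCancilla$ we only have \eqref{M2} available, and the passage from the compact ancilla-free RHS of \eqref{M2} to the Toffoli-plus-controlled-$\mathbf P$ decomposition requires carefully choreographing the ancilla initialisation and discard. I would expect this to require several applications of \eqref{axQCISOinitcnot} to push the initialisation through CNots and of \eqref{mctrlPlift}, \eqref{mctrlPSWAP} (and their $\mathbf{RX}$ counterparts) to rearrange the control/target structure, before the outermost $\ginit$-$\gdest$ pair can be annihilated via \eqref{axQCANCinitdest}.
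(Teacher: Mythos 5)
Your induction step is broadly in line with the paper's strategy (a mutual induction on \eqref{Paltdef} and \eqref{RXaltdef}, peeling off one control via the inductive definitions and invoking the hypothesis on the smaller gates), but your base case contains a genuine gap. You propose to push the $\ket0$-initialisation through the unfolded circuit so as to obtain an ancilla-free identity, and then to dispatch that identity as ``exactly a 2-qubit statement'' via \eqref{M2}. Neither step works. In \eqref{eq:multi2} the ancilla is the \emph{target} of the Toffoli gates, not a control; the composite of $\ginit$ with such a Toffoli is the entangling isometry $\ket{xy}\mapsto\ket{xy}\otimes\ket{x\wedge y}$, which cannot be rewritten as $\ginit$ tensored with a unitary, and none of \eqref{axQCISOinitP}, \eqref{axQCISOinitcnot}, \eqref{ancillaTOFpos} applies (those all require the initialised wire to be a control). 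Moreover, even if the ancilla could be eliminated, the resulting identity lives on four qubits --- a $3$-controlled phase against a Toffoli--controlled-controlled-phase--Toffoli sandwich --- and there is no reduction of it to a two-qubit equation, so \eqref{M2} is simply the wrong tool here.

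The paper's proof of the base case uses no instance of the K-rule at all (indeed the remark after the proposition notes the equations already hold in $\QCiso$, and, more importantly for avoiding circularity with \cref{prop:Kstar}, the derivation must not presuppose \eqref{Mstar}). Instead, it first derives purely combinatorial Toffoli expansions of the $2$- and $3$-controlled phase gates, \eqref{ctrctrlPdefTOF} and \eqref{ctrctrlctrlPdefTOF}, from the inductive definition \eqref{mctrlPinducdef}, the involution \eqref{TOFTOF}, and the control/anti-control calculus of \cref{appendix:bullet}; the ancilla is only introduced afterwards, next to an already-matching subcircuit, via \eqref{ancillamctrlPneg} and \eqref{mctrlPlift} (with \eqref{ctrlPinit} playing the analogous role for the $R_X$ case). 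To repair your argument you would need to replace the appeal to \eqref{M2} by such an explicit equational derivation of the compute--uncompute identity for the Toffoli-stored conjunction of the first two controls.
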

\begin{proof}
  By induction on the number of qubits. The proof is given in Appendix \ref{appendix:proofmctrlaltdef}. 
\end{proof}

Notice that Equations \eqref{Paltdef} and \eqref{RXaltdef} are actually derivable in $\QCiso$. However, in order to provide an alternative inductive definition of multi-control gates (like in \cref{eq:multi2}), it requires the presence of at least one fresh qubit which can always be created in the context of quantum circuits with ancillae thanks to \cref{axQCANCinitdest}.

Thanks to the alternative representation of multi-controlled gates, one can derive, in $\QCancilla$, the equation \eqref{Mstar} for any arbitrary number of controlled qubits:

\begin{proposition}\label{prop:Kstar}
  Equation \eqref{Mstar} can be derived in $\QCancilla$.
\end{proposition}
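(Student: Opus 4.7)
The plan is to proceed by induction on the arity $n$ of the equation \eqref{Mstar} (the number of qubits in the circuits on each side), with the base case $n=2$ supplied by the new axiom \eqref{M2}, and the inductive step powered by the alternative representation of multi-controlled gates from \cref{prop:mctrlaltdef}.

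\textbf{Base case.} For $n=2$, there are no control qubits on top of the two target qubits, so \eqref{Mstar} is exactly \eqref{M2}, which is an axiom of $\QCancilla$.

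\textbf{Inductive step.} Assume \eqref{Mstar} is derivable in $\QCancilla$ at arity $n-1$, and consider an instance at arity $n\ge3$. On both the LHS and the RHS of \eqref{Mstar}, every gate is multi-controlled by the same first $n-2$ qubits. I would first append an ancilla prepared in $\ket 0$ to both sides using \eqref{axQCANCinitdest}, which is silent both semantically and syntactically. Then I would apply \eqref{Paltdef} (for the multi-controlled phase gates) and \eqref{RXaltdef} (for the multi-controlled $X$-rotations) from \cref{prop:mctrlaltdef} to each of the multi-controlled generators on both sides. Each application replaces a gate with $n-2$ controls by a sandwich of the form $\mathrm{G}\circ H\circ \mathrm{G}$, where $\mathrm{G}$ is a Toffoli-based gadget depending only on (some of) the original control qubits and the ancilla, and $H$ is a multi-controlled gate of smaller control-arity that uses the ancilla as one of its controls.

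Crucially, the outer gadgets $\mathrm G$ produced by the decompositions do not depend on which particular gate is being decomposed — they only involve the controls and the ancilla. Consequently, between any two consecutive decomposed gates, one copy of $\mathrm G$ meets its inverse $\mathrm G$ (gadgets of this shape are self-inverse) and cancels via \cref{TOFTOF} and the other identities of \cref{fig:mctrlidentities}. After these pairwise cancellations, both sides of the resulting equation are sandwiched between a single outer pair of $\mathrm G$'s. Stripping these outer gadgets (they are the same on both sides, and \cref{prop:crossinggate} lets us cross them to the right-hand side) yields an equation whose inner parts are multi-controlled by only $n-3$ of the original controls together with the ancilla. That inner equation is precisely an instance of \eqref{Mstar} at arity $n-1$, so the induction hypothesis applies. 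Re-absorbing the outer gadgets (via another use of \cref{prop:mctrlaltdef} in the reverse direction) and finally discarding the ancilla with \eqref{axQCANCinitdest} yields \eqref{Mstar} at arity $n$.

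\textbf{Main obstacle.} The genuine work is in the bookkeeping of the inductive step: checking that the outer Toffoli-based gadgets really do cancel between successive multi-controlled gates (so that only one outer pair survives), and that the angles appearing on the inner, $(n-1)$-arity equation match the parametrisation prescribed by \eqref{Mstar} — including the uniqueness constraints placed on the $\delta_j$. The first point is routine given \cref{TOFTOF} and the commutation identities from \cref{fig:mctrlidentities}; the second point requires verifying that passing a parameter from arity $n$ to arity $n-1$ preserves the required range conditions, possibly after a final renormalisation using \cref{fig:identities,fig:mctrlidentities}. Everything else is straightforward rewriting in $\QCancilla$.
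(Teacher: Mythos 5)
Your base case and your inductive step for $n\ge4$ match the paper's strategy, but there is a genuine gap at $n=3$. The alternative representation of multi-controlled gates (\cref{prop:mctrlaltdef}, Equations \eqref{Paltdef} and \eqref{RXaltdef}) only applies to gates with \emph{at least two} control qubits: it works by merging two of the controls into a fresh ancilla via Toffoli gates, so there must be two controls to merge. In an instance of \eqref{Mstar} on $n$ qubits the gates carry $n-2$ controls, so for $n=3$ they are singly controlled and your inductive machinery simply does not fire; you cannot reduce $(\text{K}^3)$ to $(\text{K}^2)$ this way. This is not bookkeeping --- it is the hardest part of the proposition, and the paper devotes a separate, structurally different argument to it: the last two wires of the LHS of \eqref{M3} are moved onto two fresh ancillae by Fredkin (controlled-swap) gates, the resulting pattern on the ancillae is an instance of the LHS of \eqref{M2}, and the construction is then reversed. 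That argument in turn hinges on a semantic observation your proposal never has to confront: when the Fredkin gates are not triggered, the gates pushed onto the ancillae must still return them to the $\ket 0$-state, which holds for \eqref{M3} only because every gate involved is a phase gate or a singly-controlled gate (both act as the identity on $\ket 0$).

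Once $(\text{K}^3)$ is secured, the induction for $n\ge4$ via \eqref{Paltdef} and \eqref{RXaltdef}, with the Toffoli gadgets between consecutive gates cancelling by \eqref{TOFTOF} and the ancilla created and removed by \eqref{axQCANCinitdest}, proceeds essentially as you describe; your remaining worry about matching the angle constraints is harmless, since the induction hypothesis is applied to an instance of \eqref{Mstar} whose RHS parameters are subject to the same normalisation at every arity.
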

\begin{proof}
  Let $(\text{K}^n)$ be \cref{Mstar} acting on $n$ qubits for any $n\ge2$. \cref{M2} is in $\QCancilla$. We first prove that \eqref{M3} can be derived from \eqref{M2} by defining the Fredkin gate (or controlled-swap gate) and by pushing the two last wires of the LHS circuit of \eqref{M3} into two fresh ancillae, which allow us to apply \eqref{M2} and reverse the construction to get the RHS circuit of \eqref{M3}. The detailed proof is given in \ref{sec:proof-M3} together with all necessary intermediate derivations. This technique is not applicable in the general case for any circuit because if the Fredkin gates are not triggered, it could be the case that the gates pushed into the ancillae do not release the ancillae into the $\ket{0}$-state. The key observation is that this is possible for \eqref{M3} as every involved gates are either phase gate or uniquely controlled gate (which both act as identity on the $\ket{0}$-state). Then, we prove that $(\text{K}^n)$ is derivable in $\QCancilla$ for any $n\ge4$ by induction on $n$ using the alternative definition of multi-controlled gates (Proposition \ref{prop:mctrlaltdef}), which allows us to construct an instance of the LHS circuit of $(\text{K}^{n-1})$ from the LHS circuit of $(\text{K}^n)$. The detailed proof is given in \ref{appendix:inductionMstar}.
\end{proof}

We are now ready to prove the completeness of $\QCancilla$:

\begin{theorem}[Completeness]
  The equational theory $\QCancilla$, defined in Figure \ref{fig:QCancillaaxioms}, is complete for $\propQCancilla$-circuits.
\end{theorem}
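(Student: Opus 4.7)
The plan is to reduce the completeness of $\QCancilla$ to that of $\QCiso$ (\cref{thm:QCiso-completeness}), using axiom \eqref{axQCANCinitdest} to dispose of the extra $\gdest$ generators.

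First, I would establish a normal-form lemma: every $\propQCpreancilla$-circuit $C : m \to n$ is provably equal in $\QCancilla$ to one of the shape $(I_n \otimes \gdest^{\otimes l}) \circ U \circ (I_m \otimes \ginit^{\otimes k})$ for some vanilla $U \in \propQC$ with $m + k = n + l$, where $I_p$ denotes the identity on $p$ wires. This is essentially automatic from the prop structure: since $\ginit$ has no inputs it can be pulled to the top of the circuit, while $\gdest$ has no outputs and can be pushed to the bottom, with the intermediate routing absorbed into $U$ via swap gates.

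Second, given two $\propQCancilla$-circuits $C_1, C_2$ with $\interp{C_1} = \interp{C_2} = V$ an isometry, I put each in normal form with parameters $(k_i, l_i)$. Since $k_i - l_i = n - m$ is fixed, using \eqref{axQCANCinitdest} in the reverse direction (inserting a $\gdest \circ \ginit$ pair on a fresh wire) lets me pad each $U_i$ with identity wires so that the normal forms share a common $(k, l)$. Because $V$ is an isometry, a short norm-preservation argument then forces $\interp{U_i}(\ket{\phi} \otimes \ket{0}^{\otimes k}) = V\ket{\phi} \otimes \ket{0}^{\otimes l}$ for every $\ket{\phi}$: any component of the post-$U_i$ state orthogonal to $\ket{0}^{\otimes l}$ on the destroyed qubits would strictly decrease the norm after projection, contradicting isometry of $V$. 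In particular, the two $\propQCiso$-circuits $U_i \circ (I_m \otimes \ginit^{\otimes k})$ have equal semantics.

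By \cref{thm:QCiso-completeness} these $\propQCiso$-circuits are therefore provably equal in $\QCiso$; and every axiom of $\QCiso$ is available in $\QCancilla$, the only nontrivial case being \eqref{Mstar}, which \cref{prop:Kstar} derives from the $2$-qubit version \eqref{M2}. Congruence of $\QCancilla$ then yields $\QCancilla \vdash C_1 = C_2$ after composing both sides with $I_n \otimes \gdest^{\otimes l}$. The main technical obstacle lies in making the normal-form lemma fully rigorous and in the swap-gate bookkeeping of the padding step; the semantic forcing and the final invocation of $\QCiso$-completeness are then immediate.
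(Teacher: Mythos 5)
Your proposal is correct and follows essentially the same route as the paper: deform the circuit so that all qubit destructions occur at the end (leaving a $\propQCiso$-circuit), use the isometry of the overall semantics to force the destroyed wires into the $\ket{0}$-state, pad with initialisations so the two sides match, invoke the completeness of $\QCiso$ (transported into $\QCancilla$ via \cref{prop:Kstar}), and discharge the padding with \eqref{axQCANCinitdest}. The only differences are cosmetic bookkeeping choices (also pulling initialisations to the front, and padding with $\gdest\circ\ginit$ pairs rather than with bare initialisations on the $\propQCiso$-circuits).
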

\begin{proof}
  \cref{prop:Kstar} implies that for any $\gdest\,$-free circuits $C_1$, $C_2$, if $\QCiso \vdash C_1=C_2$ then $\QCancilla \vdash C_1=C_2$. Using deformation of circuits, any $\propQCancilla$-circuit $C:n\to m $ can be written $\tikzfigS{./qcancilla-completeness/Cancilla}$, where $C':n\to m+k$ is a $\propQCiso$-circuit. Since both $\interp C$ and $\interp {C'}$ are isometries and $\interp C = (Id\otimes \bra{0^k})\interp{C'}$, we have $\interp {C'} = \interp C\otimes \ket{0^k}$.  Given two $\propQCancilla$-circuits $C_1$, $C_2$ s.t. $\interp {C_1}=\interp{C_2}$, let $C'_1:n\to m+k$, and $C'_2:n\to m+\ell$ be the corresponding $\propQCiso$-circuits. W.l.o.g.~assume $k\le \ell$, and pad $C'_1$ with $\ell-k$ qubit initialisations: $C''_1:=C'_1\otimes (\ginit)^{\otimes \ell-k}$. We have $\interp{C''_1} =\interp{C'_2}$, so by completeness of  $\QCiso$, $\QCancilla \vdash C''_1=C'_2$, so $\QCancilla \vdash C_1\otimes (\scalebox{0.7}{\tikzfig{./qcancilla-axioms/initdest}})^{\otimes \ell-k}=C_2$. It suffices to apply \cref{axQCANCinitdest} to obtain $\QCancilla \vdash C_1=C_2$.
\end{proof}

%%%%%%%%%%%%%%%%%%%%%%%%%%%%%%%%%%%%%%%%%%%%%%%%%%%%%%%%%%%%%%%%%%%%%%%%%%%%%%%
%%%%%%%%%%%%%%%%%%%%%%%%%%%%%%%%%%%%%%%%%%%%%%%%%%%%%%%%%%%%%%%%%%%%%%%%%%%%%%%
\section{Quantum circuits with discard for completely positive map}
\label{sec:QCdiscard}
The last extension considered in this paper is the addition of a discard operator which consists in tracing out qubits. Contrary to quantum circuits with ancillae, any qubit can be discarded whatever its state is. 
Discarding a qubit is depicted by $\gdiscard$. 

\begin{definition}
  Let $\propQCground$ be the prop generated by $\gH:1\to 1$, $\gP:1\to 1$, $\gCNOT:2\to 2$, $\ginit:0\to 1$ and $\gdiscard:1\to 0$ for any $\varphi\in\R$.
\end{definition}

The ability to discard qubits implies that the evolution represented by such a circuit is not pure anymore. As a consequence the semantics is a completely positive trace-preserving (CPTP) map acting on density matrices (trace 1 positive semi-definite Hermitian matrices). Formally the new semantics is defined as follows: 
\begin{definition}[Semantics]
  For any quantum $\propQCground$-circuit $C:n\to m$, let $\CPTP{C}: \mathcal M_{2^n,2^n}(\C) \to  \mathcal M_{2^m,2^m}(\C) $ be the \emph{semantics} of $C$ inductively defined as the linear map $\CPTP{C_2\circ C_1} = \CPTP{C_2}\circ\CPTP{C_1}$; $\CPTP{C_1\otimes C_2} = \CPTP{C_1}\otimes\CPTP{C_2}$; $\CPTP{\gdiscard} = \rho \mapsto tr(\rho)$ and for any other generator $g$, $\CPTP g = \rho \mapsto \interp g \rho \interp g^\dagger$, where $tr(M)$ is the trace of the matrix $M$ and $M^\dagger$ its adjoint. 
\end{definition}

Notice that the global phase generator $\gs$ is not part of the prop anymore. If it were, its interpretation would be $\CPTP{\gs} = \rho \mapsto \interp{\gs} \rho \interp{\gs}^\dagger = e^{i\varphi}\rho e^{-i\varphi}=\rho$, which is the same as that of the empty circuit. Thus, for this model the X-rotation can simply be defined as $\tikzfigS{./gates/RXtheta}\defeq\tikzfigS{./shortcut/HPthetaH-nogphase}$ (the same definition as \cref{fig:shortcutcircuits} but without the global phase).

\begin{proposition}[Universality]
  $\propQCground$ is universal for CPTP maps.
\end{proposition}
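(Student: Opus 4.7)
The plan is to invoke the Stinespring dilation theorem: any CPTP map $\Phi: \mathcal{M}_{2^n,2^n}(\mathbb{C}) \to \mathcal{M}_{2^m,2^m}(\mathbb{C})$ admits an isometric dilation, i.e.\ there exists $k \in \mathbb{N}$ and an isometry $V: \mathbb{C}^{2^n} \to \mathbb{C}^{2^{m+k}}$ such that $\Phi(\rho) = \operatorname{tr}_k(V\rho V^\dagger)$, where $\operatorname{tr}_k$ denotes the partial trace over the last $k$ qubits. The strategy is then to realise the isometric part via a $\propQCiso$-circuit and the partial trace via discard generators.

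First, by the universality of $\propQCiso$ for isometries (stated just after Definition of $\propQCiso$ in Section~\ref{sec:QCiso}), I would obtain a $\propQCiso$-circuit $C_V: n \to m+k$ with $\interp{C_V} = V$. Since $\propQCground$ does not contain the global phase generator $\gs$, I would first observe that any occurrence of $\gs$ inside $C_V$ can be deleted without changing $\CPTP{C_V}$: replacing $\gs$ by the empty circuit modifies $\interp{C_V}$ only by a scalar factor of modulus one, and such a factor cancels in $V \rho V^\dagger$. Thus we may assume $C_V \in \propQCground$ (using only the generators $\gH, \gP, \gCNOT, \ginit$).

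Second, I would establish by a straightforward structural induction on $C_V$ that $\CPTP{C_V}(\rho) = \interp{C_V}\,\rho\,\interp{C_V}^\dagger$. The base cases follow directly from the definition $\CPTP{g} = \rho \mapsto \interp{g}\,\rho\,\interp{g}^\dagger$ for every non-discard generator; the inductive cases for sequential and tensor composition use respectively $(AB)\rho(AB)^\dagger = A(B\rho B^\dagger)A^\dagger$ and $(A\otimes B)(\rho_1\otimes\rho_2)(A\otimes B)^\dagger = A\rho_1 A^\dagger \otimes B\rho_2 B^\dagger$, combined with bilinearity to extend from product states to arbitrary density matrices.

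Finally, I would define the witnessing circuit as
\[
C_\Phi \;\defeq\; \bigl(\gI^{\otimes m} \otimes \gdiscard^{\otimes k}\bigr) \circ C_V : n \to m.
\]
Using the semantics $\CPTP{\gdiscard} = \rho \mapsto \operatorname{tr}(\rho)$, an easy induction shows $\CPTP{\gdiscard^{\otimes k}} = \operatorname{tr}$ on $k$-qubit density matrices, and the compositionality of $\CPTP{\cdot}$ under $\circ$ and $\otimes$ gives $\CPTP{C_\Phi}(\rho) = \operatorname{tr}_k\bigl(V \rho V^\dagger\bigr) = \Phi(\rho)$, as required. There is no substantive obstacle: the mild bookkeeping point is checking that $\CPTP{\gdiscard^{\otimes k}}$ really equals the partial trace over $k$ qubits, which reduces to the identity $\operatorname{tr}_{A \otimes B} = \operatorname{tr}_A \otimes \operatorname{tr}_B$. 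The whole argument relies only on Stinespring dilation and the already-established universality of $\propQCiso$ for isometries.
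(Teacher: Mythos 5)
Your proposal is correct and follows essentially the same route as the paper's own proof: Stinespring dilation to an isometry, realisation of that isometry by a $\propQCiso$-circuit via its universality, removal of global phases (which are invisible to the $\CPTP{\cdot}$ semantics), and discarding the $k$ auxiliary output qubits. The extra structural-induction details you supply are just the routine verification the paper leaves implicit.
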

\begin{proof}
  According to the Stinespring dilation lemma \cite{stinespring1955positive}, any CPTP map $F:\mathcal M_{2^n,2^n}(\C) \to  \mathcal M_{2^m,2^m}(\C)$ can be purified as an isometry $V : \C^{2^n}\to \C^{2^{m+k}}$ such that for any $\rho$, $F(\rho) = tr_k(V\rho V^\dagger)$, where $tr_k(.)$ is the partial trace of the last $k$ qubits. By universality of $\propQCiso$ there exists a circuit $C$ such that $\interp C = V$. Let $C'$ be the global-phase-free version of $C$, thus $\interp {C'} = e^{i\theta}V$. Seen as a $\propQCground$-circuit, $C'$ has the semantics $\CPTP{C'} = \rho \mapsto (e^{i\theta}V)\rho(e^{i\theta}V)^\dagger =V\rho V^\dagger$. Discarding the last $k$ qubits of $C'$ leads to a $\propQCground$-circuit implementing $F$. 
\end{proof}

The new generator and new semantics allow us to model measurements. For instance, the standard basis measurement can be obtained via:
\[\tikzfigS{./qcground-axioms/measurement-}\]
Indeed we recover the semantics of the standard basis measurement: \[\CPTPleftright{ \tikzfigS{./qcground-axioms/measurement-}} = \left(\begin{array}{cc} a&c\\b&d\end{array}\right)\mapsto  \left(\begin{array}{cc} a&0\\0&d\end{array}\right)\]
The output wire can be interpreted as a classical bit (encoded in a quantum bit), $a$ (resp. $d$) being the probability to be $0$ (resp. $1$). 
 
One can also encode classical gates, for instance the AND gate using Toffoli:
\[\tikzfigS{./qcground-axioms/And}\]
With the promise that the input is classical, i.e.~the input density matrix is $\operatorname{diag}(p_{00},p_{01},p_{10},p_{11})$ (where $p_{xy}$ is the probability for the input to be in the state $xy\in \{0,1\}^2$), the output state is $\operatorname{diag}(p_{00}+p_{01}+p_{10},p_{11})$ which corresponds to the behaviour of the AND gate.   

More generally, one can represent classically controlled computation using the $\propQCground$-circuits, allowing to reason on fault-tolerant computations, error correcting codes and measurement-based quantum computation for instance.

\begin{figure}[h!]
  \scalebox{.85}{\fbox{\begin{minipage}{1.159\textwidth}\begin{center}
    \vspace{-1em}

    \hspace{-2.5em}\begin{subfigure}{0.24\textwidth}
      \begin{align}\tag{L}\tikzfigM{./qcground-axioms/initPphi}=\tikzfigM{./qcground-axioms/initId}\end{align}
    \end{subfigure}\hspace{0em}
    \begin{subfigure}{0.24\textwidth}
      \begin{align}\label{axQCGRDHground}\tag{O}\tikzfigM{./qcground-axioms/Hdiscard}=\tikzfigM{./qcground-axioms/Iddiscard}\end{align}
    \end{subfigure}\hspace{0em}
    \begin{subfigure}{0.25\textwidth}
      \begin{align}\label{axQCGRDPground}\tag{P}\tikzfigM{./qcground-axioms/Pphidiscard}=\tikzfigM{./qcground-axioms/Iddiscard}\end{align}
    \end{subfigure}\hspace{0em}
    \begin{subfigure}{0.20\textwidth}
      \begin{align}\label{axQCGRDinitground}\tag{Q}\tikzfigM{./qcground-axioms/initdiscard}=\tikzfigM{./gates/empty}\end{align}
    \end{subfigure}
    \vspace{-.5em}

    \hspace{-2.5em}\begin{subfigure}{0.24\textwidth}
      \begin{align}\tag{M}\tikzfigM{./qciso-axioms/initCNOT}=\tikzfigM{./qciso-axioms/initIdId}\end{align}
    \end{subfigure}\hspace{0em}
    \begin{subfigure}{0.23\textwidth}
      \begin{align}\label{axQCGRDcnotground}\tag{R}\tikzfigM{./qcground-axioms/CNOTdiscard}=\tikzfigM{./qcground-axioms/IdIddiscarddiscard}\end{align}
    \end{subfigure}\hspace{0em}
    \begin{subfigure}{0.24\textwidth}
      \begin{align}\tag{C}\tikzfigM{./qcground-axioms/HH}=\tikzfigM{./gates/Id}\end{align}
    \end{subfigure}\hspace{0em}
    \begin{subfigure}{0.24\textwidth}
      \begin{align}\tag{D}\tikzfigM{./qcground-axioms/P0}=\tikzfigM{gates/Id}
      \end{align}
    \end{subfigure}
    \vspace{-.5em}

    \hspace{-2em}\begin{subfigure}{0.30\textwidth}
      \begin{align}\tag{E}\tikzfigM{./qc-axioms/CNOT12CNOT21CNOT12}=\tikzfigM{./qc-axioms/SWAP}\end{align}
    \end{subfigure}\hspace{-.2em}
    \begin{subfigure}{0.31\textwidth}
      \begin{align}\tag{F}\tikzfigM{./qc-axioms/CNOT13CNOT23}=\tikzfigM{./qc-axioms/CNOT12CNOT23CNOT12}\end{align}
    \end{subfigure}\hspace{-.2em}
    \begin{subfigure}{0.33\textwidth}
      \begin{align}\tag{G}\tikzfigM{./qc-axioms/CNOTPphiCNOT}=\tikzfigM{./qc-axioms/PphiId}\end{align}
    \end{subfigure}
    \vspace{-.5em}

    \hspace{-2em}\begin{subfigure}{0.43\textwidth}
      \begin{align}\tag{H}\tikzfigM{./qc-axioms/H2CNOTH2}=\tikzfigM{./qc-axioms/CZ}\end{align}
    \end{subfigure}\hspace{3em}
    \begin{subfigure}{0.37\textwidth}
      \begin{align}\tag{I}\tikzfigM{./qc-axioms/H}=\tikzfigM{./qc-axioms/eulerH}\end{align}
    \end{subfigure}
    \vspace{-.5em}

    \hspace{-2em}\begin{subfigure}{0.61\textwidth}
      \begin{align}\label{axQCGRDeuler}\tag{J'}\tikzfigM{./qcground-axioms/euler-left}=\tikzfigM{./qcground-axioms/euler-right}\end{align}
    \end{subfigure}

    \hspace{-2em}\begin{subfigure}{0.92\textwidth}
      \begin{align}\tag{$\text{K}^2$}\tikzfigM{./qcancilla-axioms/M2-left}=\tikzfigM{./qcancilla-axioms/M2-right-simp}\end{align}
    \end{subfigure}

    \vspace{.5em}
  \end{center}\end{minipage}}}
  \caption{Equational theory $\QCground$. It contains all the equations of $\QCancilla$ except Equations \eqref{axQCgphaseempty}, \eqref{axQCgphaseaddition}, \eqref{axQCANCinitdest} and where \cref{axQCeuler} has been replaced by its global-phase free-version \cref{axQCGRDeuler}, together with Equations \eqref{axQCGRDHground}, \eqref{axQCGRDPground} (defined for any $\varphi\in\R$), \eqref{axQCGRDinitground} and \eqref{axQCGRDcnotground}, which are new equations governing the behaviour of the new generator $\gdiscard$. \label{fig:QCgroundaxioms}}
\end{figure}

While \cite{Staton2015algebraic} provides a way to get completeness for quantum circuits with measurements from a complete one for isometries, we instead use \cite{Carette2021completeness} which provides a similar result but for isometries with discard, as the latter is a little bit more atomic than measurements. This leads us to equip $\propQCground$-circuits with the equational theory $\QCground$ defined in \Cref{fig:QCgroundaxioms}, which is a global-phase-free version of $\QCancilla$ where $\gdiscard$ replaces $\dashv$, and with the addition of:
\begin{gather*}
  \tikzfigS{./qcground-axioms/Hdiscard}=\tikzfigS{./qcground-axioms/Iddiscard} \;\;(\textup{O})\hspace{3em}
  \tikzfigS{./qcground-axioms/Pphidiscard}=\tikzfigS{./qcground-axioms/Iddiscard} \;\;(\textup{P})\hspace{3em}
  \tikzfigS{./qcground-axioms/initdiscard}=\tikzfigS{./gates/empty} \;\;(\textup{Q})\hspace{3em}
  \tikzfigS{./qcground-axioms/CNOTdiscard}=\tikzfigS{./qcground-axioms/IdIddiscarddiscard} \;\;(\textup{R})
\end{gather*}

This observation allows us in particular to transport all the proofs using $\QCancilla$ into the present theory, the only two differences being that $\gdiscard$ plays the role of $\dashv$ and that the $\QCground$ version of the proofs have no global phase $\gs$.

\begin{theorem}[Completeness]
  The equational theory $\QCground$, defined in \cref{fig:QCgroundaxioms}, is complete for $\propQCground$-circuits.
\end{theorem}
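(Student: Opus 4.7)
The plan is to lift the completeness of $\QCancilla$ to the CPTP setting by applying the recipe of \cite{Carette2021completeness}, exploiting the observation already made right before the statement: every derivation in $\QCancilla$ transports to $\QCground$, with $\gdiscard$ playing the role of $\dashv$ and the absence of global phases compensated by using \eqref{axQCGRDeuler} in place of \eqref{axQCeuler}.

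First, I would establish completeness of the $\gdiscard\,$-free fragment of $\propQCground$ for isometries \emph{modulo global phase}. This fragment coincides with a global-phase-free version of $\propQCancilla$. Since CPTP semantics quotients out global phases, and since \eqref{axQCGRDeuler} is just \eqref{axQCeuler} with the $\gs$ erased, the proofs of \cref{thm:QCiso-completeness} and \cref{prop:Kstar} go through verbatim modulo this quotient. The outcome is that two $\gdiscard\,$-free $\propQCground$-circuits implementing the same isometry can be proved equal in $\QCground$.

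Second, using axioms \eqref{axQCGRDHground}, \eqref{axQCGRDPground}, \eqref{axQCGRDcnotground}, together with \eqref{axQCGRDinitground}, I would show that every $\propQCground$-circuit $C : n \to m$ can be rewritten in $\QCground$ into a normal form $(\gI^{\otimes m} \otimes \gdiscard^{\otimes k}) \circ C'$ with $C' : n \to m+k$ in the $\gdiscard\,$-free fragment. Concretely, one pushes each discard rightwards: axioms \eqref{axQCGRDHground}, \eqref{axQCGRDPground} and \eqref{axQCGRDcnotground} commute a discard past any unitary generator, while \eqref{axQCGRDinitground} cancels $\ginit$--$\gdiscard$ pairs that meet during the push.

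Third, given $C_1, C_2$ with $\CPTP{C_1}=\CPTP{C_2}$, put them in normal form with $\gdiscard\,$-free parts $C'_i:n\to m+k_i$. After padding with fresh $\ginit$--$\gdiscard$ pairs one may assume $k_1=k_2=:k$. By Stinespring's essential-uniqueness theorem, the two isometries $\interp{C'_1}$ and $\interp{C'_2}$ differ by a unitary $W$ acting on the last $k$ output qubits (and a global phase, which is irrelevant for the CPTP semantics). Taking any $\propQCancilla$-realisation of $W$, the completeness of Step~1 gives $\QCground\vdash C'_1=(\gI^{\otimes m}\otimes W)\circ C'_2$, and then Step~2 applied in reverse absorbs $W$ into the trailing $\gdiscard\,$'s, yielding $\QCground\vdash C_1=C_2$.

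The main obstacle will be the rigorous normalisation in Step~2 in the presence of arbitrary circuit deformations, together with the careful check in Step~1 that the $\QCancilla$ proof indeed survives the removal of $\gs$ once \eqref{axQCeuler} is replaced by \eqref{axQCGRDeuler}. Both points are essentially bookkeeping but are precisely what the abstract framework of \cite{Carette2021completeness} is designed to handle, so invoking it avoids reproving these normalisation lemmas from scratch.
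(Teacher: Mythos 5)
Your proposal is correct and follows essentially the same route as the paper: the paper applies the discard construction of \cite{Carette2021completeness} to $\QCiso$ as a black box (its Proposition~2 is exactly your Steps~2--3, i.e.\ deformation-based normalisation of discards to the outputs plus Stinespring essential uniqueness), and then checks that the added equations $\gdiscard^{\otimes m}\circ U=\gdiscard^{\otimes n}$ follow from \eqref{axQCGRDHground}, \eqref{axQCGRDPground}, \eqref{axQCGRDinitground}, \eqref{axQCGRDcnotground}, and that \eqref{Mstar} transports from the $\QCancilla$ derivation with $\ginit$--$\gdiscard$ cancellation replacing \eqref{axQCANCinitdest}. The only slight imprecisions are that sliding discards to the end needs only prop deformation (the axioms \eqref{axQCGRDHground}, \eqref{axQCGRDPground}, \eqref{axQCGRDcnotground} are instead what absorb the residual Stinespring unitary $W$), and that the discard-free fragment is a phase-free $\propQCiso$ rather than $\propQCancilla$; neither affects correctness.
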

\begin{proof}
  We can use the \emph{discard construction} \cite{Carette2021completeness} to build $\QCiso^{\ground}$ from $\QCiso$, by adding equation:
  \begin{equation}\label{eq:discard-iso}
  \gdiscard^{\otimes m}\circ U = \gdiscard^{\otimes n}
  \end{equation}
  for any $\propQCiso$-circuit $U:n\to m$. The discard construction guarantees that $\QCiso^{\ground}$ is complete for CPTP maps (Proposition 2 in \cite{Carette2021completeness}). It remains to prove that all equations in $\QCiso^{\ground}$ derive from those of $\QCground$. All equations of the former except Equations \eqref{Mstar} and \eqref{eq:discard-iso} appear in $\QCground$. Those are trivially derivable. As mentioned above, it is possible to prove \eqref{Mstar} from $\QCground$ exactly as in the case of $\QCancilla$ by replacing each occurrence of $\tikzfigS{./qcground-axioms/initdest}=\tikzfigS{./qcground-axioms/empty}$ by $\tikzfigS{./qcground-axioms/initdiscard}=\tikzfigS{./gates/empty}$.
  This means all the equations of $\QCiso$ are derivable. Finally, all the equations $\gdiscard^{\otimes m}\circ U = \gdiscard^{\otimes n}$ for different isometries $U$ can be derived from Equations \eqref{axQCGRDHground}, \eqref{axQCGRDPground}, \eqref{axQCGRDinitground}, and \eqref{axQCGRDcnotground}.
\end{proof}

%%%%%%%%%%%%%%%%%%%%%%%%%%%%%%%%%%%%%%%%%%%%%%%%%%%%%%%%%%%%%%%%%%%%%%%%%%%%%%%
%%%%%%%%%%%%%%%%%%%%%%%%%%%%%%%%%%%%%%%%%%%%%%%%%%%%%%%%%%%%%%%%%%%%%%%%%%%%%%%
\section{Concluding remarks}
We have simplified the complete equational theory for quantum circuits, and provided ones for standard extensions of quantum circuits, including qubit initialisation, ancillae, and/or qubit discarding. The equational theory can be simplified in these more general settings, leading in particular to equations acting on a bounded number of qubits, avoiding the use of controlled gates on arbitrary number of qubits. It is interesting to notice that increasing the expressive power of the model makes the equational theory simpler.

This simplification of the equational theory is a step towards a minimal equational theory (i.e. an equational theory where each equation provably cannot be derived from the other ones). Notice that based on the present work, a minimal complete equational theory for vanilla quantum circuits has been introduced recently \cite{clement2023minimal}, showing in particular that equations acting on an unbounded number of qubits are necessary for vanilla quantum circuits. The question of the minimality in the context of quantum circuits with qubit initialisation, ancillae, and/or qubit discarding remains open. 

Getting rid of \cref{Mstar} eases also the practical implementation of the rewriting rules as it avoids to consider a family of rules acting on an unbounded of qubits. Notice that regarding practical considerations, various equations presented in this paper have parameters, e.g. $\tikzfigS{./qcground-axioms/euler-left}=\tikzfigS{./qcground-axioms/euler-right}$ that should be read as follows: for any angle $\alpha_i$ on the LHS, there exist $\beta_j$ on the RHS so that the equation holds. $\beta_j$ can be computed using fairly simple trigonometric operations. Notice that even if the equation looks non-symmetric, one can show conversely that for any $\beta_j$ there exist $\alpha_i$ angles such that the equation holds (see \cref{inverseEuler}).

%%%%%%%%%%%%%%%%%%%%%%%%%%%%%%%%%%%%%%%%%%%%%%%%%%%%%%%%%%%%%%%%%%%%%%%%%%%%%%%
%%%%%%%%%%%%%%%%%%%%%%%%%%%%%%%%%%%%%%%%%%%%%%%%%%%%%%%%%%%%%%%%%%%%%%%%%%%%%%%
\section*{Acknowledgements}
This work is supported by the Plan France 2030 through the PEPR integrated project EPiQ ANR-22-PETQ-0007 and the HQI initiative ANR-22-PNCQ-0002; it is also supported by the ANR project SoftQPro ANR-17-CE25-0009-02, by the STIC-AmSud project Qapla’ 21-STIC-10, and by the European projects NEASQC and HPCQS.

%%%%%%%%%%%%%%%%%%%%%%%%%%%%%%%%%%%%%%%%%%%%%%%%%%%%%%%%%%%%%%%%%%%%%%%%%%%%%%%
%%%%%%%%%%%%%%%%%%%%%%%%%%%%%%%%%%%%%%%%%%%%%%%%%%%%%%%%%%%%%%%%%%%%%%%%%%%%%%%
\bibliography{ref}

\appendix

\crefalias{section}{appendix}
\crefalias{subsection}{appendix}
\crefalias{subsubsection}{appendix}
\crefalias{paragraph}{appendix}
\crefalias{subparagraph}{appendix}

\section*{Appendix content}

\startcontents[sections]
\printcontents[sections]{l}{1}{\setcounter{tocdepth}{2}}

\section{Bullet-based graphical notation for multi-controlled gates}
\label{appendix:bullet}
We use the standard bullet-based graphical notation for multi-controlled gates where the \emph{negative control} (or \emph{anti-control}) \tikzfigS{./bulletnotation/w} is a shortcut notation for \tikzfigS{./bulletnotation/XbX}. For instance, \tikzfigS{./bulletnotation/bwPb} stands for the gate \tikzfigS{./gates/Pphi} on the third qubit positively controlled by the first and fourth qubits and negatively controlled by the second qubit. According to \cite{CHMPV} we can simulate the expected behaviour of this bullet-based notation in $\QC$ without using \cref{Mstar}.

Combining a control and anti-control on the same qubit makes the evolution independent of this qubit. This is provable in $\QC$ without \eqref{Mstar} and illustrated by the following example.
\begin{equation*}
  \begin{array}{rcl}
    \tikzfigS{./bulletnotation/bwPwbbPw}&=&\tikzfigS{./bulletnotation/bvPw}
  \end{array}
\end{equation*}

Another expected behaviour provable in $\QC$ without \eqref{Mstar} is the fact that controlled and anti-controlled gates commute (even if the target qubits are not the same in both gates). This is illustrated by the two following examples.
\begin{equation*}
  \begin{array}{rcl}
    \tikzfigS{./bulletnotation/bbXbbbPw}&=&\tikzfigS{./bulletnotation/bbPwbbXb}
  \end{array} \qquad\qquad \begin{array}{rcl}
    \tikzfigS{./bulletnotation/bXbwwbPw}&=&\tikzfigS{./bulletnotation/wbPwbXbw}
  \end{array}
\end{equation*}

In the following, the use of such properties is denoted by $(\tikzfig{./bulletnotation/wsmall})$ and refers to Propositions 15, 16 and 17 (together with Propositions 10 and 11 in some cases) of \cite{CHMPV}.

%%%%%%%%%%%%%%%%%%%%%%%%%%%%%%%%%%%%%%%%%%%%%%%%%%%%%%%%%%%%%%%%%%%%%%%%%%%%%%%
%%%%%%%%%%%%%%%%%%%%%%%%%%%%%%%%%%%%%%%%%%%%%%%%%%%%%%%%%%%%%%%%%%%%%%%%%%%%%%%
\section{Proofs of intermediate circuit equations}

\subsection{Proofs of Equations~\eqref{CNOTCNOT},\eqref{PcommutCNOT},\eqref{CNOTXX} and \eqref{CNOTscontrolcommut} of $\QCold$}\label{appendix:proofsimpleaxioms}

\begin{proof}[Proof of \cref{CNOTCNOT}]
  \begin{gather*}
    \tikzfigS{./qc-completeness/CNOTCNOT}\eqeqref{axQCP0}\tikzfigS{./qc-completeness/CNOTP0CNOT}\eqeqref{axQCcnotPcnot}\tikzfigS{./qc-completeness/P0Id}\eqeqref{axQCP0}\tikzfigS{./qc-completeness/IdId}
  \end{gather*}
\end{proof}

\begin{proof}[Proof of \cref{PcommutCNOT}]
  \begin{gather*}
    \tikzfigS{./qc-completeness/PphiCNOT}\eqeqref{CNOTCNOT}\tikzfigS{./qc-completeness/CNOTCNOTPphiCNOT}\eqeqref{axQCcnotPcnot}\tikzfigS{./qc-completeness/CNOTPphi}
  \end{gather*}
\end{proof}

\begin{proof}[Proof of \cref{CNOTscontrolcommut}]
  \begin{gather*}
    \tikzfigS{./qc-completeness/CNOTscontrolcommut-v2-step-0}
    \eqeqref{CNOTCNOT}\tikzfigS{./qc-completeness/CNOTscontrolcommut-v2-step-1}
    \eqeqref{axQC3cnots}\tikzfigS{./qc-completeness/CNOTscontrolcommut-v2-step-2}
    \eqeqref{CNOTCNOT}\tikzfigS{./qc-completeness/CNOTscontrolcommut-v2-step-3}
  \end{gather*}
\end{proof}

\begin{figure}[h!]
  \scalebox{.85}{\fbox{\begin{minipage}{1.159\textwidth}\begin{center}
    \vspace{-1em}
    
    \hspace{-2.5em}\begin{subfigure}{0.41\textwidth}
      \begin{align}\label{Paddition}\tikzfigM{./identities/Pphi1Pphi2}=\tikzfigM{./identities/Pphi1phi2}\end{align}
    \end{subfigure}\hspace{3em}
    \begin{subfigure}{0.37\textwidth}
      \begin{align}\label{XPX}\tikzfigM{./identities/XPphiX}=\tikzfigM{./identities/Pminusphi}\end{align}
    \end{subfigure}
    \vspace{-.5em}

    \hspace{-2.5em}\begin{subfigure}{0.25\textwidth}
      \begin{align}\label{XX}\tikzfigM{./identities/XX-step-0}=\tikzfigM{./identities/XX-step-3}\end{align}
    \end{subfigure}\hspace{3em}
    \begin{subfigure}{0.26\textwidth}
      \begin{align}\label{P2pi}\tikzfigM{./identities/P2pi}=\tikzfigM{./gates/Id}\end{align}
    \end{subfigure}
    \vspace{-.5em}

    \hspace{-2.5em}\begin{subfigure}{0.41\textwidth}
      \begin{align}\label{Pphasegadget}\tikzfigM{./identities/Pphasegadget-step-0}=\tikzfigM{./identities/Pphasegadget-step-4}\end{align}
    \end{subfigure}\hspace{.5em}
    \begin{subfigure}{0.29\textwidth}
      \begin{align}\label{CNOTHH}\tikzfigM{./identities/CNOTHH-step-0}=\tikzfigM{./identities/CNOTHH-step-5}\end{align}
    \end{subfigure}\hspace{.5em}
    \begin{subfigure}{0.29\textwidth}
      \begin{align}\label{XcommutCNOT}\tikzfigM{./identities/XcommutCNOT-step-0}=\tikzfigM{./identities/XcommutCNOT-step-5}\end{align}
    \end{subfigure}

    \vspace{.5em}
  \end{center}\end{minipage}}}
\end{figure}

\begin{proof}[Proof of Equations~\eqref{Paddition},\eqref{XPX},\eqref{XX},\eqref{P2pi},\eqref{Pphasegadget},\eqref{CNOTHH} and \eqref{XcommutCNOT}]
  It has been proven in Proposition 20 of \cite{CHMPV} that Equations~\eqref{Paddition} and \eqref{XPX} are consequences of \cref{axQCeuler} in $\QCold$. The derivations still holds in $\QC$. \cref{XX} is a direct consequence of Equations~\eqref{XPX}, \eqref{axQCP0}, \eqref{axQCgphaseempty} and \eqref{axQCgphaseaddition}. \cref{P2pi} is a direct consequence of Equations~\eqref{XX}, \eqref{axQCHH} and \eqref{Paddition}. Equations~\eqref{Pphasegadget},\eqref{CNOTHH} and \eqref{XcommutCNOT} are poved as follows.
  \begin{gather*}
    \tikzfigS{./identities/Pphasegadget-step-0}
    =\tikzfigS{./identities/Pphasegadget-step-1}
    % \eqeqref{axQCswap}\tikzfigS{./identities/Pphasegadget-step-2}
    \eqdeuxeqref{axQCswap}{CNOTCNOT}\tikzfigS{./identities/Pphasegadget-step-3}
    \eqeqref{axQCcnotPcnot}\tikzfigS{./identities/Pphasegadget-step-4}
  \end{gather*}
  \begin{gather*}
    \tikzfigS{./identities/CNOTHH-step-0}
    % \eqeqref{axQCHH}\tikzfigS{./identities/CNOTHH-step-1}
    \eqdeuxeqref{axQCHH}{axQCCZ}\tikzfigS{./identities/CNOTHH-step-2}
    \eqeqref{Pphasegadget}\tikzfigS{./identities/CNOTHH-step-3}
    \eqeqref{axQCCZ}\tikzfigS{./identities/CNOTHH-step-4}
    \eqeqref{axQCHH}\tikzfigS{./identities/CNOTHH-step-5}
  \end{gather*}
  \begin{gather*}
    \tikzfigS{./identities/XcommutCNOT-step-0}
    \eqdeuxeqref{Xdef}{axQCHH}\tikzfigS{./identities/XcommutCNOT-step-1}
    \eqeqref{CNOTHH}\tikzfigS{./identities/XcommutCNOT-step-2}
    \eqdeuxeqref{Zdef}{PcommutCNOT}\tikzfigS{./identities/XcommutCNOT-step-3}
    \eqeqref{CNOTHH}\tikzfigS{./identities/XcommutCNOT-step-4}
    \eqdeuxeqref{Xdef}{axQCHH}\tikzfigS{./identities/XcommutCNOT-step-5}
    \end{gather*}
\end{proof}

\begin{proof}[Proof of \cref{CNOTXX}]
  \begin{gather*}
    \tikzfigS{./qc-completeness/XCNOTXX-step-0}
    \eqeqref{axQCHH}\tikzfigS{./qc-completeness/XCNOTXX-step-1}
    \eqeqref{axQCCZ}\tikzfigS{./qc-completeness/XCNOTXX-step-2}
    \eqtroiseqref{Pphasegadget}{Xdef}{axQCHH}\tikzfigS{./qc-completeness/XCNOTXX-step-3}\\[0.4cm]
    % \eqeqref{PcommutCNOT}\tikzfigS{./qc-completeness/XCNOTXX-step-4}
    \eqeqref{PcommutCNOT}\tikzfigS{./qc-completeness/XCNOTXX-step-5}
    \eqdeuxeqref{Paddition}{P2pi}\tikzfigS{./qc-completeness/XCNOTXX-step-6}\\[0.4cm]
    \eqeqref{XPX}\tikzfigS{./qc-completeness/XCNOTXX-step-7}
    \eqdeuxeqref{axQCgphaseaddition}{axQCgphaseempty}\tikzfigS{./qc-completeness/XCNOTXX-step-8}\\[0.4cm]
    \eqdeuxeqref{XcommutCNOT}{XX}\tikzfigS{./qc-completeness/XCNOTXX-step-9}
    \eqeqref{Pphasegadget}\tikzfigS{./qc-completeness/XCNOTXX-step-10}\\[0.4cm]
    \eqdeuxeqref{axQCHH}{CNOTCNOT}\tikzfigS{./qc-completeness/XCNOTXX-step-11}
    \eqeqref{axQCCZ}\tikzfigS{./qc-completeness/XCNOTXX-step-12}\\[0.4cm]
    \eqdeuxeqref{Pphasegadget}{PcommutCNOT}\tikzfigS{./qc-completeness/XCNOTXX-step-13}
    \eqtroiseqref{Paddition}{axQCP0}{CNOTCNOT}\tikzfigS{./qc-completeness/XCNOTXX-step-14}
    \eqeqref{axQCHH}\tikzfigS{./qc-completeness/XCNOTXX-step-15}
  \end{gather*}
\end{proof}

\subsection{Proofs of usual circuit identities}
\label{appendix:proofsidentities}

\begin{figure}[h!]
  \scalebox{.85}{\fbox{\begin{minipage}{1.159\textwidth}\begin{center}
    \vspace{-1em}
    
    \hspace{-2.5em}\begin{subfigure}{0.27\textwidth}
      \begin{align}\label{ZZ}\tikzfigM{./identities/ZZ-step-0}=\tikzfigM{./identities/ZZ-step-4}\end{align}
    \end{subfigure}\hspace{-.2em}
    \begin{subfigure}{0.65\textwidth}
      \begin{align}\label{inverseEuler}\tikzfigM{./identities/inverseEuler-left}=\tikzfigM{./identities/inverseEuler-right}\end{align}
    \end{subfigure}
    \vspace{-.5em}

    \hspace{-2.7em}\begin{subfigure}{0.26\textwidth}
      \begin{align}\label{RX0}\tikzfigM{./identities/RX0-step-0}=\tikzfigM{./identities/RX0-step-3}\end{align}
    \end{subfigure}\hspace{-1em}
    \begin{subfigure}{0.43\textwidth}
      \begin{align}\label{RXaddition}\tikzfigM{./identities/RXaddition-step-0}=\tikzfigM{./identities/RXaddition-step-4}\end{align}
    \end{subfigure}\hspace{-1em}
    \begin{subfigure}{0.38\textwidth}
      \begin{align}\label{ZRXZ}\tikzfigM{./identities/ZRXthetaZ}=\tikzfigM{./identities/RXminustheta}\end{align}
    \end{subfigure}
    \vspace{-.5em}

    \hspace{-2.5em}\begin{subfigure}{0.44\textwidth}
      \begin{align}\label{HeulerRXPRX}\tikzfigM{./gates/H}=\tikzfigM{./proof-n/HeulerRXPRX}\end{align}
    \end{subfigure}
    \vspace{-.5em}

    \hspace{-2.8em}\begin{subfigure}{0.36\textwidth}
      \begin{align}\label{RXcommutCNOT}\tikzfigM{./identities/RXcommutCNOT-step-0}=\tikzfigM{./identities/RXcommutCNOT-step-5}\end{align}
    \end{subfigure}\hspace{-1em}
    \begin{subfigure}{0.40\textwidth}
      \begin{align}\label{RXphasegadget}\tikzfigM{./identities/RXphasegadget-step-0}=\tikzfigM{./identities/RXphasegadget-step-5}\end{align}
    \end{subfigure}\hspace{-1em}
    \begin{subfigure}{0.30\textwidth}
      \begin{align}\label{CNOTZZ}\tikzfigM{./identities/CNOTZZ-step-0}=\tikzfigM{./identities/CNOTZZ-step-5}\end{align}
    \end{subfigure}
    \vspace{-.5em}

    \hspace{-2.5em}\begin{subfigure}{0.29\textwidth}
      \begin{align}\label{CNOTstargetcommut}\tikzfigM{./identities/CNOTstargetcommut-step-0}=\tikzfigM{./identities/CNOTstargetcommut-step-3}\end{align}
    \end{subfigure}\hspace{3em}
    \begin{subfigure}{0.32\textwidth}
      \begin{align}\label{3CNOTscontrol}\tikzfigM{./identities/3CNOTscontrol-step-0}=\tikzfigM{./identities/3CNOTscontrol-step-3}\end{align}
    \end{subfigure}

    \vspace{.5em}
  \end{center}\end{minipage}}}
\end{figure}

\begin{proof}[Proof of \cref{ZZ}]
  \begin{gather*}
    \tikzfigS{./identities/ZZ-step-0}
    \eqeqref{Xdef}\tikzfigS{./identities/ZZ-step-1}
    \eqeqref{axQCHH}\tikzfigS{./identities/ZZ-step-2}
    \eqeqref{XX}\tikzfigS{./identities/ZZ-step-3}\eqeqref{axQCHH}\tikzfigS{./identities/ZZ-step-4}
  \end{gather*}
\end{proof}

\begin{proof}[Proof of \cref{inverseEuler}]
  \begin{eqnarray*}
    \tikzfigS{./identities/inverseEuler-left}&\eqeqref{RXdef}&\tikzfigS{./identities/inverseEuler-step-1}\\[0.2cm]
    &\eqeqref{axQCHH}&\tikzfigS{./identities/inverseEuler-step-2}\\[0.2cm]
    &\eqeqref{axQCeuler}&\tikzfigS{./identities/inverseEuler-step-3}\\[0.2cm]
    &\eqeqref{RXdef}&\tikzfigS{./identities/inverseEuler-step-4}\\[0.2cm]
    &\eqeqref{axQCHH}&\tikzfigS{./identities/inverseEuler-right}
  \end{eqnarray*}
  With $\alpha_0\defeq\frac{\alpha_1-\alpha_2+\alpha_3}{2}$ and $\beta_0\defeq\alpha_0+\beta_0'+\frac{\beta_1-\beta_2+\beta_3}{2}$.
\end{proof}

\begin{proof}[Proof of \cref{RX0}]
  \begin{gather*}
    \tikzfigS{./identities/RX0-step-0}
    \eqeqref{RXdef}\tikzfigS{./identities/RX0-step-1}
    \eqdeuxeqref{axQCP0}{axQCgphaseempty}\tikzfigS{./identities/RX0-step-2}
    \eqeqref{axQCHH}\tikzfigS{./identities/RX0-step-3}
  \end{gather*}
\end{proof}

\begin{proof}[Proof of \cref{RXaddition}]
  \begin{gather*}
    \tikzfigS{./identities/RXaddition-step-0}
    \eqeqref{RXdef}\tikzfigS{./identities/RXaddition-step-1}
    \eqeqref{axQCHH}\tikzfigS{./identities/RXaddition-step-2}\\[0.2cm]
    \eqeqref{Paddition}\tikzfigS{./identities/RXaddition-step-3}
    \eqeqref{RXdef}\tikzfigS{./identities/RXaddition-step-4}
  \end{gather*}
\end{proof}

\begin{proof}[Proof of \cref{ZRXZ}]
  \begin{gather*}
    \tikzfigS{./identities/ZRXZ-step-0}
    \eqdeuxeqref{Xdef}{RXdef}\tikzfigS{./identities/ZRXZ-step-1}
    \eqeqref{axQCHH}\tikzfigS{./identities/ZRXZ-step-2}\\[0.2cm]
    \eqeqref{XPX}\tikzfigS{./identities/ZRXZ-step-3}
    \eqeqref{RXdef}\tikzfigS{./identities/ZRXZ-step-4}
    \eqdeuxeqref{axQCgphaseaddition}{axQCgphaseempty}\tikzfigS{./identities/ZRXZ-step-5}
  \end{gather*}
\end{proof}

\begin{proof}[Proof of \cref{HeulerRXPRX}]
  \begin{eqnarray*}
    \tikzfigS{./gates/H}&\eqquatreeqref{axQCP0}{Paddition}{RX0}{RXaddition}&\tikzfigS{./proof-n/HeulerRXPRX-step-1}\\[0.2cm]
    &\eqeqref{axQCHeuler}&\tikzfigS{./proof-n/HeulerRXPRX-step-2}\\[0.2cm]
    &\eqeqref{axQCHH}&\tikzfigS{./proof-n/HeulerRXPRX-step-3}\\[0.2cm]
    &\eqeqref{RXdef}&\tikzfigS{./proof-n/HeulerRXPRX-step-4}\\[0.2cm]
    &\eqeqref{axQCHeuler}&\tikzfigS{./proof-n/HeulerRXPRX-step-5}\\[0.2cm]
    &\eqdeuxeqref{Paddition}{axQCP0}&\tikzfigS{./proof-n/HeulerRXPRX}
  \end{eqnarray*}
\end{proof}

\begin{proof}[Proof of \cref{RXcommutCNOT}]
  \begin{gather*}
    \tikzfigS{./identities/RXcommutCNOT-step-0}
    \eqdeuxeqref{RXdef}{axQCHH}\tikzfigS{./identities/RXcommutCNOT-step-1}
    \eqeqref{CNOTHH}\tikzfigS{./identities/RXcommutCNOT-step-2}
    \eqeqref{PcommutCNOT}\tikzfigS{./identities/RXcommutCNOT-step-3}\\[0.4cm]
    \eqeqref{CNOTHH}\tikzfigS{./identities/RXcommutCNOT-step-4}
    \eqdeuxeqref{RXdef}{axQCHH}\tikzfigS{./identities/RXcommutCNOT-step-5}
  \end{gather*}
\end{proof}

\begin{proof}[Proof of \cref{RXphasegadget}]
  \begin{gather*}
    \tikzfigS{./identities/RXphasegadget-step-0}
    \eqdeuxeqref{RXdef}{axQCHH}\tikzfigS{./identities/RXphasegadget-step-1}
    \eqeqref{CNOTHH}\tikzfigS{./identities/RXphasegadget-step-2}
    \eqeqref{Pphasegadget}\tikzfigS{./identities/RXphasegadget-step-3}\\[0.4cm]
    \eqeqref{CNOTHH}\tikzfigS{./identities/RXphasegadget-step-4}
    \eqdeuxeqref{RXdef}{axQCHH}\tikzfigS{./identities/RXphasegadget-step-5}
  \end{gather*}
\end{proof}

\begin{proof}[Proof of \cref{CNOTZZ}]
  \begin{gather*}
    \tikzfigS{./identities/CNOTZZ-step-0}
    \eqdeuxeqref{Xdef}{axQCHH}\tikzfigS{./identities/CNOTZZ-step-1}
    \eqeqref{CNOTHH}\tikzfigS{./identities/CNOTZZ-step-2}
    \eqdeuxeqref{XX}{CNOTXX}\tikzfigS{./identities/CNOTZZ-step-3}
    \eqeqref{CNOTHH}\tikzfigS{./identities/CNOTZZ-step-4}
    \eqeqref{Xdef}\tikzfigS{./identities/CNOTZZ-step-5}
  \end{gather*}
\end{proof}

\begin{proof}[Proof of \cref{CNOTstargetcommut}]
  \begin{gather*}
    \tikzfigS{./identities/CNOTstargetcommut-step-0}
    \eqeqref{CNOTCNOT}\tikzfigS{./identities/CNOTstargetcommut-step-1}
    \eqeqref{axQC3cnots}\tikzfigS{./identities/CNOTstargetcommut-step-2}
    \eqeqref{CNOTCNOT}\tikzfigS{./identities/CNOTstargetcommut-step-3}
  \end{gather*}
\end{proof}

\begin{proof}[Proof of \cref{3CNOTscontrol}]
  \begin{gather*}
    \tikzfigS{./identities/3CNOTscontrol-step-0}
    \eqeqref{CNOTCNOT}\tikzfigS{./identities/3CNOTscontrol-step-1}
    \eqeqref{axQC3cnots}\tikzfigS{./identities/3CNOTscontrol-step-2}
    \eqeqref{CNOTCNOT}\tikzfigS{./identities/3CNOTscontrol-step-3}
  \end{gather*}
\end{proof}

\subsection{Proofs of usual circuit identities over multi-controlled gates}
\label{appendix:mctrlidentities}

\begin{figure}[h!]
  \scalebox{.85}{\fbox{\begin{minipage}{1.159\textwidth}\begin{center}
    \vspace{-.5em}
    
    \hspace{-2.5em}\begin{subfigure}{0.39\textwidth}
      \begin{align}\label{mctrlzeroid}\tikzfigM{./identities/mctrlRX0}=\tikzfigM{./identities/mctrlP0}=\tikzfigM{./identities/Idn}\end{align}
    \end{subfigure}\hspace{1.5em}
    \begin{subfigure}{0.45\textwidth}
      \begin{align}\label{commctrlphaseenhaut}\tikzfigM{./identities/mctrlPhautmctrlRXgrand}=\tikzfigM{./identities/mctrlRXgrandmctrlPhaut}\end{align}
    \end{subfigure}
    \vspace{-0em}
    
    \hspace{-2.5em}\begin{subfigure}{0.39\textwidth}
      \begin{align}\label{mctrlPaddition}\tikzfigM{./identities/mctrlPphi1Pphi2}=\tikzfigM{./identities/mctrlPphi1phi2}\end{align}
    \end{subfigure}\hspace{3em}
    \begin{subfigure}{0.43\textwidth}
      \begin{align}\label{mctrlRXaddition}\tikzfigM{./identities/mctrlRXtheta1RXtheta2}=\tikzfigM{./identities/mctrlRXtheta1theta2}\end{align}
    \end{subfigure}
    \vspace{-.5em}

    \hspace{-2.5em}\begin{subfigure}{0.40\textwidth}
      \begin{align}\label{mctrlPop}\tikzfigM{./identities/XmctrlPphiX}=\tikzfigM{./identities/mctrlPphimctrlPminusphi}\end{align}
    \end{subfigure}\hspace{3em}
    \begin{subfigure}{0.38\textwidth}
      \begin{align}\label{mctrlRXop}\tikzfigM{./identities/ZmctrlRXthetaZ}=\tikzfigM{./identities/mctrlRXminustheta}\end{align}
    \end{subfigure}
    \vspace{-.5em}

    \hspace{-2.5em}\begin{subfigure}{0.31\textwidth}
      \begin{align}\label{mctrlPlift}\tikzfigM{./identities/mctrl21Pphi}=\tikzfigM{./identities/mctrl12Pphi}\end{align}
    \end{subfigure}\hspace{0em}
    \begin{subfigure}{0.34\textwidth}
      \begin{align}\label{mctrlPSWAP}\tikzfigM{./identities/mctrl3Pphi}=\tikzfigM{./identities/mctrl3SWAPPphi}\end{align}
    \end{subfigure}\hspace{0em}
    \begin{subfigure}{0.36\textwidth}
      \begin{align}\label{mctrlRXSWAP}\tikzfigM{./identities/mctrl3RXtheta}=\tikzfigM{./identities/mctrl3SWAPRXtheta}\end{align}
    \end{subfigure}
    \vspace{-.5em}

    \hspace{-2.5em}\begin{subfigure}{0.45\textwidth}
      \begin{align}\tag{\ref{mctrlPinducdef}}\tikzfigM{./shortcut/mctrlPphi-}=\tikzfigM{./shortcut/mctrlPphidef-}\end{align}
    \end{subfigure}\hspace{3em}
    \begin{subfigure}{0.29\textwidth}
      \begin{align}\label{TOFTOF}\tikzfigM{./identities/TOFTOF}=\tikzfigM{./identities/Id3}\end{align}
    \end{subfigure}

    \vspace{.5em}
  \end{center}\end{minipage}}}
\end{figure}

\begin{proof}[Proof of Equations~\eqref{mctrlzeroid},\eqref{commctrlphaseenhaut},\eqref{mctrlPaddition},\eqref{mctrlRXaddition},\eqref{mctrlPop},\eqref{mctrlRXop},\eqref{mctrlPlift},\eqref{mctrlPSWAP} and \eqref{mctrlRXSWAP}]
  Equations \eqref{mctrlzeroid}, \eqref{mctrlPaddition} and \eqref{mctrlRXaddition} are proved in Proposition 13 of \cite{CHMPV}. Equation \eqref{commctrlphaseenhaut} is proved in Lemma 48 of \cite{CHMPV}. Equation \eqref{mctrlPop} follows directly from Lemmas 53 and 54 of \cite{CHMPV}. Equation \eqref{mctrlRXop} is proved in Lemma 47 of \cite{CHMPV}. Equation \eqref{mctrlPlift} is proved in Proposition 12 of \cite{CHMPV}. Equations \eqref{mctrlPSWAP} and \eqref{mctrlRXSWAP} are proved in Proposition 11 of \cite{CHMPV}. The proofs also hold for the equational theory $\QC$ because all the equations used are provable in $\QC$ (more precisely, one can check that that \cref{Mstarold} is not used, and that all the other equations of $\QCold$ are provable in $\QC$ without using multi-controlled gates).
\end{proof}

\begin{proof}[Proof of \cref{mctrlPinducdef}]
  In this proof \tikzfigS{./shortcut/mctrlgphasephi} denotes \tikzfigS{./shortcut/gphasephi} on $0$ qubits and \tikzfigS{./shortcut/mcrlPphi1} on one or more qubits. Thus, the multi-controlled phase gate can be expressed as \tikzfigS{./shortcut/mctrlRXdefgphase}. %We proceed by induction on the number of qubits.
  \begin{gather*}
    \tikzfigS{./identities/mctrlPphi}
    \eqeqref{mctrlPdef}\tikzfigS{./identities/mctrlPinducdef-step-1}
    \eqdeuxeqref{mctrlRXSWAP}{mctrlRXdef}\tikzfigS{./identities/mctrlPinducdef-step-2}\\[0.4cm]
    \eqdeuxeqref{axQCHH}{CNOTHH}\tikzfigS{./identities/mctrlPinducdef-step-3}
    \eqeqref{axQCHH}\tikzfigS{./identities/mctrlPinducdef-step-4}\\[0.4cm]
    %\overset{\eqref{mctrlPaddition}\refwcontrol}{=}
    \eqtroiseqref{mctrlzeroid}{mctrlPaddition}{commctrlphaseenhaut}\tikzfigS{./identities/mctrlPinducdef-step-5}
    \eqeqref{mctrlPdef}\tikzfigS{./identities/mctrlPinducdef-step-6}
  \end{gather*}
\end{proof}

\begin{proof}[Proof of \cref{TOFTOF}]
  \begin{gather*}
    \tikzfigS{./identities/TOFTOF}
    \eqeqref{TOFdef}\tikzfigS{./identities/TOFTOF-step-1}
    \eqeqref{axQCHH}\tikzfigS{./identities/TOFTOF-step-2}\\[0.4cm]
    \eqeqref{mctrlPaddition}\tikzfigS{./identities/TOFTOF-step-3}
    \eqeqref{mctrlPinducdef}\tikzfigS{./identities/TOFTOF-step-4}\\[0.4cm]
    \eqeqref{mctrlPinducdef}\tikzfigS{./identities/TOFTOF-step-5}\\[0.4cm]
    \eqdeuxeqref{ZZ}{CNOTZZ}\tikzfigS{./identities/TOFTOF-step-6}\\[0.4cm]
    \eqdeuxeqref{Zdef}{Paddition}\tikzfigS{./identities/TOFTOF-step-7}\\[0.4cm]
    \eqeqref{axQCCZ}\tikzfigS{./identities/TOFTOF-step-8}
    \eqdeuxeqref{axQCHH}{CNOTHH}\tikzfigS{./identities/TOFTOF-step-9}\\[0.4cm]
    \eqeqref{axQC3cnots}\tikzfigS{./identities/TOFTOF-step-10}
    \eqeqref{CNOTCNOT}\tikzfigS{./identities/TOFTOF-step-11}
    \eqdeuxeqref{axQCHH}{CNOTHH}\tikzfigS{./identities/TOFTOF-step-12}
    \eqdeuxeqref{CNOTCNOT}{axQCHH}\tikzfigS{./identities/TOFTOF-step-13}
  \end{gather*}
\end{proof}

\subsection{Proofs of usual circuit identities using ancillae}
\label{appendix:ancillaidentities}

\begin{figure}[h!]
  \scalebox{.85}{\fbox{\begin{minipage}{1.159\textwidth}\begin{center}
    \vspace{-1em}
    
    \hspace{-2.5em}\begin{subfigure}{0.33\textwidth}
      \begin{align}\label{ancillaCNOTpos}\tikzfigM{./identitiesancilla/initXCNOTX}=\tikzfigM{./identitiesancilla/initIdX}\end{align}
    \end{subfigure}\hspace{0em}
    \begin{subfigure}{0.26\textwidth}
      \begin{align}\label{ancillaTOFneg}\tikzfigM{./identitiesancilla/initTOFneg-step-0}=\tikzfigM{./identitiesancilla/initId3}\end{align}
    \end{subfigure}\hspace{0em}
    \begin{subfigure}{0.32\textwidth}
      \begin{align}\label{ancillaTOFpos}\tikzfigM{./identitiesancilla/initTOFpos-step-0}=\tikzfigM{./identitiesancilla/initIdCNOT}\end{align}
    \end{subfigure}
    \vspace{-.5em}

    \hspace{-2.5em}\begin{subfigure}{0.28\textwidth}
      \begin{align}\label{ancillamctrlPneg}\begin{array}{rcl}\tikzfigM{./identitiesancilla/initmctrlPneg-step-0}=\tikzfigM{./identitiesancilla/initIdIdn}\end{array}\end{align}
    \end{subfigure}\hspace{3em}
    \begin{subfigure}{0.29\textwidth}
      \begin{align}\label{ancillamctrlRXneg}\begin{array}{rcl}\tikzfigM{./identitiesancilla/initmctrlRXneg-step-0}=\tikzfigM{./identitiesancilla/initIdIdn}\end{array}\end{align}
    \end{subfigure}
    \vspace{-.5em}

    \hspace{-2.5em}\begin{subfigure}{0.36\textwidth}
      \begin{align}\label{ancillamctrlPpos}\begin{array}{rcl}\tikzfigM{./identitiesancilla/initmctrlPpos-step-0}=\tikzfigM{./identitiesancilla/initIdmctrlPphi}\end{array}\end{align}
    \end{subfigure}\hspace{.5em}
    \begin{subfigure}{0.37\textwidth}
      \begin{align}\label{ancillamctrlRXpos}\begin{array}{rcl}\tikzfigM{./identitiesancilla/initmctrlRXpos-step-0}=\tikzfigM{./identitiesancilla/initmctrlRXpos-step-9}\end{array}\end{align}
    \end{subfigure}

    \vspace{.5em}
  \end{center}\end{minipage}}}
\end{figure}

\begin{proof}[Proof of \cref{ancillaCNOTpos}]
  \begin{gather*}
      \tikzfigS{./identitiesancilla/initXCNOTX-step-0}
      \eqeqref{CNOTXX}\tikzfigS{./identitiesancilla/initXCNOTX-step-1}
      \eqeqref{axQCISOinitcnot}\tikzfigS{./identitiesancilla/initXCNOTX-step-2}
  \end{gather*}
\end{proof}

\begin{proof}[Proof of \cref{ancillamctrlRXneg}]
  \begin{gather*}
    \tikzfigS{./identitiesancilla/initmctrlRXneg-step-0}
    \eqeqref{mctrlRXdef}\tikzfigS{./identitiesancilla/initmctrlRXneg-step-1}
    \eqeqref{axQCHH}\tikzfigS{./identitiesancilla/initmctrlRXneg-step-2}\\[0.4cm]
    \eqdeuxeqref{axQCHH}{CNOTHH}\tikzfigS{./identitiesancilla/initmctrlRXneg-step-3}
    \eqeqref{axQCISOinitcnot}\tikzfigS{./identitiesancilla/initmctrlRXneg-step-4}\\[0.4cm]
    \eqeqref{axQCHH}\tikzfigS{./identitiesancilla/initmctrlRXneg-step-5}
    \eqdeuxeqref{mctrlRXaddition}{mctrlzeroid}\tikzfigS{./identitiesancilla/initIdIdn}
  \end{gather*}
\end{proof}

\begin{proof}[Proof of \cref{ancillamctrlPneg}]
  By induction on the number of controls with base case $n=1$ control.
  \begin{gather*}
    \tikzfigS{./identitiesancilla/initmctrlPneg-basecase-step-0}
    \eqeqref{mctrlPinducdef}\tikzfigS{./identitiesancilla/initmctrlPneg-basecase-step-1}
    \eqeqref{axQCISOinitP}\tikzfigS{./identitiesancilla/initmctrlPneg-basecase-step-2}
    \eqeqref{axQCISOinitcnot}\tikzfigS{./identitiesancilla/initmctrlPneg-basecase-step-3}\eqdeuxeqref{Paddition}{axQCP0}\tikzfigS{./identitiesancilla/initmctrlPneg-basecase-step-4}
  \end{gather*}

  \begin{gather*}
    \tikzfigS{./identitiesancilla/initmctrlPneg-step-0}
    \eqeqref{mctrlPSWAP}\tikzfigS{./identitiesancilla/initmctrlPneg-step-1}
    \eqeqref{mctrlPinducdef}\tikzfigS{./identitiesancilla/initmctrlPneg-step-2}\\[0.4cm]
    \eqeqref{mctrlPlift}\tikzfigS{./identitiesancilla/initmctrlPneg-step-3}
    \overset{\text{IH}}{=}\tikzfigS{./identitiesancilla/initmctrlPneg-step-4}\\[0.4cm]
    =\tikzfigS{./identitiesancilla/initmctrlPneg-step-5}
    \eqeqref{axQCISOinitcnot}\tikzfigS{./identitiesancilla/initmctrlPneg-step-6}
    \eqdeuxeqref{mctrlPaddition}{mctrlzeroid}\tikzfigS{./identitiesancilla/initIdIdn}
  \end{gather*}
\end{proof}

\begin{proof}[Proof of \cref{ancillamctrlRXpos}]
  \begin{gather*}
    \tikzfigS{./identitiesancilla/initmctrlRXpos-step-0}
    \eqeqref{mctrlRXdef}\tikzfigS{./identitiesancilla/initmctrlRXpos-step-1}
    \eqeqref{axQCHH}\tikzfigS{./identitiesancilla/initmctrlRXpos-step-2}\\[0.4cm]
    \eqdeuxeqref{axQCHH}{CNOTHH}\tikzfigS{./identitiesancilla/initmctrlRXpos-step-3}
    \eqeqref{XX}\tikzfigS{./identitiesancilla/initmctrlRXpos-step-4}\\[0.4cm]
    \eqeqref{CNOTXX}\tikzfigS{./identitiesancilla/initmctrlRXpos-step-5}
    \eqeqref{axQCISOinitcnot}\tikzfigS{./identitiesancilla/initmctrlRXpos-step-6}\\[0.4cm]
    \eqeqref{Xdef}\tikzfigS{./identitiesancilla/initmctrlRXpos-step-7}
    \eqeqref{mctrlRXop}\tikzfigS{./identitiesancilla/initmctrlRXpos-step-8}
    \eqeqref{mctrlRXaddition}\tikzfigS{./identitiesancilla/initmctrlRXpos-step-9}
  \end{gather*}
\end{proof}

\begin{proof}[Proof of \cref{ancillamctrlPpos}]
  \begin{gather*}
    \tikzfigS{./identitiesancilla/initmctrlPpos-step-0}
    \eqeqref{mctrlPlift}\tikzfigS{./identitiesancilla/initmctrlPpos-step-1}
    \eqeqref{mctrlPop}\tikzfigS{./identitiesancilla/initmctrlPpos-step-2}
    \eqeqref{mctrlPlift}\tikzfigS{./identitiesancilla/initmctrlPpos-step-3}
    \eqeqref{ancillamctrlPneg}\tikzfigS{./identitiesancilla/initmctrlPpos-step-4}
  \end{gather*}
\end{proof}

\begin{proof}[Proof of \cref{ancillaTOFneg}]
  \begin{gather*}
    \tikzfigS{./identitiesancilla/initTOFneg-step-0}
    \eqeqref{TOFdef}\tikzfigS{./identitiesancilla/initTOFneg-step-1}
    \eqeqref{ancillamctrlPneg}\tikzfigS{./identitiesancilla/initTOFneg-step-2}
    \eqeqref{axQCHH}\tikzfigS{./identitiesancilla/initTOFneg-step-3}
  \end{gather*}
\end{proof}

\begin{proof}[Proof of \cref{ancillaTOFpos}]
  \begin{gather*}
    \tikzfigS{./identitiesancilla/initTOFpos-step-0}
    \eqeqref{TOFdef}\tikzfigS{./identitiesancilla/initTOFpos-step-1}
    \eqeqref{ancillamctrlPpos}\tikzfigS{./identitiesancilla/initTOFpos-step-2}
    \eqeqref{mctrlPinducdef}\tikzfigS{./identitiesancilla/initTOFpos-step-3}\\[0.4cm]
    \eqeqref{axQCCZ}\tikzfigS{./identitiesancilla/initTOFpos-step-4}
    \eqeqref{axQCHH}\tikzfigS{./identitiesancilla/initTOFpos-step-5}
  \end{gather*}
\end{proof}

%%%%%%%%%%%%%%%%%%%%%%%%%%%%%%%%%%%%%%%%%%%%%%%%%%%%%%%%%%%%%%%%%%%%%%%%%%%%%%%
%%%%%%%%%%%%%%%%%%%%%%%%%%%%%%%%%%%%%%%%%%%%%%%%%%%%%%%%%%%%%%%%%%%%%%%%%%%%%%%
\section{Completeness of $\QC$}
\subsection{1-CNot completeness}
\label{appendix:1CNOTcompleteness}
In this Appendix, $\uI,\uX,\uZ,\uCNOT,\uP{\varphi}$ and $\uRX{\theta}$ refers to the unitaries associated with the quantum gates $\gI,\gX,\gZ,\gCNOT,\gP$ and $\gRX$ respectively.

\begin{lemma}[1-qubit completeness]
  \label{lem:QC1qubitcompleteness}
  $\QC$ is complete for 1-qubit quantum circuits, i.e. for any 1-qubit $\propQC$-circuits $C_1,C_2$, if $\interp{C_1}=\interp{C_2}$ then $\QC\vdash C_1=C_2$.
\end{lemma}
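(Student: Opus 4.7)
The plan is to establish a canonical form for 1-qubit $\propQC$-circuits and then conclude from the uniqueness of angles that \eqref{axQCeuler} explicitly builds in. A 1-qubit circuit uses only the generators $\gs$, $\gH$, and $\gP$. Using \eqref{axQCgphaseempty} and \eqref{axQCgphaseaddition} together with the fact that global phases commute with every gate (a consequence of the prop structure), I can pull every $\gs$ to the front and merge them into a single scalar. Applying \eqref{axQCHH} to cancel $\gH\gH$ and \eqref{Paddition} to merge adjacent $\gP$s, what remains is a strictly alternating word in $\gH$ and $\gP$. Thus, in $\QC$, the circuit is provably equal to one of the form $\gs(\varphi_0)\cdot \gP(\varphi_1)\cdot \gH\cdot \gP(\varphi_2)\cdot \gH\cdots \gH\cdot \gP(\varphi_k)$ for some $k$.

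Next I apply \eqref{axQCeuler} as a rewrite. Its LHS is an alternating pattern of the same shape, and its RHS has strictly fewer Hadamards. When the normalised word has more $\gH$s than the LHS of \eqref{axQCeuler} contains, I apply the rule to a suitable subword and re-flatten the result with \eqref{Paddition}; the total number of Hadamards decreases strictly, so iterating terminates with a circuit whose shape matches the RHS of \eqref{axQCeuler}, namely $\gs(\beta_0)\cdot \gP(\beta_1)\cdot \gH\cdot \gP(\beta_2)\cdot \gH\cdot \gP(\beta_3)$. To bring the angles into the half-open intervals prescribed by \eqref{axQCeuler} ($\beta_1\in[0,\pi)$, $\beta_0,\beta_2,\beta_3\in[0,2\pi)$), I absorb $2\pi$-shifts into the adjacent $\gP$s via \eqref{Paddition} and into the global phase via \eqref{axQCgphaseaddition}. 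The degenerate conditional case $\beta_2\in\{0,\pi\}\Rightarrow\beta_1=0$ is dealt with by collapsing the sandwich $\gH\cdot\gP(0)\cdot\gH$ through \eqref{axQCHH} or $\gH\cdot\gP(\pi)\cdot\gH$ using the derived identities \eqref{Zdef} and \eqref{Xdef} together with \eqref{XPX}, which lets one absorb any residual $\gP(\beta_1)$ into $\gP(\beta_3)$.

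For the conclusion, given $C_1,C_2$ with $\interp{C_1}=\interp{C_2}$, both rewrite in $\QC$ to fully normalised canonical forms of the above shape. By soundness these canonical forms implement the same unitary, and by the explicit uniqueness clause attached to \eqref{axQCeuler} — which states that the tuple $(\beta_0,\beta_1,\beta_2,\beta_3)$ is uniquely determined by the unitary under the stated constraints — the two tuples of angles must coincide. Hence the canonical forms are literally the same circuit, so $\QC\vdash C_1=C_2$.

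The main obstacle I expect is the angle-normalisation step and the consistent treatment of the degenerate case $\beta_2\in\{0,\pi\}$: one must confirm that the theory is strong enough to force the canonical form to land exactly in the unique representative prescribed by \eqref{axQCeuler}, with no residual freedom. The termination of the Hadamard-reduction loop is immediate from the monotone decrease in the number of $\gH$s, and everything else reduces to routine bookkeeping with \eqref{Paddition}, \eqref{axQCgphaseaddition}, and the basic identities of \cref{fig:identities}.
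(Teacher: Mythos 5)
Your route is genuinely different from the paper's. The paper disposes of this lemma in one line, by observing that $\QC$ contains all the equations of the complete equational theory of \cite{CHMPV} that act on at most one qubit, and then importing the $1$-qubit completeness already established there. You instead reconstruct that result from scratch by a normal-form argument: flatten the circuit to a global phase times an alternating $\gH$/$\gP$ word, shrink it with the Euler rule \eqref{axQCeuler}, and conclude from the uniqueness of the constrained angles. This is in substance the argument used in \cite{CHMPV} for the $1$-qubit fragment, so your proof is self-contained where the paper's is a citation; the price is that you now own all the rewriting details, and two of them are not right as written.

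First, termination. Expanded via \eqref{RXdef}, the left-hand side of \eqref{axQCeuler} is a word with four Hadamards and its right-hand side one with two, so each application of the rule changes the Hadamard count by two and preserves its parity. A word with an odd number of Hadamards (for instance $P(\varphi_1)\,\gH\,P(\varphi_2)\,\gH\,P(\varphi_3)\,\gH\,P(\varphi_4)$, or $\gH$ itself) can therefore never reach the two-Hadamard target shape by iterating \eqref{axQCeuler} alone, and your claim that the loop ``terminates with a circuit whose shape matches the RHS of \eqref{axQCeuler}'' fails there. You need \eqref{axQCHeuler} (equivalently the derived identity \eqref{HeulerRXPRX}) to expand one Hadamard into a two-Hadamard word and flip the parity before finishing with \eqref{axQCeuler}; this is exactly the role that equation plays in the paper's Appendix. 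Second, the uniqueness step. The clause attached to \eqref{axQCeuler} guarantees a unique constrained tuple $(\beta_0,\beta_1,\beta_2,\beta_3)$ for unitaries of the form $R_X(\alpha_1)P(\alpha_2)R_X(\alpha_3)$, and this image is not all of $U(2)$: the determinant of that product is $e^{i\alpha_2}$, which ties the global phase to $\alpha_2$, so e.g.\ $e^{i\pi/7}I$ is not of that form even though it is the semantics of the legitimate circuit $\gs[\pi/7]$. You must therefore either verify directly that the constrained Euler angles are determined by the unitary for \emph{every} element of $U(2)$ (true, but a small linear-algebra check rather than a quotation of \eqref{axQCeuler}), or arrange the reduction so that both canonical forms arise as right-hand sides of \eqref{axQCeuler}-instances whose left-hand sides have equal semantics. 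Both repairs are routine, but without them the argument does not close.
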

\begin{proof}
  $\QC$ contains all the equations of the complete equational theory $\QCold$ acting on at most one qubit.
\end{proof}

\begin{lemma}\label{lem:CNOTprojection}
  By inputing and projecting the $\uCNOT$ unitary on $\ket{0}_1$, $\ket{1}_1$, $\ket{+}_2$, $\ket{-}_2$ and $\bra{0}_1$, $\bra{1}_1$, $\bra{+}_2$, $\bra{-}_2$, we get the following equations:
  \begin{gather*}
      \uCNOT\ket{0}_1=(I\otimes I)\ket{0}_1 \quad\quad \bra{0}_1\uCNOT=\bra{0}_1(I\otimes I)\\
      \uCNOT\ket{1}_1=(I\otimes X)\ket{1}_1 \quad\quad \bra{1}_1\uCNOT=\bra{1}_1(I\otimes X)\\
      \uCNOT\ket{+}_2=(I\otimes I)\ket{+}_2 \quad\quad \bra{+}_2\uCNOT=\bra{+}_2(I\otimes I)\\
      \uCNOT\ket{-}_2=(Z\otimes I)\ket{-}_2 \quad\quad \bra{-}_2\uCNOT=\bra{-}_2(Z\otimes I)
  \end{gather*}
\end{lemma}
\begin{proof}
  By straightforward analysis.
\end{proof}

\begin{lemma}\label{lem:braketzero}
  Let $U\in\mathcal{U}_2$ be a 1-qubit unitary. If $\bra{0}U\ket{0}=0\vee\bra{1}U\ket{1}=0$ then there exist $\varphi,\delta\in\R$ such that $U=e^{i\delta}XP(\varphi)$. Similarly, if $\bra{+}U\ket{+}=0\vee\bra{-}U\ket{-}=0$ then there exist $\theta,\delta\in\R$ such that $U=e^{i\delta}ZR_X(\theta)$.
\end{lemma}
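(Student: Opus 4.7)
The plan is to prove the first claim by a direct $2\times 2$ matrix analysis and then obtain the second claim by Hadamard-conjugation reduction to the first.

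For the first claim, write $U = \begin{pmatrix} a & b \\ c & d \end{pmatrix}$ with unitarity constraints $|a|^2+|c|^2 = 1$, $|b|^2+|d|^2 = 1$, and $\bar{a}b + \bar{c}d = 0$. If $\bra{0}U\ket{0} = a = 0$, then $|c|^2 = 1$, so $c = e^{i\alpha}$ for some $\alpha$, and the orthogonality $\bar{c}d = -\bar{a}b = 0$ then forces $d = 0$; consequently $|b| = 1$, so $b = e^{i\beta}$. A direct computation then verifies
\[
  U = \begin{pmatrix} 0 & e^{i\beta} \\ e^{i\alpha} & 0 \end{pmatrix} = e^{i\alpha} \begin{pmatrix} 0 & 1 \\ 1 & 0 \end{pmatrix} \begin{pmatrix} 1 & 0 \\ 0 & e^{i(\beta-\alpha)} \end{pmatrix} = e^{i\alpha}\, \uX \uP{\beta-\alpha},
\]
so the desired $\delta$ and $\varphi$ exist. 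The case $\bra{1}U\ket{1} = d = 0$ is symmetric: then $|b| = 1$ and $\bar{a}b = 0$ forces $a = 0$, bringing us back to the previous situation.

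For the second claim, I use the Hadamard trick. Since $H\ket{0} = \ket{+}$, $H\ket{1} = \ket{-}$, and $H$ is self-adjoint, we have $\bra{+}U\ket{+} = \bra{0}(HUH)\ket{0}$ and $\bra{-}U\ket{-} = \bra{1}(HUH)\ket{1}$. The unitary $HUH$ therefore satisfies the hypothesis of the first claim, so there exist $\varphi, \delta \in \R$ such that $HUH = e^{i\delta}\uX\uP{\varphi}$. Using $H^2 = I$, I multiply by $H$ on both sides to obtain
\[
  U = e^{i\delta} (H \uX H)(H \uP{\varphi} H) = e^{i\delta}\, \uZ\, \uRX{\varphi},
\]
where I use the standard Pauli conjugation identity $H\uX H = \uZ$ and the semantic counterpart of the defining \cref{RXdef}, namely $H\uP{\varphi}H = \uRX{\varphi}$. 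This yields the claimed decomposition.

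The argument is pure linear algebra and presents no serious obstacle; the only thing to be careful about is that the two subcases $a=0$ and $d=0$ in the first claim both collapse to the same antidiagonal form, and that the Hadamard-conjugation step is used only at the unitary level, not as a $\QC$-derivation.
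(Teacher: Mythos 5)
Your proof is correct and follows essentially the same route as the paper's: a direct unitarity argument showing that a vanishing diagonal entry forces the antidiagonal form $e^{i\delta}\uX\uP{\varphi}$, followed by Hadamard conjugation to reduce the second claim to the first. The one slip is the asserted identity $H\uP{\varphi}H=\uRX{\varphi}$: under the paper's convention, \cref{RXdef} includes a global phase making $\uRX{\theta}$ $4\pi$-periodic, so in fact $H\uP{\varphi}H=e^{i\varphi/2}\uRX{\varphi}$ and the correct conclusion is $U=e^{i(\delta+\varphi/2)}\uZ\uRX{\varphi}$ (as in the paper's own computation); since $\delta$ is existentially quantified, this does not affect the validity of the lemma.
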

\begin{proof}
  First notice that $\bra{0}U\ket{0}=0$ iff $\bra{1}U\ket{1}=0$. Then by unitarity there exists $\delta,\phi\in\R$ such that $U=\big(\begin{smallmatrix}
    0 & e^{i\phi}\\
    e^{i\delta} & 0
  \end{smallmatrix}\big)=X\big(\begin{smallmatrix}
    e^{i\delta} & 0\\
    0 & e^{i\phi}
  \end{smallmatrix}\big)=e^{i\delta}X\big(\begin{smallmatrix}
    1 & 0\\
    0 & e^{i(\phi-\delta)}
  \end{smallmatrix}\big)$. And we are done by taking $\varphi\defeq\phi-\delta$.

  We prove the second statement by reducing it to the first one. $\bra{+}U\ket{+}=0\vee\bra{-}U\ket{-}=0$ iff $\bra{0}HUH\ket{0}=0\vee\bra{1}HUH\ket{1}=0$. Thus there exists $\theta,\phi\in\R$ such that $HUH=e^{i\phi}XP(\theta)$. This implies that $U=e^{i\phi}HXP(\theta)H=e^{i\left(\phi+\frac{\theta}{2}\right)}ZR_X(\theta)$ and we are done by taking $\delta\defeq\phi+\frac{\theta}{2}$.
\end{proof}

\begin{lemma}\label{lem:PPIRRI}
  For any $\varphi,\varphi',\delta\in\R$, if $P(\varphi)P(\varphi')=e^{i\delta}I$ then $\varphi'=-\varphi\pmod{2\pi}$. Similarly, for any $\theta,\theta',\delta\in\R$, if $R_X(\theta)R_X(\theta')=e^{i\delta}I$ then $\theta'=-\theta\pmod{2\pi}$.
\end{lemma}
\begin{proof}
  For the first statement $P(\varphi)P(\varphi')=\big(\begin{smallmatrix}
      1 & 0\\
      0 & e^{i(\varphi+\varphi')}
  \end{smallmatrix}\big)=\big(\begin{smallmatrix}
      e^{i\delta} & 0\\
      0 & e^{i\delta}
  \end{smallmatrix}\big)=e^{i\delta}I$ directly implies that $\varphi'=-\varphi\pmod{2\pi}$. The second statement is reduced to the first one as follows:
  \begin{align*}
      R_X(\theta)R_X(\theta')=e^{i\delta}I&\Rightarrow e^{-i\left(\frac{\theta}{2}+\frac{\theta'}{2}\right)}HP(\theta)P(\theta')H=e^{i\delta}I\Rightarrow e^{-i\left(\frac{\theta}{2}+\frac{\theta'}{2}\right)}P(\theta)P(\theta')=e^{i\delta}I\\
      &\Rightarrow P(\theta)P(\theta')=e^{i\left(\delta+\frac{\theta}{2}+\frac{\theta'}{2}\right)}I\Rightarrow\theta'=-\theta\pmod{2\pi}
  \end{align*}
\end{proof}

\begin{lemma}\label{lem:characterizationCNOT}
  Let $A,B,C,D\in\mathcal{U}_2$ be 1-qubit unitaries, if $(C\otimes D)\circ\uCNOT\circ(A\otimes B)=\uCNOT$ (see the following circuit representation) then there exist $\alpha,\beta,\gamma,\varphi, \theta\in\R$ and $k,\ell\in\{0,1\}$ such that $A=e^{i\alpha}X^kP(\varphi)$, $B=e^{i\beta}Z^\ell R_X(\theta)$, $C=e^{i\gamma}P(-\varphi)Z^\ell X^k$ and $D=e^{i(-\alpha-\beta-\gamma)}R_X(-\theta)Z^\ell X^k$.
  \begin{equation*}
    \begin{array}{rcl}\interp{\tikzfigS{./proof-n/ABCD}}&=&\interp{\tikzfigS{./proof-n/CNOT}}\end{array}
  \end{equation*}
\end{lemma}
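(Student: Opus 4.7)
The plan is to derive the structure of $A$ and $B$ from the hypothesis by projecting the equation onto the four friendly control/target states described in Claim~\ref{claim:CNOTprojection}, and then to read off $C$ and $D$ by a direct computation. First, I would rewrite the hypothesis as $\uCNOT(A\otimes B) = (C^{-1}\otimes D^{-1})\uCNOT$ and sandwich both sides with $\bra{s}_2(\cdot)\ket{s'}_2$ for $s,s'\in\{+,-\}$; using Claim~\ref{claim:CNOTprojection} this yields four scalar-valued relations linking the $\pm$-basis entries of $B$ (resp.~of $D^{-1}$) with products of $A$, $C^{-1}$, and factors of $Z$ on either side.

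From these four identities, a case analysis on which $\pm$-basis entries of $B$ are nonzero would show that $B$ must be diagonal or anti-diagonal in the $\pm$-basis. Indeed, if any two ``incompatible'' entries such as $\bra{+}B\ket{+}$ and $\bra{+}B\ket{-}$ were simultaneously nonzero, combining the corresponding proportionalities for $A$ would force $C^{-1}\propto C^{-1}Z$, hence $Z\propto I$, a contradiction; the unitarity of $B$ then forces the surviving pair of entries to both be nonzero. The two structural cases for $B$ correspond to $B = e^{i\beta} R_X(\theta)$ and $B = e^{i\beta} Z R_X(\theta)$ respectively, as one checks using $HP(\theta)H = e^{i\theta/2}R_X(\theta)$ and $HX = ZH$ together with Lemma~\ref{lem:braketzero} in the degenerate subcase $\bra{+}B\ket{+} = 0$. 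The same analysis yields that $C^{-1}$ commutes or anti-commutes with $Z$ up to phase, hence is diagonal or anti-diagonal in the computational basis, and then the proportionality $A\propto C^{-1}$ (or $A\propto ZC^{-1}$) gives $A = e^{i\alpha}X^kP(\varphi)$ for some $k\in\{0,1\}$.

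With $A$ and $B$ pinned down, the hypothesis rewrites as $C\otimes D = \uCNOT(A^{-1}\otimes B^{-1})\uCNOT$, and I would compute this directly. Since $P(-\varphi)$ commutes with $\uCNOT$ on the control wire and $R_X(-\theta)$ commutes on the target wire, the conjugation reduces to $\uCNOT(X^k\otimes Z^l)\uCNOT$; the standard identities $\uCNOT(X\otimes I)\uCNOT = X\otimes X$ and $\uCNOT(I\otimes Z)\uCNOT = Z\otimes Z$ then give $X^kZ^l\otimes X^kZ^l$. The anti-commutation $X^kZ^l = (-1)^{kl}Z^lX^k$ produces matching $(-1)^{kl}$ factors in the two tensor slots, which cancel; splitting the global phase $e^{-i(\alpha+\beta)}$ as $e^{i\gamma}\cdot e^{-i(\alpha+\beta+\gamma)}$ with $\gamma$ arbitrary then yields the announced formulas for $C$ and $D$.

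The hard part is the first step: extracting the diagonal-or-antidiagonal dichotomy for $A$, $B$, and $C^{-1}$ from the four projection identities. This requires pairing the identities carefully to rule out the mixed cases by contradiction while handling the degenerate subcases where some diagonal matrix element of $B$ vanishes (via Lemma~\ref{lem:braketzero}). The final rearrangement yielding $C$ and $D$ is routine modulo the bookkeeping of the $(-1)^{kl}$ sign.
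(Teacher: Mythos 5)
Your proposal is correct and arrives at the same four-way dichotomy $k,l\in\{0,1\}$ as the paper, but by a noticeably different route. The paper projects the hypothesis on \emph{both} wires ($\bra{0}_1,\bra{1}_1$ on the control, $\bra{+}_2,\bra{-}_2$ on the target), obtaining four operator identities weighted by the diagonal matrix elements of $A^\dagger$ and $B^\dagger$, and then splits into four cases according to whether $\bra{0}A^\dagger\ket{0}$ and $\bra{+}B^\dagger\ket{+}$ vanish; in the three non-generic cases it rewrites the circuit (pushing the resulting $\uX$ or $\uZ$ through the $\uCNOT$ via \eqref{CNOTXX}, \eqref{CNOTZZ}) and re-projects, so $C$ and $D$ are recomputed case by case. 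You instead project only on the target wire but keep the \emph{off-diagonal} elements $\bra{s}\cdot\ket{s'}$, which lets you run the ``incompatible entries force $\uZ\propto I$'' contradiction once, conclude that $B$ is diagonal or antidiagonal in the $\pm$-basis and that $C^{-1}$ commutes or anticommutes with $\uZ$ up to a phase (which squares to $1$), and then finish all four cases uniformly by the single conjugation $C\otimes D=\uCNOT(A^{-1}\otimes B^{-1})\uCNOT$, commuting $\uP{-\varphi}$ and $\uRX{-\theta}$ through the $\uCNOT$ and using $\uCNOT(\uX^k\otimes\uZ^l)\uCNOT=\uX^k\uZ^l\otimes\uX^k\uZ^l$; the $(-1)^{kl}$ signs indeed cancel, and $\gamma$ is then fixed (not arbitrary) as the scalar matching the given $C$, which is all the lemma's existential statement needs. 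In a full write-up you should make explicit that (i) the mixed elements $\bra{s}_2M\ket{s'}_2$ with $s\neq s'$ go slightly beyond the literal statement of Claim~\ref{claim:CNOTprojection}, though they follow from the same identities $\bra{s}_2\uCNOT=\bra{s}_2(\uZ^{a_s}\otimes I)$ and $\uCNOT\ket{s'}_2=(\uZ^{a_{s'}}\otimes I)\ket{s'}_2$, and (ii) the unitarity bookkeeping showing that once one entry of a $\pm$-basis row of $B$ vanishes the other has modulus $1$ and the opposite column entry vanishes too, so the diagonal/antidiagonal alternative is exhaustive. With those details spelled out, your argument is a clean and arguably shorter alternative to the paper's proof.
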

\begin{proof}
  From the condition we derive four equations satisfied by $A,B,C,D$ and we conduct a case distinction corresponding to the four possible assignements of $k,\ell\in\{0,1\}$.
  \begin{gather*}
    (C\otimes D)\circ\uCNOT\circ(A\otimes B)=\uCNOT\\
    \Rightarrow\begin{cases}
      (C\otimes D)\circ\uCNOT\circ(I\otimes B)=\uCNOT\circ(A^\dagger\otimes I)\\
      (C\otimes D)\circ\uCNOT\circ(A\otimes I)=\uCNOT\circ(I\otimes B^\dagger)\\
    \end{cases}\\
    \Rightarrow\begin{cases}
      \bra{0}_1(C\otimes D)\circ\uCNOT\circ(I\otimes B)\ket{0}_1=\bra{0}_1\uCNOT\circ(A^\dagger\otimes I)\ket{0}_1\\
      \bra{1}_1(C\otimes D)\circ\uCNOT\circ(I\otimes B)\ket{1}_1=\bra{1}_1\uCNOT\circ(A^\dagger\otimes I)\ket{1}_1\\
      \bra{+}_2(C\otimes D)\circ\uCNOT\circ(A\otimes I)\ket{+}_2=\bra{+}_2\uCNOT\circ(I\otimes B^\dagger)\ket{+}_2\\
      \bra{-}_2(C\otimes D)\circ\uCNOT\circ(A\otimes I)\ket{-}_2=\bra{-}_2\uCNOT\circ(I\otimes B^\dagger)\ket{-}_2
    \end{cases}\\
    \overset{\text{\cref{lem:CNOTprojection}}}{\Rightarrow}\begin{cases}
      \bra{0}_1(C\otimes D)\circ(I\otimes I)\circ(I\otimes B)\ket{0}_1=\bra{0}_1(I\otimes I)\circ(A^\dagger\otimes I)\ket{0}_1\\
      \bra{1}_1(C\otimes D)\circ(I\otimes X)\circ(I\otimes B)\ket{1}_1=\bra{1}_1(I\otimes X)\circ(A^\dagger\otimes I)\ket{1}_1\\
      \bra{+}_2(C\otimes D)\circ(I\otimes I)\circ(A\otimes I)\ket{+}_2=\bra{+}_2(I\otimes I)\circ(I\otimes B^\dagger)\ket{+}_2\\
      \bra{-}_2(C\otimes D)\circ(Z\otimes I)\circ(A\otimes I)\ket{-}_2=\bra{-}_2(Z\otimes I)\circ(I\otimes B^\dagger)\ket{-}_2
    \end{cases}\\
    \Rightarrow\begin{cases}
      \bra{0}C\ket{0}DB=\bra{0}A^\dagger\ket{0}I\\
      \bra{1}C\ket{1}DXB=\bra{1}A^\dagger\ket{1}X\\
      \bra{+}D\ket{+}CA=\bra{+}B^\dagger\ket{+}I\\
      \bra{-}D\ket{-}CZA=\bra{-}B^\dagger\ket{-}Z
    \end{cases}
  \end{gather*}

  \textbf{Case $\bm{\bra{0}A^\dagger\ket{0}\ne 0}$ and $\bm{\bra{+}B^\dagger\ket{+}\ne 0}$.} It must also be the case that $\bra{0}C\ket{0}\ne 0$ and $\bra{+}D\ket{+}\ne 0$. Moreover, by unitarity of $A^\dagger$ and $B^\dagger$, we also have $\bra{1}A^\dagger\ket{1}\ne 0$ and $\bra{-}B^\dagger\ket{-}\ne 0$. The first equation implies $D=e^{i\delta}B^\dagger$ for some $\delta\in\R$, which implies that $\bra{+}D\ket{+}=e^{i\delta}\bra{+}B^\dagger\ket{+}$ and $\bra{-}D\ket{-}=e^{i\delta}\bra{-}B^\dagger\ket{-}$. Then the third equation implies $C=e^{-i\delta}A^\dagger$, which implies $\bra{1}C\ket{1}=e^{-i\delta}\bra{1}A^\dagger\ket{1}$. Hence the system becomes:
  \begin{gather*}
      \begin{cases}
          DB=e^{i\delta}I\\
          DX B=e^{i\delta}X\\
          CA=e^{-i\delta}I\\
          CZ A=e^{-i\delta}Z
      \end{cases}\Rightarrow\begin{cases}
          CA=CZ AZ\\
          DB=DX BX
      \end{cases}
  \end{gather*}
  The first equation implies that there exist $\varphi,\alpha\in\R$ such that $A=e^{i\alpha}P(\varphi)$ (because $A=ZAZ$), which implies that $C=e^{i(-\delta-\alpha)}P(-\varphi)$. Similarly, the second equation implies that there exist $\theta,\beta\in\R$ such that $B=e^{i\beta}R_X(\theta)$ (because $B=XBX$), which implies that $D=e^{i(\delta-\beta)}R_X(-\theta)$. And we are done by tacking $k=\ell=0$ and $\gamma\defeq -\delta-\alpha$ which leads to $\delta-\beta=-\alpha-\beta-\gamma$.

  \textbf{Case $\bm{\bra{0}A^\dagger\ket{0}=0}$ and $\bm{\bra{+}B^\dagger\ket{+}\ne 0}$.} It must also be the case that $\bra{0}C\ket{0}=0$ and $\bra{+}D\ket{+}\ne 0$. Lemma \ref{lem:braketzero} implies that there exist $\varphi,\varphi',\alpha,\gamma\in\R$ such that $A=e^{i\alpha}XP(\varphi)$ and $C=e^{i\gamma}P(\varphi')X$. Moreover, the third equation implies $CA=e^{i\delta}I$ for some $\delta\in\R$, thus $e^{i(\alpha+\gamma)}P(\varphi')XXP(\varphi)=e^{i\delta}I$, which implies that $\varphi'=-\varphi \pmod{2\pi}$ (Lemma \ref{lem:PPIRRI}). Then we can use the following derivation to get a new condition satisfied by $B$ and $D$.
  \begin{align*}
    \interp{\tikzfigS{./proof-n/casek1l0-step-1}}\eqeqref{CNOTXX}\interp{\tikzfigS{./proof-n/casek1l0-step-2}}\eqtroiseqref{PcommutCNOT}{Paddition}{axQCP0}\interp{\tikzfigS{./proof-n/casek1l0-step-3}}
  \end{align*}
  We get $e^{i(\alpha+\gamma)}(I\otimes DX)\circ\uCNOT\circ(I\otimes B)=\uCNOT$ from which we obtain two new equations:
  \begin{gather*}
      e^{i(\alpha+\gamma)}(I\otimes DX)\circ\uCNOT\circ(I\otimes B)=\uCNOT\\
      \Rightarrow\begin{cases}
          e^{i(\alpha+\gamma)}\bra{0}_1(I\otimes DX)\circ\uCNOT\circ(I\otimes B)\ket{0}_1=\bra{0}_1\uCNOT\ket{0}_1\\
          e^{i(\alpha+\gamma)}\bra{1}_1(I\otimes DX)\circ\uCNOT\circ(I\otimes B)\ket{1}_1=\bra{1}_1\uCNOT\ket{1}_1
      \end{cases}\\
      \overset{\text{\cref{lem:CNOTprojection}}}{\Rightarrow}\begin{cases}
        e^{i(\alpha+\gamma)}\bra{0}_1(I\otimes DX)\circ(I\otimes I)\circ(I\otimes B)\ket{0}_1=\bra{0}_1(I\otimes I)\ket{0}_1\\
        e^{i(\alpha+\gamma)}\bra{1}_1(I\otimes DX)\circ(I\otimes X)\circ(I\otimes B)\ket{1}_1=\bra{1}_1(I\otimes X)\ket{1}_1
    \end{cases}\\
      \Rightarrow\begin{cases}
          e^{i(\alpha+\gamma)}DX B=I\\
          e^{i(\alpha+\gamma)}DXX B=X\\
      \end{cases}
  \end{gather*}
  This implies that $DX B=DBX$, thus there exist $\theta,\beta\in\R$ such that $B=e^{i\beta}R_X(\theta)$ (because $B=XBX$). The first equation implies $D=e^{i(-\alpha-\beta-\gamma)}R_X(-\theta)X$, and we are done by tacking $k=1$ and $\ell=0$.

  \textbf{Case $\bm{\bra{0}A^\dagger\ket{0}\ne 0}$ and $\bm{\bra{+}B^\dagger\ket{+}= 0}$.} It must also be the case that $\bra{0}C\ket{0}\ne 0$ and $\bra{+}D\ket{+}=0$. Lemma \ref{lem:braketzero} implies that there exist $\theta,\theta',\beta,\sigma\in\R$ such that $B=e^{i\beta}ZR_X(\theta)$ and $D=e^{i\sigma}R_X(\theta')Z$. Moreover, the first equation implies $DB=e^{i\delta}I$ for some $\delta\in\R$, thus $e^{i(\beta+\sigma)}R_X(\theta')ZZR_X(\theta)=e^{i\delta}I$, which implies that $\theta'=-\theta\pmod{2\pi}$ (Lemma \ref{lem:PPIRRI}). Then we can use the following derivation to get a new condition satisfied by $A$ and $C$.
  \begin{align*}
    \interp{\tikzfigS{./proof-n/casek0l1-step-1}}\eqeqref{CNOTZZ}\interp{\tikzfigS{./proof-n/casek0l1-step-2}}\eqtroiseqref{RXcommutCNOT}{RXaddition}{RX0}\interp{\tikzfigS{./proof-n/casek0l1-step-3}}
  \end{align*}
  We get $e^{i(\beta+\sigma)}(CZ\otimes I)\circ\uCNOT\circ(A\otimes I)=\uCNOT$ from which we obtain two new equations:
  \begin{gather*}
      e^{i(\beta+\sigma)}(CZ\otimes I)\circ\uCNOT\circ(A\otimes I)=\uCNOT\\
      \Rightarrow\begin{cases}
          e^{i(\beta+\sigma)}\bra{+}_2(CZ\otimes I)\circ\uCNOT\circ(A\otimes I)\ket{+}_2=\bra{+}_2\uCNOT\ket{+}_2\\
          e^{i(\beta+\sigma)}\bra{-}_2(CZ\otimes I)\circ\uCNOT\circ(A\otimes I)\ket{-}_2=\bra{-}_2\uCNOT\ket{-}_2
      \end{cases}\\
      \Rightarrow\begin{cases}
        e^{i(\beta+\sigma)}\bra{+}_2(CZ\otimes I)\circ(I\otimes I)\circ(A\otimes I)\ket{+}_2=\bra{+}_2(I\otimes I)\ket{+}_2\\
        e^{i(\beta+\sigma)}\bra{-}_2(CZ\otimes I)\circ(Z\otimes I)\circ(A\otimes I)\ket{-}_2=\bra{-}_2(Z\otimes I)\ket{-}_2
      \end{cases}\\
      \Rightarrow\begin{cases}
          e^{i(\beta+\sigma)}CZA=I\\
          e^{i(\beta+\sigma)}CZZA=Z\\
      \end{cases}
  \end{gather*}
  This implies that $CZA=CAZ$, thus there exist $\varphi,\alpha\in\R$ such that $A=e^{i\alpha}P(\varphi)$ (because $A=ZAZ$). The first equation implies $C=e^{i(-\alpha-\beta-\sigma)}P(-\varphi)Z$, and we are done by tacking $k=0$, $\ell=1$ and $\gamma\defeq-\alpha-\beta-\sigma$, which leads to $\sigma=-\alpha-\beta-\gamma$.

  \textbf{Case $\bm{\bra{0}A^\dagger\ket{0}= 0}$ and $\bm{\bra{+}B^\dagger\ket{+}= 0}$.} It must also be the case that $\bra{0}C\ket{0}=0$ and $\bra{+}D\ket{+}=0$. Lemma \ref{lem:braketzero} implies that there exist $\varphi,\varphi',\theta,\theta',\alpha,\beta,\gamma,\delta\in\R$ such that $A=e^{i\alpha}XP(\varphi)$, $B=e^{i\beta}ZR_X(\theta)$, $C=e^{i\gamma}P(\varphi')X$ and $D=e^{i\delta}R_X(\theta')Z$. Then we can use the following derivation to get a new condition satisfied by $\varphi,\varphi',\theta,\theta'$.
  \begin{align*}
    \interp{\tikzfigS{./proof-n/casek1l1-step-1}}\eqdeuxeqref{CNOTXX}{CNOTZZ}\interp{\tikzfigS{./proof-n/casek1l1-step-2}}
  \end{align*}
  
  We get $e^{i(\alpha+\beta+\gamma+\delta)}(P(\varphi')Z\otimes R_X(\theta')X)\circ\uCNOT\circ(P(\varphi)\otimes R_X(\theta))=\uCNOT$ from which we obtain two new equations:
  \begin{gather*}
    e^{i(\alpha+\beta+\gamma+\delta)}(P(\varphi')Z\otimes R_X(\theta'X)\circ\uCNOT\circ(P(\varphi)\otimes R_X(\theta))=\uCNOT\\
    \Rightarrow\begin{cases}
        e^{i(\alpha+\beta+\gamma+\delta)}\bra{0}_1(P(\varphi')Z\otimes R_X(\theta')X)\uCNOT(P(\varphi)\otimes R_X(\theta))\ket{0}_1=\bra{0}_1\uCNOT\ket{0}_1\\
        e^{i(\alpha+\beta+\gamma+\delta)}\bra{+}_2(P(\varphi')Z\otimes R_X(\theta')X)\uCNOT(P(\varphi))\otimes R_X(\theta)\ket{+}_2=\bra{+}_2\uCNOT\ket{+}_2
    \end{cases}\\
    \Rightarrow\begin{cases}
      e^{i(\alpha+\beta+\gamma+\delta)}\bra{0}_1(P(\varphi')Z\otimes R_X(\theta')X)(I\otimes I)(P(\varphi)\otimes R_X(\theta))\ket{0}_1=\bra{0}_1(I\otimes I)\ket{0}_1\\
      e^{i(\alpha+\beta+\gamma+\delta)}\bra{+}_2(P(\varphi')Z\otimes R_X(\theta')X)(I\otimes I)(P(\varphi)\otimes R_X(\theta))\ket{+}_2=\bra{+}_2(I\otimes I)\ket{+}_2
  \end{cases}\\
    \Rightarrow\begin{cases}
        e^{i(\alpha+\beta+\gamma+\delta)} R_X(\theta')X R_X(\theta)=I\\
        e^{i(\alpha+\beta+\gamma+\delta)}e^{-i(\theta+\theta')/2}P(\varphi')ZP(\varphi)=I\\
    \end{cases}\\
    \Rightarrow\begin{cases}
        R_X(\theta')=e^{-i(\alpha+\beta+\gamma+\delta)}R_X(-\theta)X\\
        P(\varphi')=e^{-i(\alpha+\beta+\gamma+\delta)}e^{i(\theta+\theta')/2}P(-\varphi)Z\\
    \end{cases}\\
    \Rightarrow\begin{cases}
        R_X(\theta')=e^{-i(\alpha+\beta+\gamma+\delta)}e^{i\pi/2}R_X(\pi-\theta)\\
        P(\varphi')=e^{-i(\alpha+\beta+\gamma+\delta)}e^{i(\theta+\theta')/2}P(\pi-\varphi)\\
    \end{cases}\\
    \overset{\text{Lemma \ref{lem:PPIRRI}}}{\Rightarrow}\begin{cases}
        \theta'=\pi-\theta\pmod{2\pi}\\
        \varphi'=\pi-\varphi\pmod{2\pi}
    \end{cases}
\end{gather*}
  Hence, we get $A=e^{i\alpha}XP(\varphi)$, $B=e^{i\beta}ZR_X(\theta)$, $C=e^{i\gamma}P(\pi-\varphi)X=e^{i\gamma}P(-\varphi)ZX$ and $D=e^{i(-\alpha-\beta-\gamma+\pi/2)}R_X(\pi-\theta)Z=e^{i(-\alpha-\beta-\gamma)}R_X(-\theta)ZX$. Thus we are done by taking $k=\ell=1$.
\end{proof}

\begin{proof}[Proof of \cref{lem:1CNOTcompleteness} (1-CNot completeness)]
  Let $\mathcal{C},\mathcal{C}'$ be two 1-CNot $\propQC$-circuits, i.e. circuits containing at most one $\gCNOT$ gate each, such that $\interp{\mathcal{C}}=\interp{\mathcal{C}'}$.
  
  First notice that if both $\mathcal{C},\mathcal{C}'$ contains no CNot, then, according to \cref{lem:QC1qubitcompleteness} we have $\QC\vdash \mathcal{C}=\mathcal{C}'$. If $\mathcal{C}$ contains a CNot then one can show that $\mathcal{C'}$ must contain a CNot otherwise they would not have the same semantics.

  Then, w.l.o.g.~we can suppose that the CNot is applied to the first two qubits in $\mathcal{C}$. We first show that the CNot is also applied to the first two qubits in $\mathcal{C}'$, and that the permutation of wires is the same in both circuits. Pushing all swaps to the right in $\mathcal{C}$, we get:
  \[\tikzfigS{./proof-n/C-many-qbits}\]
  where $\sigma$ is a permutation of wires. 
  In $\mathcal C'$, using the prop equations to move the CNot on the first two qubits, we get:
  \[\tikzfigS{./proof-n/Cp-many-qbits}\]
  By applying the inverse of all 1-qubit unitaries from $\mathcal C$, as well as the inverse of $\sigma$, $\interp{\mathcal{C}}=\interp{\mathcal{C}'}$ becomes equivalent to:
  \[\interp{\tikzfigS{./proof-n/CNot-id}} = \interp{\tikzfigS{./proof-n/CNot-id-p}}\]
  with $\sigma_3 = \sigma_2\circ\sigma^{-1}$. It then becomes apparent that if $\sigma_1(i)\geq3$, then $\sigma_3(\sigma_1(i))=i$. Moreover, we get $\interp{c''_i}=\interp{\gI}$. We then have:
    \[\interp{\tikzfigS{./proof-n/CNot-id}} = \interp{\tikzfigS{./proof-n/CNot-id-aux}}\]
    Getting back to $\mathcal{C}$ and $\mathcal{C}'$, we conclude that qubits that are not involved with the CNot have the same 1-qubit unitary applied to them on both sides, and are permuted with the rest in the same way. By completeness of $\QC$ for 1-qubit unitaries, it is now enough to show the result when $\mathcal{C}$ and $\mathcal{C}'$ are 2-qubit circuits.
  
  We can now show that if a swap appears on one side, it also appears on the other side (i.e.~that $\sigma_1'=\sigma_3'$ above). Indeed, suppose that $\mathcal C$ has a swap and $\mathcal{C}'$ does not. Using the prop equations, we can push the swap to the right, and get $\mathcal C = \gSWAP\circ\bar{\mathcal{C}'}$. Thus $\interp{\mathcal C}=\interp{\mathcal C'}$ is then equivalent to $\interp{\gSWAP}=\interp{\mathcal{C}'\circ\bar{\mathcal{C}}^\dag}$. We hence have a circuit $\mathcal{C}'\circ\bar{\mathcal{C}}^\dag$ that implements $\gSWAP$, using at most two CNots. This contradicts Theorem~6 of \cite{Vatan2004optimal} that proves that the swap requires at least $3$ CNots. Hence, if $\mathcal C$ has a swap then so does $\mathcal{C}'$.
  
  In the case where the two circuits have a swap, the equality between the two becomes equivalent (\cref{prop:crossinggate}) to the equality without swaps on both side. Hence, we can assume w.l.o.g.~that $\mathcal{C}$ and $\mathcal{C}'$ are 2-qubit circuits containing one and only one CNot and no swap. Moreover thanks to \cref{CNOTHH} we can assume:
  \begin{equation*}
    \begin{array}{rcccccl}\tikzfigS{./proof-n/Ccal}&=&\tikzfigS{./proof-n/A1B1C1D1}&\qquad\qquad\qquad\qquad&\tikzfigS{./proof-n/Ccalprime}&=&\tikzfigS{./proof-n/A2B2C2D2}\end{array}
  \end{equation*}
  for some $1$-qubit circuits \tikzfigS{./proof-n/Ai},\tikzfigS{./proof-n/Bi},\tikzfigS{./proof-n/Ci},\tikzfigS{./proof-n/Di}.

  First, by the simplification principle (\cref{prop:crossinggate}), we reduce it to showing the following equation for any semantically correct 1-qubit circuits \tikzfigS{./proof-n/A},\tikzfigS{./proof-n/B},\tikzfigS{./proof-n/C},\tikzfigS{./proof-n/D}.
  \begin{equation*}
    \begin{array}{rcl}\tikzfigS{./proof-n/ABCD}&=&\tikzfigS{./proof-n/CNOT}\end{array}
  \end{equation*}

  \cref{lem:characterizationCNOT} and \cref{lem:QC1qubitcompleteness} together with Equations \eqref{axQCgphaseempty} and \eqref{axQCgphaseaddition} implies that this is always the case that this equation is equivalent (\cref{prop:crossinggate}) to one of the following equations:
  \begin{gather*}
    \tikzfigS{./proof-n/casek0l0}=\tikzfigS{./proof-n/CNOT} \hspace{4em} \tikzfigS{./proof-n/casek1l0}=\tikzfigS{./proof-n/CNOT} \\[0.2cm]
    \tikzfigS{./proof-n/casek0l1}=\tikzfigS{./proof-n/CNOT} \hspace{4em} \tikzfigS{./proof-n/casek1l1}=\tikzfigS{./proof-n/CNOT}
  \end{gather*}

  We conclude the proof by observing that we can derive all those equations for any $\varphi,\theta\in\R$ using Equations \eqref{CNOTZZ},\eqref{CNOTXX},\eqref{PcommutCNOT},\eqref{RXcommutCNOT},\eqref{Paddition},\eqref{RXaddition},\eqref{axQCP0}, and \eqref{RX0}.
\end{proof}

\subsection{Proof of Equation \eqref{n} of $\QCold$}
\label{appendix:n}

\begin{proof}[Proof of Equation \eqref{n}]
  We first show that any circuits (containing four CNot gates) of the form $\tikzfigS{./proof-n/newderiv_00}$ can always be transformed in $\QC$ into a circuit (containing only two CNot gates) of the form $\tikzfigS{./proof-n/newderiv_06}$ where $\tikzfigS{./proof-n/Ci}$ denotes a one-qubit circuit. This uses the fact (referenced $(*)$ below) that any one-qubit circuit can be transformed in $\QC$ into a circuit of the form $\tikzfigS{./proof-n/RXPRX}$ or $\tikzfigS{./proof-n/PRXP}$ (by the completeness of $\QC$ for one-qubit circuits and the well-know Euler-decomposition). The derivation goes as follows.
  \begin{eqnarray*}
    &&\tikzfigS{./proof-n/newderiv_00}\\[0.2cm]
    &\overset{(*)\eqref{axQCgphaseaddition}}{=}&\tikzfigS{./proof-n/newderiv_01}\\[0.2cm]
    &\eqeqref{PcommutCNOT}&\tikzfigS{./proof-n/newderiv_02}\\[0.2cm]
    &\overset{(*)\eqref{axQCgphaseaddition}}{=}&\tikzfigS{./proof-n/newderiv_03}\\[0.2cm]
    &\eqdeuxeqref{RXphasegadget}{RXcommutCNOT}&\tikzfigS{./proof-n/newderiv_04}\\[0.2cm]
    &\eqeqref{axQCcnotPcnot}&\tikzfigS{./proof-n/newderiv_05}
  \end{eqnarray*}

  Then, using Equations~\eqref{RXcommutCNOT} and \eqref{RXphasegadget}, \cref{n} becomes
  \begin{equation*}
    \tikzfigS{./proof-n/newleft_00}=\tikzfigS{./proof-n/newright_00}
  \end{equation*}
  We can then use the above derivation to transform both circuits into circuits of the form $\tikzfigS{./proof-n/newderiv_06}$. Then, by using the simplification principle (\cref{prop:crossinggate}), we can push the RHS circuit to the end of the LHS circuit, which leads to a circuit of the form $\tikzfigS{./proof-n/newderiv_00}$ (up to a global phase) from which we can apply again the above derivation, leading to a circuit containing only two CNot gates. Then, by using the simplification principle again, we can turn the equation into an equivalent equations with only one CNot on both sides. Finally, the completeness of $\QC$ for circuits containing at most one CNot gate (\cref{lem:1CNOTcompleteness}) concludes the proof.
\end{proof}

\subsection{Proof of Equation \eqref{o} of $\QCold$}
\label{appendix:o}
\begin{eqnarray*}
  &&\tikzfigS{./proof-o/step-0}\\[0.4cm]
  &\eqeqref{axQC3cnots}&\tikzfigS{./proof-o/step-1}\\[0.4cm]
  &\eqdeuxeqref{CNOTCNOT}{axQCswap}&\tikzfigS{./proof-o/step-2}\\[0.4cm]
  &\eqdeuxeqref{CNOTstargetcommut}{RXcommutCNOT}&\tikzfigS{./proof-o/step-3}\\[0.4cm]  &\eqdeuxeqref{CNOTHH}{CNOTCNOT}&\tikzfigS{./proof-o/step-4}\\[0.4cm]
  &=&\tikzfigS{./proof-o/step-5}\\[0.4cm]
  &\eqdeuxeqref{RXcommutCNOT}{RXphasegadget}&\tikzfigS{./proof-o/step-6}\\[0.4cm]
  &\eqdeuxeqref{CNOTstargetcommut}{RXcommutCNOT}&\tikzfigS{./proof-o/step-7}\\[0.4cm]
  &\eqdeuxeqref{RXphasegadget}{RXcommutCNOT}&\tikzfigS{./proof-o/step-8}\\[0.4cm]
  &\eqeqref{n}&\tikzfigS{./proof-o/step-9}\\[0.4cm]
  &\eqdeuxeqref{axQCHH}{CNOTHH}&\tikzfigS{./proof-o/step-10}\\[0.4cm]
  &\eqdeuxeqref{RXcommutCNOT}{RXphasegadget}&\tikzfigS{./proof-o/step-11}\\[0.4cm]
  &\eqeqref{CNOTstargetcommut}&\tikzfigS{./proof-o/step-12}\\[0.4cm]
  % &\eqdeuxeqref{RXcommutCNOT}{RXphasegadget}&\tikzfigS{./proof-o/step-13}\\[0.4cm] % this step is useless
  &\eqdeuxeqref{RXcommutCNOT}{RXphasegadget}&\tikzfigS{./proof-o/step-14}\\[0.4cm]
  &\eqdeuxeqref{CNOTstargetcommut}{RXcommutCNOT}&\tikzfigS{./proof-o/step-15}\\[0.4cm]
  &\eqdeuxeqref{RXcommutCNOT}{RXphasegadget}&\tikzfigS{./proof-o/step-16}\\[0.4cm]
  &=&\tikzfigS{./proof-o/step-17}\\[0.4cm]
  &\eqdeuxeqref{CNOTCNOT}{CNOTHH}&\tikzfigS{./proof-o/step-18}\\[0.4cm]
  &\eqdeuxeqref{CNOTstargetcommut}{RXcommutCNOT}&\tikzfigS{./proof-o/step-19}\\[0.4cm]
  &\eqdeuxeqref{CNOTCNOT}{axQCswap}&\tikzfigS{./proof-o/step-20}\\[0.4cm]
  &\eqeqref{axQC3cnots}&\tikzfigS{./proof-o/step-21}
\end{eqnarray*}

\subsection{Proof of Equation \eqref{Mstarold}, existence and uniqueness of the RHS of Equation~\eqref{Mstar}}
\label{appendix:Mstarold}

The proof uses some properties of multi-controlled gates:
\begin{lemma}\label{Palwayscommute}
It follows from the equations of $\QC$ without \cref{Mstar} that two multi-controlled $P$ gates always commute, regardless of the colours and positions of their controls, and of the positions of their targets.
\end{lemma}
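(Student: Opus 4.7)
The plan is to establish the claim by induction on the total number $n$ of control qubits occurring across the two multi-controlled $P$ gates $G_1$ and $G_2$. In the base case $n=0$, both gates are uncontrolled $P$ gates, and commutativity follows from \eqref{Paddition} when they share a target qubit, or from the prop axioms for $\otimes$ when they act on distinct qubits.

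For the inductive step, I would first reduce, via \eqref{mctrlPop} together with the commutation identities of $X$ with CNots (such as \eqref{XcommutCNOT}), to the case where all controls of both $G_1$ and $G_2$ are positive. Then I would unfold one of the gates, say $G_1$, using the inductive definition \eqref{mctrlPinducdef}, rewriting it as a composition of CNots together with multi-controlled $P$ gates each having strictly fewer controls than $G_1$. By the induction hypothesis, these smaller multi-controlled $P$ gates commute with $G_2$, so the task reduces to pushing the remaining CNots past $G_2$. This is achieved using \eqref{PcommutCNOT}, \eqref{CNOTscontrolcommut}, \eqref{CNOTstargetcommut}, \eqref{mctrlPlift}, \eqref{mctrlPSWAP}, and the bullet-identities $(\bullet)$ recalled in Appendix \ref{appendix:bullet}; the CNots then recombine into the unfolded form of $G_1$ on the far side of $G_2$ via \eqref{CNOTCNOT} and the prop axioms.

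The main obstacle is that pushing a CNot past $G_2$ can alter $G_2$'s control structure -- for instance, by duplicating or enlarging its set of controls -- whenever the CNot shares a wire with a control or target of $G_2$. To absorb these transformations into the induction, I would formulate the hypothesis as a commutativity statement for \emph{any} two multi-controlled $P$ gates whose combined number of controls is at most $n-1$, rather than fixing the control sets in advance, so that any temporary enlargement on one side is compensated by a corresponding decrease elsewhere. The semantic fact that every multi-controlled $P$ gate is diagonal in the computational basis -- so that any two genuinely commute as matrices -- serves as the guiding invariant of the whole argument, guaranteeing that the required equalities always hold and should be realizable by a suitable syntactic rearrangement without appealing to \cref{Mstar}.
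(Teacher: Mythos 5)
Your overall strategy---induction on the number of controls, unfolding one gate via \eqref{mctrlPinducdef} and pushing the resulting CNots through the other---has a genuine gap in its termination argument. As you yourself observe, commuting a CNot past a multi-controlled $P$ gate $G_2$ whose wire set contains the CNot's target does not leave $G_2$ intact: since the conjugated condition ``all wires of $G_2$ equal $1$'' becomes a condition involving a parity, $G_2$ turns into a \emph{product of two} multi-controlled $P$ gates, each acting on \emph{one more} wire. After the unfolding of $G_1$, the configurations you must still commute therefore pair a $G_1$-piece having $n_1-1$ controls against $G_2$-pieces having $n_2+1$ controls, so the combined count is $n_1+n_2=n$, not $n-1$, and the strengthened hypothesis ``combined number of controls at most $n-1$'' is simply not applicable. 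The claim that ``any temporary enlargement on one side is compensated by a corresponding decrease elsewhere'' is asserted but never substantiated, and it is exactly where the argument breaks. The closing appeal to the semantic fact that diagonal gates commute cannot rescue this: the lemma is a statement about derivability in $\QC$ without \cref{Mstar}, and since that fragment is not known to be complete, ``true, hence realizable by a suitable syntactic rearrangement'' is precisely the assertion to be proved, not a usable invariant.

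For comparison, the paper does not reprove this at all: it observes that the statement is Lemma~54 of \cite{CHMPV} combined with Propositions~11, 12 and~15 there, whose proofs do not use \cref{Mstarold}. If you want a self-contained derivation, the robust route is not to push CNots through an intact $G_2$, but to first rewrite \emph{both} gates into a normal form built from phase gadgets (a CNot ladder computing a parity, a plain $P(\varphi)$ in the middle, the ladder undone), using \eqref{mctrlPlift}, \eqref{mctrlPinducdef} and \eqref{Pphasegadget}; phase gadgets pairwise commute by elementary CNot rearrangements such as \eqref{axQC3cnots}, \eqref{PcommutCNOT} and \eqref{CNOTCNOT}, and no blow-up of control sets ever occurs.
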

For instance,
\[\tikzfigS{./identities/excomPleft}\ =\ \tikzfigS{./identities/excomPright}\]
\begin{proof}[Proof of \cref{Palwayscommute}]
This is a direct consequence of the results of \cite{CHMPV}, namely of Lemma 54 together with Propositions 11, 12 and 15.
\end{proof}

\begin{lemma}
The following equation is a consequence of the equations of $\QC$:
\begin{equation}\label{mctrlP2piperiodic}
\begin{array}{rcl}
\tikzfigS{./identities/mctrlP2pi}&=&\tikzfigS{./identities/Idn}
\end{array}
\end{equation}
\end{lemma}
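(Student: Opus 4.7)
The plan is to prove $\tikzfigS{./identities/mctrlP2pi}=\tikzfigS{./identities/Idn}$ by induction on $n$, the total number of qubits on which the multi-controlled phase gate acts (so it carries $n-1$ controls). For the base case ($n=1$, no controls), the statement reduces to $P(2\pi) = I$, which I would derive through the chain $P(2\pi) \eqeqref{Paddition} P(\pi)\cdot P(\pi) \eqeqref{Zdef} Z\cdot Z \eqeqref{ZZ} I$.

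For the inductive step ($n \geq 2$), I would assume the result for the $(n-1)$-qubit version and unfold the $n$-qubit instance using the inductive definition \cref{mctrlPdef} (or its alternative form \cref{mctrlPinducdef}). This expresses the gate as a product of multi-controlled phase gates on $n-1$ qubits at angles $\pm\pi$, interleaved with CNOTs. Combining the induction hypothesis with \cref{mctrlPaddition} and \cref{mctrlzeroid} shows that any two adjacent $(n-1)$-qubit phase factors whose angles sum to $0\bmod 2\pi$ collapse to the identity; in particular, two copies of the $\pi$-version cancel, since their sum is $2\pi$ and by the IH the $(n-1)$-qubit $2\pi$-version is $I$. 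I would then use \cref{Palwayscommute} to commute multi-controlled phase factors past one another and the standard CNOT-conjugation identities from \cref{fig:identities} and \cref{fig:mctrlidentities} to move CNOTs across them, progressively pairing up the phase factors and eliminating them; the residual CNOTs finally annihilate via \cref{CNOTCNOT}.

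The main obstacle is the combinatorial bookkeeping of this rearrangement: ensuring that after all commutations, every $(n-1)$-qubit $\pm\pi$-phase factor introduced by the unfolding can indeed be paired with a compatible one, with no residue left over. A cleaner alternative is to appeal to \cref{mctrlPop}, which equates $X\cdot(\text{multi-controlled }P)\cdot X$ to a product of multi-controlled phase factors at related angles; specialising to angle $2\pi$ and using the induction hypothesis together with \cref{XX} should yield a more direct derivation that bypasses the intricate CNOT manipulations. Either route ultimately rests on the fact that $P(2\pi) = I$ propagates through the multi-control construction in a cancellative fashion.
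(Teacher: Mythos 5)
Your induction scheme and base case are fine, but the inductive step has a gap that is more than ``combinatorial bookkeeping'', and the paper's own structure strongly suggests it cannot be closed with the tools you allow yourself. First, unfolding via \cref{mctrlPdef} does not yield $(n-1)$-controlled phases at $\pm\pi$: that definition routes through the $n$-controlled $R_X(2\pi)$, and evaluating that gate is precisely \cref{mctrlRX2piP}, which the paper derives \emph{from} \cref{mctrlP2piperiodic} --- so that route is circular on its face. Using \cref{mctrlPinducdef} instead, you do get $\Lambda_{\vec{c},a}P(\pi)\cdot\mathrm{CNOT}_{a,t}\cdot\Lambda_{\vec{c},t}P(-\pi)\cdot\mathrm{CNOT}_{a,t}\cdot\Lambda_{\vec{c},t}P(\pi)$, and the induction hypothesis legitimately turns $-\pi$ into $\pi$. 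But to pair the factors you must derive $\mathrm{CNOT}_{a,t}\cdot\Lambda_{\vec{c},t}P(\pi)\cdot\mathrm{CNOT}_{a,t}=\Lambda_{\vec{c},a}P(\pi)\cdot\Lambda_{\vec{c},t}P(\pi)$. For $\vec{c}=\emptyset$ this is \cref{CNOTZZ} and the one-control case indeed goes through; for $|\vec{c}|\geq1$ no lemma in the paper provides it, and if you attempt it via \cref{mctrlPop} and the control/anti-control calculus, the computation leaves behind a factor $\Lambda_{a,\vec{c},t}P(-2\pi)$ on \emph{all} the qubits: the commutation identity you need is inter-derivable with the very instance of \cref{mctrlP2piperiodic} you are proving, so the $2\pi$ defect is merely displaced, not eliminated. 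Your fallback has the same problem: \cref{mctrlPop} at $\varphi=2\pi$, after the IH kills the control-only factor, yields only $X_t\,G\,X_t=G^{-1}$ for $G=\Lambda_{\vec{c},t}P(2\pi)$, which does not force $G=I$.

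This is why the lemma is the one statement in its cluster asserted as a consequence of all of $\QC$ rather than of $\QC$ without \cref{Mstar}, and why the paper's proof looks nothing like yours: for one or more controls it pads the gate with identities via \cref{mctrlzeroid} so that it becomes the left-hand side of \cref{Mstar} with a single angle equal to $2\pi$ and the rest zero, applies \cref{Mstar} --- whose normalisation conditions force every $\delta_j$ to be $0$ --- and collapses the right-hand side with \cref{mctrlzeroid} again. Since the statement only claims derivability in full $\QC$, you are free to use \cref{Mstar}, and the short fix is to adopt that two-line argument; to salvage your induction you would need an independent, \cref{Mstar}-free derivation of the CNOT-conjugation identity above, which the paper's careful bookkeeping of which lemmas avoid \cref{Mstar} suggests does not exist.
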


\begin{lemma}
The following equations are consequences of the equations of $\QC$ without \cref{Mstar}, together with \cref{mctrlP2piperiodic}:
\begin{equation}\label{mctrlRX2piP}
\begin{array}{rcl}
\tikzfigS{./identities/mctrl2RX2pi}&=&\tikzfigS{./identities/mctrlPpisurId}
\end{array}
\end{equation}

\begin{equation}\label{passagepihb}
\begin{array}{rcl}
\tikzfigS{./identities/passagepihbleft}&=&\tikzfigS{./identities/passagepihbright}
\end{array}
\end{equation}
\end{lemma}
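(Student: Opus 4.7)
The plan is to treat the two equations in sequence, with \eqref{mctrlRX2piP} serving as the technical tool used in the proof of \eqref{passagepihb}.

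For \eqref{mctrlRX2piP}, the semantic intuition is that $R_X(2\pi) = -I$, so a doubly-controlled $R_X(2\pi)$ implements the phase $-1$ conditioned on both controls being activated, which is precisely the action of a controlled $P(\pi)$ with target on one of the controls (and identity on the wire that used to carry the $R_X$). To derive this syntactically without recourse to \cref{Mstar}, I would first unfold the LHS using the inductive definition \eqref{mctrlRXdef}, which puts it in the form of a multi-controlled $P(2\pi)$ conjugated by Hadamards and CNots. Using the alternative definition \eqref{mctrlPinducdef} to split off the innermost phase component exposes an isolated multi-controlled $P(2\pi)$, which I would kill via \cref{mctrlP2piperiodic}. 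What remains is reassembled using \eqref{mctrlPdef}, the Hadamard cancellation \eqref{axQCHH} and the conjugation identity \eqref{CNOTHH}, producing precisely the multi-controlled $P(\pi)$ of the RHS.

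For \eqref{passagepihb}, whose name suggests ``passage $\pi$ haut--bas'', i.e.\ trading a $\pi$-rotation between the top and bottom wires of a multi-controlled block, the plan is to use \eqref{mctrlRX2piP} as a bridge. Read in reverse, it lets me rewrite a controlled $P(\pi)$ with target ``high'' as a controlled $R_X(2\pi)$; then \cref{Palwayscommute} together with the commutation \eqref{commctrlphaseenhaut} allows the freed phase piece to be shuffled past any multi-controlled $R_X$ or $P$ in the diagram; and a second application of \eqref{mctrlRX2piP} converts it back to a controlled $P(\pi)$ whose target now sits on the ``low'' wire. All intermediate rearrangement is achieved via the multi-controlled identities collected in \cref{fig:mctrlidentities}, plus the inductive unfoldings.

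The main obstacle is discipline about the forbidden rule: every step must avoid \cref{Mstar}, which is the only axiom capable of genuine Euler-style rewriting inside a controlled block. This forces all manipulations to be structural (definition unfoldings, Hadamard/CNot identities, phase commutations via \cref{Palwayscommute}) with \cref{mctrlP2piperiodic} as the sole quantitative input performing angle collapse. Tracking that the unfoldings of \eqref{mctrlRXdef} and \eqref{mctrlPinducdef} align so that the $P(2\pi)$ subterm is genuinely isolated before \cref{mctrlP2piperiodic} is applied, and that the second invocation of \eqref{mctrlRX2piP} in the proof of \eqref{passagepihb} targets exactly the right wire, is the delicate bookkeeping to get right.
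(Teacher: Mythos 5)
Your reading of \eqref{mctrlRX2piP} is right, and your sketch for it — relate the multi-controlled $R_X(2\pi)$ to a multi-controlled $P(2\pi)$ through the definitional unfoldings, kill that factor with \cref{mctrlP2piperiodic}, and collect the leftover half-angle $P(\pi)$ on the control register — is essentially the standard derivation. Be aware, though, that the paper does not re-derive either equation: its entire proof is the observation that \eqref{mctrlRX2piP} and \eqref{passagepihb} are Lemmas 59 and 60 of \cite{CHMPV} and that the proofs given there invoke the starred rule only through \cref{mctrlP2piperiodic}. A self-contained derivation is therefore a genuinely different (and more informative) route, but it then has to be carried out in full, and that is where your plan for \eqref{passagepihb} fails.

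The gap is in the last step of your \eqref{passagepihb} argument. Equation \eqref{mctrlRX2piP} only ever trades a multi-controlled $R_X(2\pi)$ for a multi-controlled $P(\pi)$ whose target lies \emph{in the control register}, with the identity left on the former target wire; applying it a second time therefore cannot ``convert it back to a controlled $P(\pi)$ whose target now sits on the low wire''. That transfer from the control register down to the target wire is precisely the content of \eqref{passagepihb}, and it cannot be obtained from \eqref{mctrlRX2piP} plus the commutation facts \cref{Palwayscommute} and \eqref{commctrlphaseenhaut} alone, since none of those touch the target wire. The mechanism that actually deposits $\pi$-phases ``en bas'' has to go through identities acting on the target, e.g.\ merging the $R_X(2\pi)$ into the adjacent multi-controlled $R_X(\theta)$ with \eqref{mctrlRXaddition} and then rewriting the resulting angle with the target-wire conjugation \eqref{mctrlRXop} (recall $Z=P(\pi)$). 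More fundamentally, you never state what \eqref{passagepihb} asserts — you are reconstructing it from its file name — and the one step that would do the real work is attributed to a lemma that cannot perform it, so this half of the statement is not established.
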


\begin{proof}[Proof of \cref{mctrlP2piperiodic}]
The case with zero controls is a direct consequence of \cref{Paddition,Zdef,ZZ}. For the case with one or more controls, it suffices to remark that the derivation given in \cite{CHMPV} does not need $\delta_9$:
\begin{eqnarray*}\tikzfigS{./identities/mctrl2P2pi}&\eqeqref{mctrlzeroid}&\tikzfigS{./identities/Mstar-left-02pi}\\[0.4cm]
     &\eqeqref{Mstar}&\tikzfigS{./identities/Mstar-right-simp-zeros}\\[0.4cm]
     &\eqeqref{mctrlzeroid}&\tikzfigS{./identities/Idn1}
     \end{eqnarray*}
\end{proof}

\begin{proof}[Proof of \cref{mctrlRX2piP,passagepihb}]
\Cref{mctrlRX2piP,passagepihb} correspond respectively to Lemmas 59 and 60 of \cite{CHMPV}, and it can be checked that their proofs given in \cite{CHMPV} do not need the full power of \cref{Mstarold} but only \cref{mctrlP2piperiodic}.
\end{proof}

Given an instance of \cref{Mstarold}, we prove that its right-hand side can be transformed into a circuit of the same form as the right-hand side of \cref{Mstar}.

First, one can remark that
\[\bra{1...101}\interp{\tikzfigS{./qc-axioms/Mstar-left}}\ket{1...110}=-\sin(\frac{\gamma_3}2)\sin(\frac{\gamma_4}2)\]
while
\[\bra{1...101}\interp{\tikzfigS{./qc-axioms/Mstar-right}}\ket{1...110}=-e^{i\delta_9}\sin(\frac{\delta_3}2)\sin(\frac{\delta_4}2).\]

\begin{itemize}
\item If  $\sin(\frac{\delta_3}2)\sin(\frac{\delta_4}2)\neq0$, this implies that $e^{i\delta_9}$ is a real number, which, since $\delta_9\in[0,2\pi)$, implies that $\delta_9\in\{0,\pi\}$.
\begin{itemize}
\item If $\delta_9=0$, then
\[\tikzfigS{./qc-axioms/Mstar-right}\eqeqref{mctrlzeroid}\tikzfigS{./qc-axioms/Mstar-right-simp}\]
\item If $\delta_9=\pi$ and $\delta_6=0$, then
\begin{eqnarray*}
  &&\tikzfigS{./proof-Mstar/Mstar-step00}\\[0.4cm]
  &\eqeqref{mctrlzeroid}&\tikzfigS{./proof-Mstar/Mstar-step01}\\[0.4cm]
  &\overset{\text{\cref{Palwayscommute}}}{=}&\tikzfigS{./proof-Mstar/Mstar-step02}\\[0.4cm]
  &\eqeqref{mctrlRX2piP}&\tikzfigS{./proof-Mstar/Mstar-step03}\\[0.4cm]
  &\eqeqref{mctrlRXaddition}&\tikzfigS{./proof-Mstar/Mstar-step04}\\[0.4cm]
  &\eqeqref{mctrlzeroid}&\tikzfigS{./proof-Mstar/Mstar-step05}
\end{eqnarray*}

\item If $\delta_9=\pi$ and $\delta_6\neq0$, then
\begin{eqnarray*}
  &&\tikzfigS{./proof-Mstar/Mstar-step0}\\[0.4cm]
  &\overset{\text{\cref{Palwayscommute}}}{=}&\tikzfigS{./proof-Mstar/Mstar-step1}\\[0.4cm]
  &\eqeqrefwcontrol&\tikzfigS{./proof-Mstar/Mstar-step2}\\[0.4cm]
  &\eqeqrefwcontrol&\tikzfigS{./proof-Mstar/Mstar-step3}\\[0.4cm]
  &\eqeqrefwcontrol&\tikzfigS{./proof-Mstar/Mstar-step4}\\[0.4cm]
  &\eqeqref{mctrlRX2piP}&\tikzfigS{./proof-Mstar/Mstar-step5}\\[0.4cm]
  &\eqeqref{mctrlRXaddition}&\tikzfigS{./proof-Mstar/Mstar-step6}\\[0.4cm]
  &\overset{\text{\cref{Palwayscommute}}}{=}&\tikzfigS{./proof-Mstar/Mstar-step7}\\[0.4cm]
  &\eqeqref{passagepihb}&\tikzfigS{./proof-Mstar/Mstar-step8}\\[0.4cm]
  &\overset{\text{\cref{Palwayscommute}}}{=}&\tikzfigS{./proof-Mstar/Mstar-step9}\\[0.4cm]
  &\eqeqref{mctrlPlift}&\tikzfigS{./proof-Mstar/Mstar-step10}\\[0.4cm]
  &\overset{\refwcontrol\eqref{mctrlPaddition}}{=}&\tikzfigS{./proof-Mstar/Mstar-step11}\\[0.4cm]
  &\eqdeuxeqref{mctrlPaddition}{mctrlP2piperiodic}&\tikzfigS{./proof-Mstar/Mstar-step12}
\end{eqnarray*}
\end{itemize}

\item If $\sin(\frac{\delta_3}2)\sin(\frac{\delta_4}2)=0$, then since $\delta_3,\delta_4\in[0,2\pi)$ and $\delta_4=0\Rightarrow\delta_3=0$, one necessarily has $\delta_3=0$.  In turn, the conditions of \cref{Mstarold} also imply that $\delta_2=0$. Additionally, this implies that $\sin(\frac{\gamma_3}2)\sin(\frac{\gamma_4}2)=0$ too, that is, $\sin(\frac{\gamma_3}2)=0$ or $\sin(\frac{\gamma_4}2)=0$. 
\begin{itemize}
\item If $\sin(\frac{\gamma_3}2)=0$, then $\cos(\frac{\gamma_3}2)\in\{-1,1\}$, and one can remark that
\[\bra{1...110}\interp{\tikzfigS{./qc-axioms/Mstar-left}}\ket{1...110}=\cos(\frac{\gamma_3}2)\]
while, since $\delta_3=0$,
\[\bra{1...110}\interp{\tikzfigS{./qc-axioms/Mstar-right}}\ket{1...110}=e^{i\delta_8}\cos(\frac{\delta_6}2).\]
Hence, %on the one hand, 
$\cos(\frac{\delta_6}2)$ has absolute value $1$, which, since $\delta_6\in[0,2\pi)$, implies that $\delta_6=0$.\footnote{Moreover, $e^{i\delta_8}$ is a real number, which, since $\delta_8\in[0,2\pi)$, implies that $\delta_8\in\{0,\pi\}$. Note however that we do not use this property.}
%, and on the other hand, $e^{i\delta_8}$ is a real number, which, since $\delta_8\in[0,2\pi)$, implies that $\delta_8=0$.
In turn, the conditions of \cref{Mstarold} also imply that $\delta_5=0$. Thus,
\[\tikzfigS{./qc-axioms/Mstar-right}\ \eqeqref{mctrlzeroid}\ \tikzfigS{./proof-Mstar/Mstar-casedelta9-step1}\]
\begin{itemize}
\item If $\delta_9\in[0,\pi)$, then
\begin{eqnarray*}
  &&\tikzfigS{./proof-Mstar/Mstar-casedelta9-step1}\\[0.4cm]
  &\overset{\text{\cref{Palwayscommute},}}{\eqeqref{commctrlphaseenhaut}}&\tikzfigS{./proof-Mstar/Mstar-casedelta9-step2}\\[0.4cm]
  &\eqeqref{mctrlzeroid}&\tikzfigS{./proof-Mstar/Mstar-casedelta9-step3}
\end{eqnarray*}
\item If $\delta_9\in[\pi,2\pi)$, then
\begin{eqnarray*}
  &&\tikzfigS{./proof-Mstar/Mstar-casedelta9-step1}\\[0.4cm]
  &\eqeqref{mctrlPaddition}&\tikzfigS{./proof-Mstar/Mstar-casedelta9bis-step2}\\[0.4cm]
  &\overset{\text{\cref{Palwayscommute},}}{\eqeqref{commctrlphaseenhaut}}&\tikzfigS{./proof-Mstar/Mstar-casedelta9bis-step3}\\[0.4cm]
  &\eqeqref{mctrlRX2piP}&\tikzfigS{./proof-Mstar/Mstar-casedelta9bis-step4}\\[0.4cm]
  &\eqeqref{mctrlRXaddition}&\tikzfigS{./proof-Mstar/Mstar-casedelta9bis-step5}\\[0.4cm]
  &\eqeqref{mctrlzeroid}&\tikzfigS{./proof-Mstar/Mstar-casedelta9bis-step6}
\end{eqnarray*}
\end{itemize}
\item If %$\sin(\frac{\gamma_3}2)\neq0$, then 
$\sin(\frac{\gamma_4}2)=0$, %so that
then $\cos(\frac{\gamma_4}2)\in\{-1,1\}$ and one has
\[\bra{1...101}\interp{\tikzfigS{./qc-axioms/Mstar-left}}\ket{1...101}=\pm\cos(\frac{\gamma_1}2)\]
while, since $\delta_2=0$,
\[\bra{1...101}\interp{\tikzfigS{./qc-axioms/Mstar-right}}\ket{1...101}=e^{i\delta_9}\cos(\frac{\delta_4}2).\]
\begin{itemize}
\item If $\cos(\frac{\delta_4}2)\neq0$, then this implies that $e^{i\delta_9}$ is a real number, so that $\delta_9\in\{0,\pi\}$ and we can proceed as in the first case (where $\sin(\frac{\delta_3}2)\sin(\frac{\delta_4}2)\neq0$).
\item If $\cos(\frac{\delta_4}2)=0$, then on the one hand, since $\delta_4\in[0,2\pi)$, one has $\delta_4=\pi$. In turn, since $\delta_3=0$, the conditions of \cref{Mstarold} imply that $\delta_1=0$. On the other hand, $\cos(\frac{\gamma_1}2)=0$ too, so that $\sin(\frac{\gamma_1}2)\in\{-1,1\}$ and
\[\bra{1...101}\interp{\tikzfigS{./qc-axioms/Mstar-left}}\ket{1...111}=\pm i\]
while, since $\delta_2=0$,
\[\bra{1...101}\interp{\tikzfigS{./qc-axioms/Mstar-right}}\ket{1...111}=\pm ie^{i\delta_9}.\]
Hence, $\delta_9\in\{0,\pi\}$ and we can also proceed as in the first case.
\end{itemize}
\end{itemize}
\end{itemize}
Thus, given any instance of \cref{Mstarold}, its right-hand side can be transformed into $\tikzfigS{./qc-axioms/Mstar-right-simp-primes}$ where
\begin{itemize}
\item $\delta'_1=\delta_1$
\item $\delta'_3=\delta_3$
\item $\delta'_5=\delta_5$
\item $\delta'_7=\delta_7$\medskip
\item $\delta'_2=\begin{cases}\delta_2&\text{if $\delta_9\in\{0,\pi\}$}\\\delta_9&\text{if $\delta_9\in(0,\pi)$}\\\delta_9-\pi&\text{if $\delta_9\in(\pi,2\pi)$}\end{cases}$\medskip
\item $\delta'_4=\begin{cases}\delta_4&\text{if $\delta_9\in[0,\pi)$}\\\delta_4+2\pi&\text{if $\delta_9\in[\pi,2\pi)$}\end{cases}$\medskip
\item $\delta'_6=\begin{cases}2\pi-\delta_6&\text{if $\delta_9=\pi$ and $\delta_6\neq0$}\\\delta_6&\text{else}\end{cases}$\medskip
\item $\delta'_8=\begin{cases}\delta_8+\pi\bmod 2\pi&\text{if $\delta_9=\pi$ and $\delta_6\neq0$}\\\delta_8&\text{else}\end{cases}$.
\end{itemize}
These angles satisfy the conditions of \cref{fig:QCaxioms}. Indeed, the only case where we can obtain $\delta'_3=0$ and $\delta'_2\neq0$ in the case distinction above, is the case where $\sin(\frac{\gamma_3}2)=0$, in which $\delta'_6=\delta_6=0$. For the other conditions of \cref{fig:QCaxioms}, the fact that they are satisfied by the $\delta'_k$ follows directly from the fact that the $\delta_j$ satisfy the conditions of \cref{Mstarold}.

The transformation only uses: 
\begin{itemize}
\item \cref{mctrlPaddition,mctrlRXaddition,mctrlPlift}, which are proved in \cite{CHMPV} without using \cref{Mstarold} and therefore are provable in $\QC$ without \cref{Mstar},
\item \cref{Palwayscommute,commctrlphaseenhaut,mctrlzeroid}, which only rely on $\QC$ without \cref{Mstar},
\item \cref{mctrlP2piperiodic}, which can be proved using \cref{Mstarold} instead of \cref{Mstar}
\item and \cref{mctrlRX2piP,passagepihb}, which are consequences of the equations of $\QC$ without \cref{Mstar} together with \cref{mctrlP2piperiodic}.
\end{itemize}
Since \cref{Mstarold} has been proved to be sound in \cite{CHMPV} (that is, for any angles in its left-hand side, there exist a choice of angles in its right-hand side satisfying the conditions of \cref{Mstarold} that makes it sound), and it is easy to see that all equations of $\QC$ except maybe \cref{Mstar} are sound, this implies that \cref{Mstar} is sound.

Moreover, this proves that \cref{Mstarold} is a consequence of the equations of $\QC$.

It remains to prove that for any choice of angles in the LHS of \cref{Mstar}, the choice of angles in its RHS is unique. To this end, we now consider an instance of \cref{Mstar} and transform its RHS into the RHS of the instance of \eqref{Mstarold} with same LHS. To make easier to differentiate between the parameters of the respective RHS of \cref{Mstar,Mstarold}, we keep denoting those of \cref{Mstar} as $\delta'_k$:
\begin{equation}\tag{\ref*{Mstar}}\tikzfigS{./qc-axioms/Mstar-left}\;=\;\tikzfigS{./qc-axioms/Mstar-right-simp-primes}\end{equation}

\begin{itemize}
\item If $\delta'_4\in[0,2\pi)$ and either $\delta'_3\neq0$ or $\delta'_2=0$, then
\[\tikzfigS{./qc-axioms/Mstar-right-simp-primes}\eqeqref{mctrlzeroid}\tikzfigS{./proof-Mstar/Mstar-right-delta9zero-primes}\]
\item If $\delta'_4\in[2\pi,4\pi)$, either $\delta'_3\neq0$ or $\delta'_2=0$, and $\delta'_6=0$, then
\begin{eqnarray*}
  &&\tikzfigS{./proof-Mstar/Mstar-step05rev}\\[0.4cm]
  &\eqeqref{mctrlzeroid}&\tikzfigS{./proof-Mstar/Mstar-step04rev}\\[0.4cm]
  &\eqeqref{mctrlRXaddition}&\tikzfigS{./proof-Mstar/Mstar-step03rev}\\[0.4cm]
  &\eqeqref{mctrlRX2piP}&\tikzfigS{./proof-Mstar/Mstar-step02rev}\\[0.4cm]
  &\overset{\text{\cref{Palwayscommute}}}{=}&\tikzfigS{./proof-Mstar/Mstar-step01rev}\\[0.4cm]
  &\eqeqref{mctrlzeroid}&\tikzfigS{./proof-Mstar/Mstar-step00rev}
\end{eqnarray*}
\item If $\delta'_4\in[2\pi,4\pi)$, either $\delta'_3\neq0$ or $\delta'_2=0$, and $\delta'_6\neq0$, then
\begin{eqnarray*}
  &&\tikzfigS{./qc-axioms/Mstar-right-simp-primes}\\[0.4cm]
  &\eqeqref{mctrlRXaddition}&\tikzfigS{./proof-Mstar/Mstarrev-step1}\\[0.4cm]
  &\eqeqref{mctrlRX2piP}&\tikzfigS{./proof-Mstar/Mstarrev-step2}\\[0.4cm]
  &\overset{\text{\cref{Palwayscommute}}}{=}&\tikzfigS{./proof-Mstar/Mstarrev-step3}\\[0.4cm]
  &\eqeqrefwcontrol&\tikzfigS{./proof-Mstar/Mstarrev-step4}\\[0.4cm]
  &\eqeqrefwcontrol&\tikzfigS{./proof-Mstar/Mstarrev-step5}\\[0.4cm]
  &\eqeqref{passagepihb}&\tikzfigS{./proof-Mstar/Mstarrev-step6}\\[0.4cm]
  &\overset{\text{\cref{Palwayscommute}}}{=}&\tikzfigS{./proof-Mstar/Mstarrev-step7}\\[0.4cm]
  &\eqeqrefwcontrol&\tikzfigS{./proof-Mstar/Mstarrev-step8}\\[0.4cm]
  &\overset{\eqref{mctrlPlift}\refwcontrol}{=}&\tikzfigS{./proof-Mstar/Mstarrev-step9}\\[0.4cm]
  &\overset{\text{\cref{Palwayscommute},}}{\eqeqref{mctrlPaddition}}&\tikzfigS{./proof-Mstar/Mstarrev-step10}\\[0.4cm]
  &\eqdeuxeqref{mctrlPaddition}{mctrlP2piperiodic}&\tikzfigS{./proof-Mstar/Mstarrev-step11}
\end{eqnarray*}
\item If $\delta'_4\in[0,2\pi)$ but $\delta'_3=0$ and $\delta'_2\neq0$, then the conditions of \cref{fig:QCaxioms} imply that $\delta'_6=0$, so that
\begin{eqnarray*}
  &&\tikzfigS{./proof-Mstar/Mstar-casedelta9rev-step1}\\[0.4cm]
  &\eqeqref{mctrlzeroid}&\tikzfigS{./proof-Mstar/Mstar-casedelta9rev-step2}\\[0.4cm]
  &\overset{\eqref{commctrlphaseenhaut},}{\overset{\text{\cref{Palwayscommute}}}{=}}&\tikzfigS{./proof-Mstar/Mstar-casedelta9rev-step3}\\[0.4cm]
  &\eqeqref{mctrlzeroid}&\tikzfigS{./proof-Mstar/Mstar-casedelta9rev-step4}
\end{eqnarray*}
\item If $\delta'_4\in[2\pi,4\pi)$, $\delta'_3=0$ and $\delta'_2\neq0$, then the conditions of \cref{fig:QCaxioms} still imply that $\delta'_6=0$, so that
\begin{eqnarray*}
  &&\tikzfigS{./proof-Mstar/Mstar-casedelta9rev-step1}\\[0.4cm]
  &\eqeqref{mctrlzeroid}&\tikzfigS{./proof-Mstar/Mstar-casedelta9rev-step2}\\[0.4cm]
  &\eqeqref{mctrlRXaddition}&\tikzfigS{./proof-Mstar/Mstar-casedelta9revbis-step3}\\[0.4cm]
  &\eqeqref{mctrlRX2piP}&\tikzfigS{./proof-Mstar/Mstar-casedelta9revbis-step4}\\[0.4cm]
  &\eqeqref{mctrlPaddition}&\tikzfigS{./proof-Mstar/Mstar-casedelta9revbis-step5}\\[0.4cm]
  &\overset{\eqref{commctrlphaseenhaut},}{\overset{\text{\cref{Palwayscommute}}}{=}}&\tikzfigS{./proof-Mstar/Mstar-casedelta9revbis-step6}\\[0.4cm]
  &\eqeqref{mctrlzeroid}&\tikzfigS{./proof-Mstar/Mstar-casedelta9revbis-step7}
\end{eqnarray*}
\end{itemize}

Thus, given any instance of \cref{Mstar}, its right-hand side can be transformed into $\tikzfigS{./qc-axioms/Mstar-right}$ where
\begin{itemize}
\item $\delta_1=\delta'_1$
\item $\delta_3=\delta'_3$
\item $\delta_5=\delta'_5$
\item $\delta_7=\delta'_7$\medskip
\item $\delta_2=\begin{cases}0&\text{if $\delta'_3=0$ and $\delta'_2\neq0$}\\\delta'_2&\text{else}\end{cases}$\medskip
\item $\delta_4=\begin{cases}\delta'_4&\text{if $\delta'_4\in[0,2\pi)$}\\\delta'_4-2\pi&\text{if $\delta'_4\in[2\pi,4\pi)$}\end{cases}$\medskip
\item $\delta_6=\begin{cases}2\pi-\delta'_6&\text{if $\delta'_6\neq0$ and $\delta'_4\in[2\pi,4\pi)$}\\\delta'_6&\text{else}\end{cases}$\medskip
\item $\delta_8=\begin{cases}\delta'_8+\pi\bmod 2\pi&\text{if $\delta'_6\neq0$ and $\delta'_4\in[2\pi,4\pi)$}\\\delta'_8&\text{else}\end{cases}$\medskip
\item $\delta_9=\begin{cases}\delta'_2&\text{if $\delta'_3=0$, $\delta'_2\neq0$ and $\delta'_4\in[0,2\pi)$}\\\delta'_2+\pi&\text{if $\delta'_3=0$, $\delta'_2\neq0$ and $\delta'_4\in[2\pi,4\pi)$}\\0&\text{if ($\delta'_3\neq0$ or $\delta'_2=0$) and $\delta'_4\in[0,2\pi)$}\\\pi&\text{if ($\delta'_3\neq0$ or $\delta'_2=0$) and $\delta'_4\in[2\pi,4\pi)$}\end{cases}$.
\end{itemize}
It is easy to check that these angles $\delta_j$ satisfy the conditions of \cref{Mstarold} whenever the $\delta'_j$ satisfy the conditions of \cref{fig:QCaxioms}.

Let $g$ be the function mapping any $8$-tuple of angles $\delta'_k$ corresponding to the RHS of some instance of \cref{Mstar}, to the $9$-tuple of angles $\delta_j$ given by the formulas just above. Conversely, let $f$ be the function mapping any $9$-tuple of angles $\delta_j$ corresponding to the right-hand side of some instance of \cref{Mstarold}, to the $8$-tuple of angles $\delta'_k$ given by the formulas given before. 

Given any $8$-tuple $\vec\delta'\coloneqq(\delta'_k)_{k\in\{1,...,8\}}$ of angles corresponding to the right-hand side of some instance of \cref{Mstar}, one has $f(g(\vec\delta'))=\vec\delta'$. Indeed, let $(\delta_j)_{j\in\{1,...,9\}}\coloneqq g(\vec\delta')$ and $(\delta''_k)_{k\in\{1,...,8\}}\coloneqq f(g(\vec\delta'))$. 
\begin{itemize}
\item By definition, one has $\delta''_k=\delta'_k$ for $j\in\{1,3,5,7\}$.
\item Concerning $\delta''_2$:
\begin{itemize}
\item If $\delta'_3\neq0$ or $\delta'_2=0$, then $\delta_2=\delta'_2$, and $\delta_9$ is either $0$ or $\pi$ depending on $\delta'_4$, so that $\delta''_2=\delta_2=\delta'_2$.
\item If $\delta'_3=0$, $\delta'_2\neq0$ and $\delta'_4\in|0,2\pi)$ then $\delta_9=\delta'_2\in(0,\pi)$, so that $\delta''_2=\delta_9=\delta'_2$.
\item If $\delta'_3=0$, $\delta'_2\neq0$ and $\delta'_4\in|2\pi,4\pi)$ then $\delta_9=\delta'_2+\pi\in(\pi,2\pi)$, so that $\delta''_2=\delta_9-\pi=\delta'_2$.
\end{itemize}
\item Concerning $\delta''_4$:
\begin{itemize}
\item If $\delta'_4\in[0,2\pi)$, then $\delta_4=\delta'_4$, and $\delta_9$ is either $0$ or $\delta'_2$, which in any case is in $[0,\pi)$, so that $\delta''_4=\delta_4=\delta'_4$.
\item If $\delta'_4\in[2\pi,4\pi)$, then $\delta_4=\delta'_4-2\pi$, and $\delta_9$ is either $\pi$ or $\delta'_2+\pi$, which in any case is in $[\pi,2\pi)$, so that $\delta''_4=\delta_4+2\pi=\delta'_4$.
\end{itemize}
\item Concerning $\delta''_6$ and $\delta''_8$:
\begin{itemize}
\item If $\delta'_6=0$, then $\delta_8=\delta'_8$, and $\delta_6=\delta'_6=0$, so that $\delta''_6=\delta_6=\delta'_6\mathrel{(=}0)$ and $\delta''_8=\delta_8=\delta'_8$.
\item If $\delta'_4\in[0,2\pi)$, then on the one hand, $\delta_6=\delta'_6$ and $\delta_8=\delta'_8$, and on the other hand, $\delta_9$ is either $0$ or $\delta'_2$, which in any case cannot be equal to $\pi$, so that $\delta''_6=\delta_6=\delta'_6$ and $\delta''_8=\delta_8=\delta'_8$.
\item If $\delta'_6\neq0$ and $\delta'_4\in[2\pi,4\pi)$, then the conditions of \cref{fig:QCaxioms} imply that we cannot have both $\delta'_3=0$ and $\delta'_2\neq0$, so that $\delta_9=\pi$, $\delta_6=2\pi-\delta'_6$, $\delta''_6=2\pi-\delta_6=\delta'_6$, $\delta_8=\delta'_8+\pi\bmod 2\pi$, and since $\delta'_8\in[0,2\pi)$, $\delta''_8=\delta_8+\pi\bmod 2\pi=\delta'_8$.
\end{itemize}
\end{itemize}
Thus, given any instance of \cref{Mstarold}, the $8$-tuple $\vec\delta'$ of the angles of its RHS satisfies $\vec\delta'=f(g(\vec\delta'))$. Since we have proved that the $9$-tuple $g(\vec\delta')$ corresponds to the angles of the RHS of the instance of \cref{Mstarold} with same LHS, and it has been proved in \cite{CHMPV} that this $9$-tuple is uniquely determined by the LHS, this proves that the $8$-tuple $\vec\delta'$ is uniquely determined by the LHS as well.

%%%%%%%%%%%%%%%%%%%%%%%%%%%%%%%%%%%%%%%%%%%%%%%%%%%%%%%%%%%%%%%%%%%%%%%%%%%%%%%
%%%%%%%%%%%%%%%%%%%%%%%%%%%%%%%%%%%%%%%%%%%%%%%%%%%%%%%%%%%%%%%%%%%%%%%%%%%%%%%
\section{Completeness of $\QCiso$}
\label{appendix:proof-qciso}

\subsection{Proof of \Cref{lem:InitCtrl}}\label{appendix:proofInitCtrl}
\begin{proof}[Proof of \Cref{lem:InitCtrl}]
By assumption, $C$ is a $n\to n+1$ $\propQCiso$-circuit. The only generator of $\propQCiso$ that does not preserve the number of qubits is qubit initialisation $\vdash$. As there is no generator that reduces the number of qubit, there is exactly one $\vdash$ in the circuit. Using the axioms of prop, we can pull this qubit initialisation to the top left so as to get $\scalebox{0.7}{\tikzfig{./qciso-axioms/initCwhatev1}} = \scalebox{0.7}{\tikzfig{./qciso-axioms/initCwhatev3}}$ where $C'$ is a $\propQC$-circuit.

Let $U = \interp{C'}$. Since $U(\ket0\otimes Id) = \ket0\otimes Id$, $U$ is of the form $U = \left(\begin{array}{c|c}I & 0\\ \hline 0 & U'\end{array}\right)$ with $U'$ unitary. By universality of $\propQC$-circuits (\Cref{prop:QC-universal}), there exists a $\propQC$-circuit $C_{U'}$ that implements $U'$, using only global phase, phases, Hadamards, CNots and swaps. 
One can apply the following transformations $\scalebox{0.7}{\tikzfig{./gates/H}} \to \scalebox{0.7}{\tikzfig{./qc-axioms/eulerH}}$ and $\scalebox{0.7}{\tikzfig{./qc-axioms/SWAP}}\to \scalebox{0.7}{\tikzfig{./qc-axioms/CNOT12CNOT21CNOT12}}$, it leads to $H$-free, swap-free circuit $\tilde C_{U'}$ provably equivalent (with QC) to $C_{U'}$.

By controlling each of the gates constituting $\tilde C_{U'}$ (using definitions in \Cref{fig:shortcutcircuits}, and taking $P(\varphi)$ as the control of global phase $\varphi$) with a fresh qubit, we get a circuit $\Lambda \tilde C_{U'}$ such that $\interp{\Lambda \tilde C_{U'}} = \interp{C'}$, and where the fresh qubit only sees Phases $P$ and the control part of some gates. By completeness of QC we have $\QC \vdash C'= \Lambda \tilde C_U $ so   $\QCiso \vdash \scalebox{0.7}{\tikzfig{./qciso-axioms/initCwhatev1}} =\scalebox{0.7}{\tikzfig{./qciso-axioms/initCwhatev1-lambda}}$. 
Thus, one can push the initialisation of $\scalebox{0.7}{\tikzfig{./qciso-axioms/initCwhatev1-lambda}}$ through all the (controlled) gates  using Equations~\ref{ancillamctrlPneg}, \ref{ancillamctrlRXneg} and \ref{ancillaTOFneg}, leading to  $\scalebox{0.7}{\tikzfig{./qciso-axioms/initCwhatev2}}$.
\end{proof}

\subsection{Proof of \Cref{lem:CSD}}\label{appendix:proofCSD}
To show \Cref{lem:CSD}, we first introduce useful known decompositions, and recall as well the usual (balanced) CSD for reference:
\begin{lemma}[Matrix Decompositions \cite{Trefethen1997numerical}]~
\begin{itemize}
\item \textbf{RQ (QL):} Let $A$ be a square matrix. There exists $Q$ unitary and $R$ upper triangular (with non-negative diagonal coefficients) such that $A=RQ$. There exists $Q'$ unitary and $L$ lower triangular (with non-negative diagonal coefficients) such that $A=Q'L$.
\item \textbf{Singular Value (SVD):} For any matrix $A$, there exist $U$ and $V$ unitary, and $D=\operatorname{diag}(d_1,...,d_n)$ real diagonal with $d_i\geq d_{i+1}\geq0$, such that $A = UDV$.\\
$\left(\begin{array}{c|c}I & 0\\\hline 0 & U\end{array}\right)
\left(\begin{array}{c|c}I & 0\\\hline 0 & D\end{array}\right)
\left(\begin{array}{c|c}I & 0\\\hline 0 & V\end{array}\right)$
is then an SVD of $\left(\begin{array}{c|c}I & 0\\\hline 0 & A\end{array}\right)$.
\item \textbf{(balanced) Cosine-Sine (CSD):} Let $U=\left(\begin{array}{c|c}U_{00} & U_{01}\\\hline U_{10} & U_{11}\end{array}\right)$ be unitary with $U_{ij}$ all of the same dimension. Then, there exist 
$A_0$, $A_1$, $B_0$, $B_1$ unitary,
$C = \operatorname{diag}(c_0,...,c_n)$ and $S = \operatorname{diag}(s_0,...,s_n)$ such that
$U = \left(\begin{array}{c|c}A_0 & 0\\\hline 0 & A_1\end{array}\right)
\left(\begin{array}{c|c}C & -S\\\hline S & C\end{array}\right)
\left(\begin{array}{c|c}B_0 & 0\\\hline 0 & B_1\end{array}\right)$
and $C^2+S^2 = I$.
\end{itemize}
\end{lemma}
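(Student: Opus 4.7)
The lemma bundles three classical results from numerical linear algebra (see \cite{Trefethen1997numerical}), so the plan is to recall the standard proofs rather than to develop anything new. I would treat each of the three decompositions in turn, reusing each as an ingredient of the next.

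For the RQ decomposition (and symmetrically QL), the plan is to reduce to the more familiar QR decomposition by conjugating with the reversal permutation. Starting from a QR factorisation $\Pi A^* \Pi = Q' R'$ with $Q'$ unitary and $R'$ upper triangular with non-negative diagonal, taking adjoints and reversing again yields $A = R Q$ with $R$ upper triangular and $Q$ unitary. The non-negativity of the diagonal of $R$ is then arranged by absorbing a diagonal unitary of signs into $Q$. The QL version is obtained the same way from a QR factorisation of the row-reversed matrix.

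For the SVD, the plan is to apply the spectral theorem to the Hermitian positive semi-definite matrix $A^* A$, producing $A^* A = V^* D^2 V$ with $V$ unitary and $D = \operatorname{diag}(d_1,\dots,d_n)$, $d_i \geq d_{i+1} \geq 0$. Define $U$ on the column space of $A$ by $u_i := d_i^{-1} A v_i^*$ whenever $d_i > 0$, and complete to a unitary on the whole space using Gram--Schmidt; a direct computation then gives $A = UDV$. The block-extension assertion is immediate since $\operatorname{diag}(I,A) = \operatorname{diag}(I,U)\,\operatorname{diag}(I,D)\,\operatorname{diag}(I,V)$, and each factor preserves the required structural property (unitarity or non-negativity, with the stated sorting).

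For the balanced CSD, the plan is to reduce everything to an SVD of the upper-left block and then exploit unitarity of $U$. Apply the SVD to $U_{00} = A_0 C B_0$ with $C = \operatorname{diag}(c_1,\dots,c_n)$ diagonal non-negative; since $U_{00}$ is a block of a unitary, one has $c_i \in [0,1]$. Column-orthonormality $U_{00}^* U_{00} + U_{10}^* U_{10} = I$ then gives $U_{10}^* U_{10} = B_0^*(I-C^2) B_0 = B_0^* S^2 B_0$ with $S := \sqrt{I-C^2}$ diagonal and non-negative, satisfying $C^2+S^2=I$. A polar-decomposition argument produces a unitary $A_1$ with $U_{10} = A_1 S B_0$, giving the stated form of the left column of blocks. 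Running the same argument on the right column, and using the cross-relation $U_{00}^* U_{01} + U_{10}^* U_{11} = 0$ to pin down signs, yields $U_{01} = -A_0 S B_1$ and $U_{11} = A_1 C B_1$ for a suitable unitary $B_1$. The main obstacle is the compatibility issue: the SVDs of the upper row and left column of blocks must share the same $A_0$ and $B_0$, which works because they diagonalise the common matrices $U_{00}^* U_{00}$ and $U_{00} U_{00}^*$; additional care is needed at indices where $c_i \in \{0,1\}$, since there the freedom in completing $A_1$ or $B_1$ must be used to match the orthonormality constraints on the remaining blocks.
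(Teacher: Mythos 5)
This lemma is one the paper deliberately does not prove: it is recalled as a package of classical facts with a citation to \cite{Trefethen1997numerical}, so there is no in-paper argument to compare against, and supplying textbook proofs as you do is a legitimate (if optional) exercise. Your reductions for RQ/QL (conjugating a QR factorisation of $\Pi A^*\Pi$ by the reversal permutation, then absorbing a diagonal of phases to normalise the diagonal of $R$) and for the SVD (spectral theorem for $A^*A$, defining $U$ on the range and completing by Gram--Schmidt) are the standard arguments and are correct; the block-extension remark for the SVD is immediate as you say.

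The one place where your sketch stops short of a proof is exactly the hard step of the balanced CSD: after fixing $A_0, C, B_0$ from the SVD of $U_{00}$ and producing $A_1$ and $B_1$ by polar-type decompositions of $U_{10}B_0^*$ and $A_0^*U_{01}$, the relation $S\,(A_1^*U_{11}B_1^* - C)=0$ obtained from column orthonormality only determines $A_1^*U_{11}B_1^*$ on the rows where $s_i\neq 0$; on the kernel of $S$ (the indices with $c_i=1$) there remains an undetermined unitary block that must be absorbed into $A_1$ or $B_1$, and this is where the residual freedom in the polar factors has to be spent. You correctly flag this but do not carry it out. It is worth noting that the paper's own proof of its modified decomposition (\cref{lem:CSD}) shows precisely how to discharge it: the middle matrix is exhibited with an extra unitary block $X_0$ and a sign block $-I$, and the factor $\operatorname{diag}(X_0,-I)$ is then folded into $B_1$. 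Adapting that manoeuvre (or citing \cite{Paige1994history}) would close your sketch; as written, the CSD item is a correct plan with its most delicate step deferred rather than a complete proof.
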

A more general version of the CSD exists for ``unbalanced'' partitions of $U$ i.e.~when the $U_{ij}$ do not not have the same dimensions, but we will not use it in this paper.

\begin{proof}[Proof of \Cref{lem:CSD}]
This is a small variation on the usual CSD. Let us start with
$U=\left(\begin{array}{c|c|c}I&0&0\\\hline 0 & U_{00} & U_{01}\\\hline 0&U_{10} & ~~\raisebox{-0.5em}{\vphantom{\rule{1pt}{2em}}}U_{11}~~\end{array}\right)$.
Let:
\begin{itemize}
\item $A_0 C_0 B_0$ be an SVD of $U_{00}$,
\item $A_1R$ be a QL decomposition of $\left(\begin{array}{c|c}0 & U_{10}B_0^\dag\end{array}\right)$ and
\item $LB_1'$ be an RQ decomposition of $\left(\begin{array}{c} 0 \\ \hline A_0^\dag U_{01}\end{array}\right)$.
\end{itemize}
The unitarity forces the diagonal components of $C_0$ to be between $0$ and $1$. If $1$s appear, they do so as the first diagonal components, as the SVD sorts then from largest to smallest. We then denote $C$ the submatrix of $C_0$ which has only $<1$ components.
 We then have:
$U = \left(\begin{array}{c|c|c}I&0&0\\\hline 0 & A_0 & 0\\\hline 0&0 & ~~\raisebox{-0.5em}{\vphantom{\rule{1pt}{2em}}}A_1~~\end{array}\right)
\left(\begin{array}{c|c}
\begin{array}{c|c}I&0\\\hline0&C\end{array} &  R \\ \hline
L\raisebox{-0.5em}{\vphantom{\rule{1pt}{2em}}} & A_1^\dag U_{11}B_1'^\dag
\end{array}\right)
\left(\begin{array}{c|c|c}I&0&0\\\hline 0 & B_0 & 0\\\hline 0&0 & ~~\raisebox{-0.5em}{\vphantom{\rule{1pt}{2em}}}B_1'~~\end{array}\right)$.
Notice that since $C\neq C_0$ in general, $A_0$ and $C$ may not be of the same dimensions. 
The orthonormality of the columns (resp.~the rows) of the middle matrix forces it to be of the form: 
$\left(\begin{array}{c|c|c|c}
I&0&0&0 \\ \hline
0&C&0&S'\\ \hline
0&0&X_0&0\\ \hline
0&S&0&C'
\end{array}\right)$ with $C^2+S^2 = C'^2+S^2 = I = C^2+S'^2 = C'^2+S'^2$.
Since both $S$ and $S'$ are non-negative diagonal, this implies $S=S'$, which then, by unitarity, forces $C'=-C$. 
Moreover,by unitarity again, $X_0$ itself is unitary. We hence have 
$\left(\begin{array}{c|c|c|c}
I&0&0&0 \\ \hline
0&C&0&S'\\ \hline
0&0&X_0&0\\ \hline
0&S&0&C'
\end{array}\right)
=\left(\begin{array}{c|c|c|c}
I&0&0&0 \\ \hline
0&C&0&-S\\ \hline
0&0&I&0\\ \hline
0&S&0&C
\end{array}\right)
\left(\begin{array}{c|c|c|c}
I&0&0&0 \\ \hline
0&I&0&0\\ \hline
0&0&X_0&0\\ \hline
0&0&0&-I
\end{array}\right)
$.
Wrapping it all up, we have:
\[U = 
\left(\begin{array}{c|c|c}I&0&0\\\hline 0 & A_0 & 0\\\hline 0&0 & ~~\raisebox{-0.5em}{\vphantom{\rule{1pt}{2em}}}A_1~~\end{array}\right)
\left(\begin{array}{c|c|c|c}
I&0&0&0 \\ \hline
0&C&0&-S\\ \hline
0&0&I&0\\ \hline
0&S&0&C
\end{array}\right)
\left(\begin{array}{c|c|c|c}
I&0&0&0 \\ \hline
0&I&0&0\\ \hline
0&0&X_0&0\\ \hline
0&0&0&-I
\end{array}\right)
\left(\begin{array}{c|c|c}I&0&0\\\hline 0 & B_0 & 0\\\hline 0&0 & ~~\raisebox{-0.5em}{\vphantom{\rule{1pt}{2em}}}B_1'~~\end{array}\right)
\]
which gives the desired result with $B_1:= \left(\begin{array}{c|c}X_0&0\\\hline 0 & -I \end{array}\right)B_1'$.
\end{proof}

\subsection{Proof of \Cref{thm:QCiso-completeness}}\label{appendix:proofQCiso-completeness}
\begin{proof}[Proof of \Cref{thm:QCiso-completeness}]
Let $C_1$ and $C_2$ be two circuits of $\propQCiso(n,n+k)$, such that $\interp{C_1}=\interp{C_2}$. Using the same reasoning as in the proof of \cref{lem:InitCtrl}, there exist two $\propQC$-circuits $C_{u_1}$ and $C_{u_2}$ such that $C_i = \tikzfig{./qciso-completeness/iso-ui}$. By completeness of $\QC$, showing that $C_1=C_2$ is equivalent to showing that $\tikzfig{./qciso-completeness/initU-initId}$ for $C_U = C_{u_1}C_{u_2}^\dag$ (\Cref{prop:crossinggate}).

Let us denote $U= \interp{C_U}$. Notice that $U$ has to be such that $U\begin{pmatrix}I\\0\\\vdots\\0\end{pmatrix} = \begin{pmatrix}I\\0\\\vdots\\0\end{pmatrix}$, which, by unitarity, means
$U = \left(\begin{array}{c|c}I & 0\\ \hline 0 & \raisebox{-0.75em}{\vphantom{\rule{1pt}{2em}}}~\star~\end{array}\right)$.
Let us now prove that $\QCiso$ proves the equality, by reasoning inductively on the number $n$ of initialised qubits.

\noindent\textbf{Case $\bm{n=0}$:} In that case, there is no initialised qubit, and since $\interp{U}=I$, by completeness of $\QC$, $\QC\vdash U = Id$.

\noindent\textbf{Case $\bm{n=1}$:} There, we have $U = \left(\begin{array}{c|c}I & 0\\ \hline 0 & U'\end{array}\right)$. % which means $C_U$ is of the form $\tikzfig{./qciso-completeness/ctrl-U}$. 
Then \Cref{lem:InitCtrl} gives directly the expected equality.

\noindent\textbf{Case $\bm{n+1}$:} $U$ is of the form $U=\left(\begin{array}{c|c|c}I&0&0\\\hline 0 & U_{00} & U_{01}\\\hline 0&U_{10} & ~~\raisebox{-0.5em}{\vphantom{\rule{1pt}{2em}}}U_{11}~~\end{array}\right)$ (with $U_{00}$ and $U_{11}$ square). Notice that $U$ has dimension $2^{k+n+1}$. We can hence use the modified CSD to get:\\
$U = \left(\begin{array}{c|c|c}I&0&0\\\hline 0 & A_0 & 0\\\hline 0&0 & ~~\raisebox{-0.5em}{\vphantom{\rule{1pt}{2em}}}A_1~~\end{array}\right)
\left(\begin{array}{c|c|c|c}
I&0&0&0 \\ \hline
0&C&0&-S\\ \hline
0&0&I&0\\ \hline
0&S&0&C
\end{array}\right)
\left(\begin{array}{c|c|c}I&0&0\\\hline 0 & B_0 & 0\\\hline 0&0 & ~~\raisebox{-0.5em}{\vphantom{\rule{1pt}{2em}}}B_1~~\end{array}\right)$
with $C^2+S^2=I$.

The middle matrix can be seen as a product of:
\begin{align*}
R_j :=& c_j\ketbra jj + s_j\ketbra {j+10...0}j - s_j\ketbra j{j+10...0} + c_j\ketbra{j+10...0}{j+10...0}\\
&+ \sum_{x\notin\{j, j+10...0\} }\ketbra xx
\end{align*}
for $\overbrace{0...0}^{n+1}\overbrace{1...1}^k < j < 1\overbrace{0...0}^{n+k}$ in binary. Notice that the first $R_j$s might be the identity, if $A_0$ and $C$ do not have the same dimensions. Notice also that $j$ has at least one $1$ in its first $n+1$ bits. $R_j$ is hence a rotation $\begin{pmatrix}c_j & -s_j\\ s_j & c_j\end{pmatrix}=P(\frac\pi2)R_X(\theta_j)P(\text{-}\frac\pi2)$ on the first qubit, controlled by all the other qubits. Matrix $U$ can hence be implemented by the following circuit:
\begin{equation}
\tikzfig{./qciso-completeness/NF-induct}\label{eq:U-decomp}
\end{equation}
where, similarly to \cite{Mottonen2006decompositions}, e.g.~$\tikzfig{./qciso-completeness/example-bwcontrol}$ is a syntactic sugar for the composition of all rotations on the first qubit, controlled by the $\ell$-th qubit and anti-controlled by the $\ell+1$-th qubit:
\[\tikzfig{./qciso-completeness/example-bwcontrol}=\tikzfig{./qciso-completeness/example-bwcontrol-r}\]
(notice that each $R_X$ should be surrounded by $P(\frac\pi2)$ on the left and $P(\text{-}\frac\pi2)$ on the right, but all non-extremal ones simplify using \eqref{mctrlPaddition} and \eqref{axQCP0}.) 
By completeness of $\QC$, $C_U$ can be turned into the circuit in (\ref{eq:U-decomp}).

The first block (with $A_1$) is such that its interpretation satisfies $\forall \ket \varphi \in \C^{2^n}$, $\left(\begin{array}{@{}c@{\;}|@{\;}c@{}}I&\\\hline&A_1\end{array}\right)(\ket0\otimes \ket \varphi) = \ket0\otimes \ket \varphi$, hence using \Cref{lem:InitCtrl}, it is provably deleted by intialisation on the first qubit, and a fortiori when the $n+1$ first qubits are initialised. Similarly, the last block (with $B_1$) is deleted by the qubit initialisations.

For the second block, notice that:
\[\interp{\tikzfig{./qciso-completeness/NF-subinduct/NF-subinduct_00}}
=\interp{\tikzfig{./qciso-completeness/NF-subinduct/NF-subinduct_01}}\]
which implies 
\[\interp{\tikzfig{./qciso-completeness/NF-subinduct/NF-subinduct_02}}
=\interp{\tikzfig{./qciso-completeness/NF-subinduct/NF-subinduct_03}}\]
We can hence apply the induction hypothesis on $\tikzfigS{./qciso-completeness/NF-subinduct/NF-subinduct_04}$ from which we conclude that the second block is deleted by initialisations on the $n+1$ first qubits. Similary, the penultimate block (with $B_0$) is deleted by initialisations on the first $n+1$ qubits.

All the controlled rotations in the middle are deleted by the initialisations thanks to \Cref{lem:InitCtrl}. Finally, it is provable that:
\[\tikzfig{./qciso-completeness/NF-induct-ancillae}\]
% \end{itemize}
\end{proof}

%%%%%%%%%%%%%%%%%%%%%%%%%%%%%%%%%%%%%%%%%%%%%%%%%%%%%%%%%%%%%%%%%%%%%%%%%%%%%%%
%%%%%%%%%%%%%%%%%%%%%%%%%%%%%%%%%%%%%%%%%%%%%%%%%%%%%%%%%%%%%%%%%%%%%%%%%%%%%%%
\section{Completeness of $\QCancilla$}
\subsection{Proof of Proposition \ref{prop:mctrlaltdef}}
\label{appendix:proofmctrlaltdef}
First, we derive the following equations. Equations \eqref{ctrctrlPdefTOF} and \eqref{ctrctrlctrlPdefTOF} are alternative definitions of 2-controlled and 3-controlled phase gates. Equation \eqref{ctrlPinit} tells us how we can express a simply controlled gate with Toffoli gates and one 1-qubit gate using one ancilla.
\begin{equation}\label{ctrctrlPdefTOF}\begin{array}{rrccl}\tikzfigS{./qcancilla-completeness/alt-defs/ctrlctrlPphi}&=&\tikzfigS{./qcancilla-completeness/alt-defs/ctrlctrlPphiTOF}\end{array}\end{equation}
\begin{equation}\label{ctrctrlctrlPdefTOF}\begin{array}{rrccl}\tikzfigS{./qcancilla-completeness/alt-defs/ctrlctrlctrlPphi}&=&\tikzfigS{./qcancilla-completeness/alt-defs/ctrlctrlctrlPphiTOF}\end{array}\end{equation}
\begin{equation}\label{ctrlPinit}\begin{array}{rrccl}\tikzfigS{./qcancilla-completeness/ctrlPphiinit}&=&\tikzfigS{./qcancilla-completeness/TOFPphiTOF}\end{array}\end{equation}

\begin{proof}[Proof of \cref{ctrctrlPdefTOF}]
  \begin{gather*}
    \tikzfigS{./qcancilla-completeness/alt-defs/ctrlctrlPphi}
    \eqeqref{mctrlPinducdef}\tikzfigS{./qcancilla-completeness/alt-defs/ctrctrlPdefTOF-step-1}\eqeqrefwcontrol\tikzfigS{./qcancilla-completeness/alt-defs/ctrctrlPdefTOF-step-2}\\[0.4cm]
    \eqeqrefwcontrol\tikzfigS{./qcancilla-completeness/alt-defs/ctrctrlPdefTOF-step-3}
    \eqquatreeqref{XX}{TOFTOF}{mctrlzeroid}{mctrlPaddition}\tikzfigS{./qcancilla-completeness/alt-defs/ctrctrlPdefTOF-step-4}\\[0.4cm]
    \eqeqrefwcontrol\tikzfigS{./qcancilla-completeness/alt-defs/ctrctrlPdefTOF-step-5}\eqeqrefwcontrol\tikzfigS{./qcancilla-completeness/alt-defs/ctrlctrlPphiTOF}
  \end{gather*}
\end{proof}

\begin{proof}[Proof of \cref{ctrctrlctrlPdefTOF}]
  \begin{gather*}
    \tikzfigS{./qcancilla-completeness/alt-defs/ctrlctrlctrlPphi}
    \eqeqref{mctrlPinducdef}\tikzfigS{./qcancilla-completeness/alt-defs/ctrctrlctrlPdefTOF-step-1}\eqeqrefwcontrol\tikzfigS{./qcancilla-completeness/alt-defs/ctrctrlctrlPdefTOF-step-2}\\[0.4cm]
    \eqeqrefwcontrol\tikzfigS{./qcancilla-completeness/alt-defs/ctrctrlctrlPdefTOF-step-3}\eqquatreeqref{XX}{TOFTOF}{mctrlzeroid}{mctrlPaddition}\tikzfigS{./qcancilla-completeness/alt-defs/ctrctrlctrlPdefTOF-step-4}\\[0.4cm]
    \eqeqrefwcontrol\tikzfigS{./qcancilla-completeness/alt-defs/ctrctrlctrlPdefTOF-step-5}
    \eqeqrefwcontrol\tikzfigS{./qcancilla-completeness/alt-defs/ctrlctrlctrlPphiTOF}
  \end{gather*}
\end{proof}

\begin{proof}[Proof of \cref{ctrlPinit}]
  \begin{gather*}
    \tikzfigS{./qcancilla-completeness/TOFPphiTOF}
    \eqeqref{ancillamctrlPneg}\tikzfigS{./qcancilla-completeness/ctrlPinit-step-1}
    \eqeqref{mctrlPlift}\tikzfigS{./qcancilla-completeness/ctrlPinit-step-2}\\[0.4cm]
    \eqeqref{ctrctrlPdefTOF}\tikzfigS{./qcancilla-completeness/ctrlPinit-step-3}
    \eqtroiseqref{TOFTOF}{Paddition}{axQCP0}\tikzfigS{./qcancilla-completeness/ctrlPinit-step-4}
    \eqeqref{axQCISOinitP}\tikzfigS{./qcancilla-completeness/ctrlPphiinit}
  \end{gather*}
\end{proof}

We first prove Equations \eqref{Paltdef} and \eqref{RXaltdef} by induction on the number of qubits, whose base cases contains $n=4$ qubits. The base case for Equations \eqref{Paltdef} can be derived as follows.
\begin{gather*}
  \tikzfigS{./qcancilla-completeness/alt-defs/P-bc-step-0}
  \eqeqref{ancillamctrlPneg}\tikzfigS{./qcancilla-completeness/alt-defs/P-bc-step-1}
  \eqeqref{mctrlPlift}\tikzfigS{./qcancilla-completeness/alt-defs/P-bc-step-2}
  \eqeqref{ctrctrlctrlPdefTOF}\tikzfigS{./qcancilla-completeness/alt-defs/P-bc-step-3}\\[0.4cm]
  \eqtroiseqref{TOFTOF}{Paddition}{mctrlzeroid}\tikzfigS{./qcancilla-completeness/alt-defs/P-bc-step-4}
  \eqeqref{mctrlPlift}\tikzfigS{./qcancilla-completeness/alt-defs/P-bc-step-5}
  \eqeqref{ancillamctrlPneg}\tikzfigS{./qcancilla-completeness/alt-defs/P-bc-step-6}
\end{gather*}

The base case for Equations \eqref{Paltdef} can be derived as follows.
\begin{gather*}
  \tikzfigS{./qcancilla-completeness/alt-defs/RX-bc-step-0}
  \eqeqref{mctrlPdef}\tikzfigS{./qcancilla-completeness/alt-defs/RX-bc-step-1}
  \eqeqref{Paltdef}\tikzfigS{./qcancilla-completeness/alt-defs/RX-bc-step-2}\\[0.4cm]
  \eqeqref{ctrlPinit}\tikzfigS{./qcancilla-completeness/alt-defs/RX-bc-step-3}
  \eqeqref{TOFTOF}\tikzfigS{./qcancilla-completeness/alt-defs/RX-bc-step-4}
  \eqeqref{mctrlPdef}\tikzfigS{./qcancilla-completeness/alt-defs/RX-bc-step-5}
\end{gather*}

The induction step for Equation \eqref{RXaltdef} can be derived as follows.
\begin{gather*}
  \tikzfigS{./qcancilla-completeness/alt-defs/RX-step-0}
  \eqeqref{mctrlRXSWAP}\tikzfigS{./qcancilla-completeness/alt-defs/RX-step-1}
  \eqeqref{mctrlRXdef}\tikzfigS{./qcancilla-completeness/alt-defs/RX-step-2}\\[0.4cm]
  \overset{\text{IH}}{=}\tikzfigS{./qcancilla-completeness/alt-defs/RX-step-3}
  \overset{\text{IH}}{=}\tikzfigS{./qcancilla-completeness/alt-defs/RX-step-4}\\[0.4cm]
  \eqeqref{TOFTOF}\tikzfigS{./qcancilla-completeness/alt-defs/RX-step-5}=\tikzfigS{./qcancilla-completeness/alt-defs/RX-step-6}
  \eqeqref{mctrlRXdef}\tikzfigS{./qcancilla-completeness/alt-defs/RX-step-7}
\end{gather*}

The induction step for Equation \eqref{Paltdef} can be derived as follows.
\begin{gather*}
  \tikzfigS{./qcancilla-completeness/alt-defs/P-step-0}
  \eqeqref{mctrlPdef}\tikzfigS{./qcancilla-completeness/alt-defs/P-step-1}
  \eqeqref{RXaltdef}\tikzfigS{./qcancilla-completeness/alt-defs/P-step-2}
  \overset{\text{IH}}{=}\tikzfigS{./qcancilla-completeness/alt-defs/P-step-3}\\[0.4cm]
  \eqeqref{TOFTOF}\tikzfigS{./qcancilla-completeness/alt-defs/P-step-4}
  \eqeqref{mctrlPdef}\tikzfigS{./qcancilla-completeness/alt-defs/P-step-5}
\end{gather*}

\subsection{Proof of Equation \eqref{M3} in $\QCancilla$}\label{sec:proof-M3}
The main idea of the proof of \cref{M3} is to use the \emph{Fredkin gate} (or \emph{controlled-swap gate}), defined by \cref{fredkin}.
\begin{equation}\label{fredkin}
  \begin{array}{rcl}
    \tikzfigS{./shortcut/Fredkin}&\defeq&\tikzfigS{./shortcut/Fredkindef}
  \end{array}
\end{equation}
\begin{equation}\label{M3}\tag{$\text{K}^3$}
  \begin{array}{rcl}
    \tikzfigS{./qcancilla-axioms/M3-left}&=&\tikzfigS{./qcancilla-axioms/M3-right-simp}
  \end{array}
\end{equation}
First, we derive some useful equations.

\begin{multicols}{2}

\begin{equation}\label{HHTOFHH}
  \begin{array}{rcl}
    \tikzfigS{./qcancilla-completeness/HHTOFHH-step-0}&=&\tikzfigS{./qcancilla-completeness/HHTOFHH-step-4}
  \end{array}
\end{equation}

\begin{equation}\label{HHFredkinFHH}
  \begin{array}{rcl}
    \tikzfigS{./qcancilla-completeness/HHFredkinFHH-step-0}&=&\tikzfigS{./qcancilla-completeness/HHFredkinFHH-step-3}
  \end{array}
\end{equation}

\begin{equation}\label{initTOF}
  \begin{array}{rcl}
    \tikzfigS{./qcancilla-completeness/initTOF-step-0}&=&\tikzfigS{./qcancilla-completeness/initTOF-step-5}
  \end{array}
\end{equation}

\begin{equation}\label{K1}
  \begin{array}{rcl}
    \tikzfigS{./qcancilla-completeness/K1-left}&=&\tikzfigS{./qcancilla-completeness/K1-right}
  \end{array}
\end{equation}

\begin{equation}\label{3tofs2cnots}
  \begin{array}{rcl}
    \tikzfigS{./qcancilla-completeness/3tofs2cnots-step-0}&=&\tikzfigS{./qcancilla-completeness/3tofs2cnots-step-5}
  \end{array}
\end{equation}

\begin{equation}\label{wbTOF}
  \begin{array}{rcl}
    \tikzfigS{./qcancilla-completeness/AntiCnot-is-CCNot-w-CNot/AntiCnot-is-CCNot-w-CNot_00}&=&\tikzfigS{./qcancilla-completeness/AntiCnot-is-CCNot-w-CNot/AntiCnot-is-CCNot-w-CNot_07}
  \end{array}
\end{equation}

\begin{equation}\label{5tofs}
  \begin{array}{rcl}
    \tikzfigS{./qcancilla-completeness/5tofs-step-0}&=&\tikzfigS{./qcancilla-completeness/5tofs-step-5}
  \end{array}
\end{equation}

\begin{equation}\label{TOFFredkin}
  \begin{array}{rcl}
    \tikzfigS{./qcancilla-completeness/TOFFredkin-step-0}&=&\tikzfigS{./qcancilla-completeness/TOFFredkin-step-6}
  \end{array}
\end{equation}

\begin{equation}\label{wFredkin}
  \begin{array}{rcl}
    \tikzfigS{./qcancilla-completeness/AntiCSwap/AntiCSwap_00}&=&\tikzfigS{./qcancilla-completeness/AntiCSwap/AntiCSwap_04}
  \end{array}
\end{equation}

\begin{equation}\label{wCZ-Z}
  \begin{array}{rcl}
    \tikzfigS{./qcancilla-completeness/AntiCZ-Z/AntiCZ-Z_00}&=&\tikzfigS{./qcancilla-completeness/AntiCZ-Z/AntiCZ-Z_04}
  \end{array}
\end{equation}

\begin{equation}\label{ctrlPphasegadget}
  \begin{array}{rcl}
    \tikzfigS{./qcancilla-completeness/phase-gadget-3-is-symetric/phase-gadget-3-is-symetric_00}&=&\tikzfigS{./qcancilla-completeness/phase-gadget-3-is-symetric/phase-gadget-3-is-symetric_07}
  \end{array}
\end{equation}

\begin{equation}\label{wCCZ-CZ}
  \begin{array}{rcl}
    \tikzfigS{./qcancilla-completeness/AntiCCZ-CZ/AntiCCZ-CZ_00}&=&\tikzfigS{./qcancilla-completeness/AntiCCZ-CZ/AntiCCZ-CZ_05}
  \end{array}
\end{equation}

\begin{equation}\label{wCCRX-CRX}
  \begin{array}{rcl}
    \tikzfigS{./qcancilla-completeness/AntiCCRx-CRx/AntiCCRx-CRx_00}&=&\tikzfigS{./qcancilla-completeness/AntiCCRx-CRx/AntiCCRx-CRx_04}
  \end{array}
\end{equation}

\begin{equation}\label{FredkinwbTOF}
  \begin{array}{rcl}
    \tikzfigS{./qcancilla-completeness/AntiCCNot-through-CSwap/AntiCCNot-through-CSwap_00}&=&\tikzfigS{./qcancilla-completeness/AntiCCNot-through-CSwap/AntiCCNot-through-CSwap_04}
  \end{array}
\end{equation}

\end{multicols}

\begin{proof}[Proof of \cref{HHTOFHH}]
  \begin{gather*}
    \tikzfigS{./qcancilla-completeness/HHTOFHH-step-0}
    \eqeqref{TOFdef}\tikzfigS{./qcancilla-completeness/HHTOFHH-step-1}
    \eqeqref{axQCHH}\tikzfigS{./qcancilla-completeness/HHTOFHH-step-2}
    \eqeqref{mctrlPlift}\tikzfigS{./qcancilla-completeness/HHTOFHH-step-3}
    \eqeqref{TOFdef}\tikzfigS{./qcancilla-completeness/HHTOFHH-step-4}
  \end{gather*}
\end{proof}

\begin{proof}[Proof of \cref{HHFredkinFHH}]
  \begin{gather*}
    \tikzfigS{./qcancilla-completeness/HHFredkinFHH-step-0}
    \eqeqref{fredkin}\tikzfigS{./qcancilla-completeness/HHFredkinFHH-step-1}
    \eqdeuxeqref{CNOTHH}{HHTOFHH}\tikzfigS{./qcancilla-completeness/HHFredkinFHH-step-2}=\tikzfigS{./qcancilla-completeness/HHFredkinFHH-step-3}
  \end{gather*}
\end{proof}

\begin{proof}[Proof of \cref{initTOF}]
  \begin{gather*}
    \tikzfigS{./qcancilla-completeness/initTOF-step-0}
    \eqeqref{TOFdef}\tikzfigS{./qcancilla-completeness/initTOF-step-1}
    \eqeqref{Paltdef}\tikzfigS{./qcancilla-completeness/initTOF-step-2}
    \eqeqref{mctrlPinducdef}\tikzfigS{./qcancilla-completeness/initTOF-step-3}\\[0.4cm]
    \eqeqref{axQCCZ}\tikzfigS{./qcancilla-completeness/initTOF-step-4}
    \eqeqref{axQCHH}\tikzfigS{./qcancilla-completeness/initTOF-step-5}
  \end{gather*}
\end{proof}

\begin{proof}[Proof of \cref{K1}]
  \begin{gather*}
    \tikzfigS{./qcancilla-completeness/K1-left}
    \eqeqref{axQCANCinitdest}\tikzfigS{./qcancilla-completeness/K1-step-1}
    \eqeqref{initTOF}\tikzfigS{./qcancilla-completeness/K1-step-2}
    =\tikzfigS{./qcancilla-completeness/K1-step-3}
    \eqeqref{axQC3cnots}\tikzfigS{./qcancilla-completeness/K1-step-4}\\[0.4cm]
    \eqeqref{TOFTOF}\tikzfigS{./qcancilla-completeness/K1-step-5}
    \eqeqref{initTOF}\tikzfigS{./qcancilla-completeness/K1-step-6}
    \eqeqref{axQCANCinitdest}\tikzfigS{./qcancilla-completeness/K1-right}
  \end{gather*}
\end{proof}

\begin{proof}[Proof of \cref{3tofs2cnots}]
  \begin{gather*}
    \tikzfigS{./qcancilla-completeness/3tofs2cnots-step-0}
    \eqeqref{CNOTCNOT}\tikzfigS{./qcancilla-completeness/3tofs2cnots-step-1}
    \eqdeuxeqref{CNOTHH}{HHTOFHH}\tikzfigS{./qcancilla-completeness/3tofs2cnots-step-2}
    \eqeqref{K1}\tikzfigS{./qcancilla-completeness/3tofs2cnots-step-3}\\[0.4cm]
    \eqeqref{TOFTOF}\tikzfigS{./qcancilla-completeness/3tofs2cnots-step-4}
    \eqdeuxeqref{CNOTHH}{HHTOFHH}\tikzfigS{./qcancilla-completeness/3tofs2cnots-step-5}
  \end{gather*}
\end{proof}

\begin{proof}[Proof of \cref{wbTOF}]
  \begin{gather*}
    \tikzfigS{./qcancilla-completeness/AntiCnot-is-CCNot-w-CNot/AntiCnot-is-CCNot-w-CNot_00}
    = \tikzfigS{./qcancilla-completeness/AntiCnot-is-CCNot-w-CNot/AntiCnot-is-CCNot-w-CNot_01}
    \eqdeuxeqref{axQCANCinitdest}{axQCISOinitcnot} \tikzfigS{./qcancilla-completeness/AntiCnot-is-CCNot-w-CNot/AntiCnot-is-CCNot-w-CNot_02}
    \eqeqref{CNOTXX} \tikzfigS{./qcancilla-completeness/AntiCnot-is-CCNot-w-CNot/AntiCnot-is-CCNot-w-CNot_03}
    \eqeqref{3tofs2cnots} \tikzfigS{./qcancilla-completeness/AntiCnot-is-CCNot-w-CNot/AntiCnot-is-CCNot-w-CNot_04}\\[0.4cm]
    \eqeqref{ancillaTOFpos} \tikzfigS{./qcancilla-completeness/AntiCnot-is-CCNot-w-CNot/AntiCnot-is-CCNot-w-CNot_05}
    \eqeqref{CNOTXX}\tikzfigS{./qcancilla-completeness/AntiCnot-is-CCNot-w-CNot/AntiCnot-is-CCNot-w-CNot_06}
    \eqdeuxeqref{axQCANCinitdest}{axQCISOinitcnot} \tikzfigS{./qcancilla-completeness/AntiCnot-is-CCNot-w-CNot/AntiCnot-is-CCNot-w-CNot_07}
  \end{gather*}
\end{proof}

\begin{proof}[Proof of \cref{5tofs}]
  \begin{gather*}
    \tikzfigS{./qcancilla-completeness/5tofs-step-0}
    \eqeqref{wbTOF}\tikzfigS{./qcancilla-completeness/5tofs-step-1}
    \eqeqrefwcontrol\tikzfigS{./qcancilla-completeness/5tofs-step-2}
    \eqeqref{K1}\tikzfigS{./qcancilla-completeness/5tofs-step-3}
    \eqeqrefwcontrol\tikzfigS{./qcancilla-completeness/5tofs-step-4}
    \eqeqref{wbTOF}\tikzfigS{./qcancilla-completeness/5tofs-step-5}
  \end{gather*}
\end{proof}

\begin{proof}[Proof of \cref{TOFFredkin}]
  \begin{gather*}
    \tikzfigS{./qcancilla-completeness/TOFFredkin-step-0}
    \eqeqref{fredkin}\tikzfigS{./qcancilla-completeness/TOFFredkin-step-1}
    \eqeqref{3tofs2cnots}\tikzfigS{./qcancilla-completeness/TOFFredkin-step-2}
    =\tikzfigS{./qcancilla-completeness/TOFFredkin-step-3}\\[0.4cm]
    \eqeqref{5tofs}\tikzfigS{./qcancilla-completeness/TOFFredkin-step-4}
    \eqeqref{TOFTOF}\tikzfigS{./qcancilla-completeness/TOFFredkin-step-5}
    \eqeqref{fredkin}\tikzfigS{./qcancilla-completeness/TOFFredkin-step-6}
  \end{gather*}
\end{proof}

\begin{proof}[Proof of \cref{wFredkin}]
  \begin{gather*}
    \tikzfigS{./qcancilla-completeness/AntiCSwap/AntiCSwap_00}
    =\tikzfigS{./qcancilla-completeness/AntiCSwap/AntiCSwap_01}
    \eqeqref{wbTOF}\tikzfigS{./qcancilla-completeness/AntiCSwap/AntiCSwap_02}
    \eqdeuxeqref{CNOTCNOT}{axQCswap}\tikzfigS{./qcancilla-completeness/AntiCSwap/AntiCSwap_03}
    =\tikzfigS{./qcancilla-completeness/AntiCSwap/AntiCSwap_04}
  \end{gather*}
\end{proof}

\begin{proof}[Proof of \cref{wCZ-Z}]
  \begin{gather*}
    \tikzfigS{./qcancilla-completeness/AntiCZ-Z/AntiCZ-Z_00}
    \eqeqref{mctrlPinducdef}\tikzfigS{./qcancilla-completeness/AntiCZ-Z/AntiCZ-Z_01}
    \eqdeuxeqref{Pphasegadget}{XPX}\tikzfigS{./qcancilla-completeness/AntiCZ-Z/AntiCZ-Z_02}\\
    \eqtroiseqref{XcommutCNOT}{XPX}{axQCgphaseempty}\tikzfigS{./qcancilla-completeness/AntiCZ-Z/AntiCZ-Z_03}
    \eqdeuxeqref{PcommutCNOT}{Paddition}\tikzfigS{./qcancilla-completeness/AntiCZ-Z/AntiCZ-Z_04}\eqdeuxeqref{Pphasegadget}{mctrlPinducdef}\tikzfigS{./qcancilla-completeness/AntiCZ-Z/AntiCZ-Z_05}
  \end{gather*}
\end{proof}

\begin{proof}[Proof of \cref{ctrlPphasegadget}]
  \begin{gather*}
    \tikzfigS{./qcancilla-completeness/phase-gadget-3-is-symetric/phase-gadget-3-is-symetric_00}
    \eqeqref{mctrlPinducdef}\tikzfigS{./qcancilla-completeness/phase-gadget-3-is-symetric/phase-gadget-3-is-symetric_01}
    \eqeqref{Pphasegadget}\tikzfigS{./qcancilla-completeness/phase-gadget-3-is-symetric/phase-gadget-3-is-symetric_02}\\[0.4cm]
    \eqeqref{axQC3cnots}\tikzfigS{./qcancilla-completeness/phase-gadget-3-is-symetric/phase-gadget-3-is-symetric_03}
    \eqeqref{axQC3cnots}\tikzfigS{./qcancilla-completeness/phase-gadget-3-is-symetric/phase-gadget-3-is-symetric_04}\\[0.4cm]
    \eqeqref{Pphasegadget}\tikzfigS{./qcancilla-completeness/phase-gadget-3-is-symetric/phase-gadget-3-is-symetric_05}
    \eqeqref{axQC3cnots}\tikzfigS{./qcancilla-completeness/phase-gadget-3-is-symetric/phase-gadget-3-is-symetric_06}
    \eqeqref{mctrlPinducdef}\tikzfigS{./qcancilla-completeness/phase-gadget-3-is-symetric/phase-gadget-3-is-symetric_07}
  \end{gather*}
\end{proof}

\begin{proof}[Proof of \cref{wCCZ-CZ}]
  \begin{gather*}
    \tikzfigS{./qcancilla-completeness/AntiCCZ-CZ/AntiCCZ-CZ_00}
    \eqeqref{mctrlPinducdef}\tikzfigS{./qcancilla-completeness/AntiCCZ-CZ/AntiCCZ-CZ_01}
    \eqtroiseqref{mctrlPaddition}{ctrlPphasegadget}{CNOTXX}\tikzfigS{./qcancilla-completeness/AntiCCZ-CZ/AntiCCZ-CZ_02}\\
    \eqeqref{wCZ-Z}\tikzfigS{./qcancilla-completeness/AntiCCZ-CZ/AntiCCZ-CZ_03}
    \eqtroiseqref{Paddition}{axQCP0}{mctrlPlift}\tikzfigS{./qcancilla-completeness/AntiCCZ-CZ/AntiCCZ-CZ_04}
    \eqeqref{mctrlPinducdef}\tikzfigS{./qcancilla-completeness/AntiCCZ-CZ/AntiCCZ-CZ_05}
  \end{gather*}
\end{proof}

\begin{proof}[Proof of \cref{wCCRX-CRX}]
  \begin{gather*}
    \tikzfigS{./qcancilla-completeness/AntiCCRx-CRx/AntiCCRx-CRx_00}
    \eqeqref{mctrlPdef}\tikzfigS{./qcancilla-completeness/AntiCCRx-CRx/AntiCCRx-CRx_01}\eqdeuxeqref{axQCHH}{XX}\tikzfigS{./qcancilla-completeness/AntiCCRx-CRx/AntiCCRx-CRx_02}\\
    \eqdeuxeqref{wCZ-Z}{wCCZ-CZ}\tikzfigS{./qcancilla-completeness/AntiCCRx-CRx/AntiCCRx-CRx_03}\eqeqref{mctrlPdef}\tikzfigS{./qcancilla-completeness/AntiCCRx-CRx/AntiCCRx-CRx_04}
  \end{gather*}
\end{proof}

\begin{proof}[Proof of \cref{FredkinwbTOF}]
  \begin{gather*}
    \tikzfigS{./qcancilla-completeness/AntiCCNot-through-CSwap/AntiCCNot-through-CSwap_00}
    \eqeqref{XX}\tikzfigS{./qcancilla-completeness/AntiCCNot-through-CSwap/AntiCCNot-through-CSwap_01}
    \eqeqref{wFredkin}\tikzfigS{./qcancilla-completeness/AntiCCNot-through-CSwap/AntiCCNot-through-CSwap_02}
    \eqeqref{TOFFredkin}\tikzfigS{./qcancilla-completeness/AntiCCNot-through-CSwap/AntiCCNot-through-CSwap_03}
    \eqdeuxeqref{XX}{wFredkin}\tikzfigS{./qcancilla-completeness/AntiCCNot-through-CSwap/AntiCCNot-through-CSwap_04}
  \end{gather*}
\end{proof}

The idea of the proof of \cref{M3} is to start from the LHS circuit of \eqref{M3}, use Equations \eqref{PthroughFredkin},\eqref{ctrlPthroughFredkin} and \eqref{ctrlRXthroughFredkin} to build an instance of the LHS circuit of \eqref{M2} on two ancillae, apply \eqref{M2} and then rebuild the RHS circuit of \eqref{M3} using the same equations.

\begin{equation}\label{PthroughFredkin}
  \begin{array}{rcl}
    \tikzfigS{./qcancilla-completeness/PthroughFredkin-step-0}&=&\tikzfigS{./qcancilla-completeness/PthroughFredkin-step-8}
  \end{array}
\end{equation}

\begin{equation}\label{ctrlPthroughFredkin}
  \begin{array}{rcl}
    \tikzfigS{./qcancilla-completeness/ctrlPthroughFredkin-step-0}&=&\tikzfigS{./qcancilla-completeness/ctrlPthroughFredkin-step-18}
  \end{array}
\end{equation}

\begin{equation}\label{ctrlRXthroughFredkin}
  \begin{array}{rcl}
    \tikzfigS{./qcancilla-completeness/ctrlRXthroughFredkin-step-0}&=&\tikzfigS{./qcancilla-completeness/ctrlRXthroughFredkin-step-33}
  \end{array}
\end{equation}

\begin{proof}[Proof of \cref{PthroughFredkin}]
  \begin{gather*}
    \tikzfigS{./qcancilla-completeness/PthroughFredkin-step-0}
    \eqeqref{ctrlPinit}\tikzfigS{./qcancilla-completeness/PthroughFredkin-step-1}
    \eqeqref{TOFFredkin}\tikzfigS{./qcancilla-completeness/PthroughFredkin-step-2}
    \eqeqref{ctrlPinit}\tikzfigS{./qcancilla-completeness/PthroughFredkin-step-3}
    \eqeqref{wCZ-Z}\tikzfigS{./qcancilla-completeness/PthroughFredkin-step-4}\\[0.4cm]
    \eqeqref{ctrlPinit}\tikzfigS{./qcancilla-completeness/PthroughFredkin-step-5}
    \eqeqref{FredkinwbTOF}\tikzfigS{./qcancilla-completeness/PthroughFredkin-step-6}
    \eqeqref{ctrlPinit}\tikzfigS{./qcancilla-completeness/PthroughFredkin-step-7}
    \eqtroiseqref{mctrlPlift}{ancillamctrlPneg}{XX}\tikzfigS{./qcancilla-completeness/PthroughFredkin-step-8}
  \end{gather*}
\end{proof}

\begin{proof}[Proof of \cref{ctrlPthroughFredkin}]
  \begin{gather*}
    \tikzfigS{./qcancilla-completeness/ctrlPthroughFredkin-step-0}
    \eqeqref{Paltdef}\tikzfigS{./qcancilla-completeness/ctrlPthroughFredkin-step-1}
    \eqeqref{TOFFredkin}\tikzfigS{./qcancilla-completeness/ctrlPthroughFredkin-step-2}
    \eqeqref{Paltdef}\tikzfigS{./qcancilla-completeness/ctrlPthroughFredkin-step-3}
    \eqeqref{mctrlPlift}\tikzfigS{./qcancilla-completeness/ctrlPthroughFredkin-step-4}\\[0.4cm]
    \eqeqref{Paltdef}\tikzfigS{./qcancilla-completeness/ctrlPthroughFredkin-step-5}
    \eqeqref{TOFFredkin}\tikzfigS{./qcancilla-completeness/ctrlPthroughFredkin-step-6}
    \eqeqref{Paltdef}\tikzfigS{./qcancilla-completeness/ctrlPthroughFredkin-step-7}
    \eqeqref{mctrlPlift}\tikzfigS{./qcancilla-completeness/ctrlPthroughFredkin-step-8}\\[0.4cm]
    \eqeqref{wCCZ-CZ}\tikzfigS{./qcancilla-completeness/ctrlPthroughFredkin-step-9}
    \eqeqref{mctrlPlift}\tikzfigS{./qcancilla-completeness/ctrlPthroughFredkin-step-10}
    \eqeqref{Paltdef}\tikzfigS{./qcancilla-completeness/ctrlPthroughFredkin-step-11}\\[0.4cm]
    \eqeqref{FredkinwbTOF}\tikzfigS{./qcancilla-completeness/ctrlPthroughFredkin-step-12}
    \eqeqref{Paltdef}\tikzfigS{./qcancilla-completeness/ctrlPthroughFredkin-step-13}
    \eqeqref{mctrlPlift}\tikzfigS{./qcancilla-completeness/ctrlPthroughFredkin-step-14}\\[0.4cm]
    \eqeqref{Paltdef}\tikzfigS{./qcancilla-completeness/ctrlPthroughFredkin-step-15}
    \eqeqref{FredkinwbTOF}\tikzfigS{./qcancilla-completeness/ctrlPthroughFredkin-step-16}
    \eqeqref{Paltdef}\tikzfigS{./qcancilla-completeness/ctrlPthroughFredkin-step-17}
    \eqeqref{ancillamctrlPneg}\tikzfigS{./qcancilla-completeness/ctrlPthroughFredkin-step-18}
  \end{gather*}
\end{proof}

\begin{proof}[Proof of \cref{ctrlRXthroughFredkin}]
  \begin{gather*}
    \tikzfigS{./qcancilla-completeness/ctrlRXthroughFredkin-step-0}
    \eqeqref{mctrlPdef}\tikzfigS{./qcancilla-completeness/ctrlRXthroughFredkin-step-1}
    \eqeqref{HHFredkinFHH}\tikzfigS{./qcancilla-completeness/ctrlRXthroughFredkin-step-2}\\[0.4cm]
    \eqeqref{Paltdef}\tikzfigS{./qcancilla-completeness/ctrlRXthroughFredkin-step-3}
    \eqeqref{TOFFredkin}\tikzfigS{./qcancilla-completeness/ctrlRXthroughFredkin-step-4}\\[0.4cm]
    \eqeqref{Paltdef}\tikzfigS{./qcancilla-completeness/ctrlRXthroughFredkin-step-5}
    \eqeqref{mctrlPlift}\tikzfigS{./qcancilla-completeness/ctrlRXthroughFredkin-step-6}\\[0.4cm]
    \eqeqref{Paltdef}\tikzfigS{./qcancilla-completeness/ctrlRXthroughFredkin-step-7}
    \eqeqref{TOFFredkin}\tikzfigS{./qcancilla-completeness/ctrlRXthroughFredkin-step-8}\\[0.4cm]
    \eqeqref{Paltdef}\tikzfigS{./qcancilla-completeness/ctrlRXthroughFredkin-step-9}
    \eqeqref{HHFredkinFHH}\tikzfigS{./qcancilla-completeness/ctrlRXthroughFredkin-step-10}
    \eqeqref{axQCHH}\tikzfigS{./qcancilla-completeness/ctrlRXthroughFredkin-step-11}\\[0.4cm]
    \eqdeuxeqref{ctrlPinit}{mctrlPlift}\tikzfigS{./qcancilla-completeness/ctrlRXthroughFredkin-step-12}
    \eqeqref{TOFFredkin}\tikzfigS{./qcancilla-completeness/ctrlRXthroughFredkin-step-13}
    \eqeqref{ctrlPinit}\tikzfigS{./qcancilla-completeness/ctrlRXthroughFredkin-step-14}\\[0.4cm]
    \eqeqref{mctrlPdef}\tikzfigS{./qcancilla-completeness/ctrlRXthroughFredkin-step-15}
    \eqeqref{wCCRX-CRX}\tikzfigS{./qcancilla-completeness/ctrlRXthroughFredkin-step-16}
    \eqeqref{mctrlPdef}\tikzfigS{./qcancilla-completeness/ctrlRXthroughFredkin-step-17}\\[0.4cm]
    \eqeqrefwcontrol\tikzfigS{./qcancilla-completeness/ctrlRXthroughFredkin-step-18}
    \eqeqref{ctrlPinit}\tikzfigS{./qcancilla-completeness/ctrlRXthroughFredkin-step-19}\\[0.4cm]
    \eqeqref{FredkinwbTOF}\tikzfigS{./qcancilla-completeness/ctrlRXthroughFredkin-step-20}
    \eqeqref{ctrlPinit}\tikzfigS{./qcancilla-completeness/ctrlRXthroughFredkin-step-21}\\[0.4cm]
    \eqtroiseqref{mctrlPlift}{ancillamctrlPneg}{XX}\tikzfigS{./qcancilla-completeness/ctrlRXthroughFredkin-step-22}
    \eqeqref{HHFredkinFHH}\tikzfigS{./qcancilla-completeness/ctrlRXthroughFredkin-step-23}\\[0.4cm]
    \eqeqref{mctrlPlift}\tikzfigS{./qcancilla-completeness/ctrlRXthroughFredkin-step-24}
    \eqeqref{Paltdef}\tikzfigS{./qcancilla-completeness/ctrlRXthroughFredkin-step-25}\\[0.4cm]
    \eqeqref{FredkinwbTOF}\tikzfigS{./qcancilla-completeness/ctrlRXthroughFredkin-step-26}
    \eqeqref{Paltdef}\tikzfigS{./qcancilla-completeness/ctrlRXthroughFredkin-step-27}\\[0.4cm]
    \eqeqref{mctrlPlift}\tikzfigS{./qcancilla-completeness/ctrlRXthroughFredkin-step-28}
    \eqeqref{Paltdef}\tikzfigS{./qcancilla-completeness/ctrlRXthroughFredkin-step-29}\\[0.4cm]
    \eqeqref{FredkinwbTOF}\tikzfigS{./qcancilla-completeness/ctrlRXthroughFredkin-step-30}
    \eqeqref{Paltdef}\tikzfigS{./qcancilla-completeness/ctrlRXthroughFredkin-step-31}\\[0.4cm]
    \eqeqref{ancillamctrlPneg}\tikzfigS{./qcancilla-completeness/ctrlRXthroughFredkin-step-32}
    \eqeqref{HHFredkinFHH}\tikzfigS{./qcancilla-completeness/ctrlRXthroughFredkin-step-33}
  \end{gather*}
\end{proof}

\begin{proof}[Proof of \cref{M3}]
  \begin{gather*}
    \tikzfigS{./qcancilla-completeness/showing-M3/showing-M3_00}
    \eqtroiseqref{axQCANCinitdest}{CNOTCNOT}{TOFTOF}\tikzfigS{./qcancilla-completeness/showing-M3/showing-M3_01}\\[0.4cm]
    \eqeqref{ctrlRXthroughFredkin}\tikzfigS{./qcancilla-completeness/showing-M3/showing-M3_02}
    \eqeqref{ctrlRXthroughFredkin}\tikzfigS{./qcancilla-completeness/showing-M3/showing-M3_03}\\[0.4cm]
    \eqeqref{ctrlPthroughFredkin}\tikzfigS{./qcancilla-completeness/showing-M3/showing-M3_04}
    \eqeqref{ctrlRXthroughFredkin}\tikzfigS{./qcancilla-completeness/showing-M3/showing-M3_05}\\[0.4cm]
    \eqeqref{M2}\tikzfigS{./qcancilla-completeness/showing-M3/showing-M3_06-simp}\\[0.4cm]
    \eqeqref{ctrlPthroughFredkin}\tikzfigS{./qcancilla-completeness/showing-M3/showing-M3_07-simp}\\[0.4cm]
    \eqeqref{PthroughFredkin}\tikzfigS{./qcancilla-completeness/showing-M3/showing-M3_08-simp}\\[0.4cm]
    \eqeqref{ctrlRXthroughFredkin}\tikzfigS{./qcancilla-completeness/showing-M3/showing-M3_09-simp}\\[0.4cm]
    \eqeqref{ctrlRXthroughFredkin}\tikzfigS{./qcancilla-completeness/showing-M3/showing-M3_10-simp}\\[0.4cm]
    \eqeqref{ctrlPthroughFredkin}\tikzfigS{./qcancilla-completeness/showing-M3/showing-M3_11-simp}\\[0.4cm]
    \eqeqref{ctrlRXthroughFredkin}\tikzfigS{./qcancilla-completeness/showing-M3/showing-M3_12-simp}\\[0.4cm]
    \eqeqref{ctrlPthroughFredkin}\tikzfigS{./qcancilla-completeness/showing-M3/showing-M3_13-simp}\\[0.4cm]
    \eqeqref{PthroughFredkin}\tikzfigS{./qcancilla-completeness/showing-M3/showing-M3_14-simp}\\[0.4cm]
    %&\eqeqref{PthroughFredkin}\tikzfigS{./qcancilla-completeness/showing-M3/showing-M3_15}\\[0.4cm]
    \eqtroiseqref{axQCANCinitdest}{CNOTCNOT}{TOFTOF}\tikzfigS{./qcancilla-completeness/showing-M3/showing-M3_16-simp}
    \end{gather*}
\end{proof}

\subsection{Proof of Equation \eqref{Mstar} in $\QCancilla$}
\label{appendix:inductionMstar}
\begin{gather*}
  \tikzfigS{./induction-mstar/step-0}
  \eqdeuxeqref{axQCANCinitdest}{RXaltdef}\tikzfigS{./induction-mstar/step-1}\\[0.4cm]
  \eqeqref{RXaltdef}\tikzfigS{./induction-mstar/step-2}
  \eqeqref{Paltdef}\tikzfigS{./induction-mstar/step-3}\\[0.4cm]
  \eqeqref{RXaltdef}\tikzfigS{./induction-mstar/step-4}
  \eqeqref{TOFTOF}\tikzfigS{./induction-mstar/step-5}\\[0.4cm]
  \overset{\text{IH}}{=}\tikzfigS{./induction-mstar/step-6}\\[0.4cm]
  \eqeqref{TOFTOF}\tikzfigS{./induction-mstar/step-7}\\[0.4cm]
  \eqeqref{Paltdef}\tikzfigS{./induction-mstar/step-8}\\[0.4cm]
  \eqeqref{Paltdef}\tikzfigS{./induction-mstar/step-9}\\[0.4cm]
  \eqeqref{RXaltdef}\tikzfigS{./induction-mstar/step-10}\\[0.4cm]
  \eqeqref{RXaltdef}\tikzfigS{./induction-mstar/step-11}\\[0.4cm]
  \eqeqref{Paltdef}\tikzfigS{./induction-mstar/step-12}\\[0.4cm]
  \eqeqref{RXaltdef}\tikzfigS{./induction-mstar/step-13}\\[0.4cm]
  \eqeqref{Paltdef}\tikzfigS{./induction-mstar/step-14}\\[0.4cm]
  \eqdeuxeqref{axQCANCinitdest}{Paltdef}\tikzfigS{./induction-mstar/step-15}
\end{gather*}

\end{document}